\newtheorem{lemma}{Lemma}
\newtheorem{thm}{Theorem}
\newtheorem{cor}{Corollary}
\theoremstyle{remark}
\newtheorem{re}{Remark}
\newcommand{\p}{\mathbb{P}}
\newcommand{\1}{\mathds{1}}
\newcommand{\dd}{\mathrm{d}}
\newcommand{\R}{\mathbb{R}}
\newcommand{\E}{\mathbb{E}}
\def\argmax{\mathop{\mathrm{argmax}}}
\def\argmin{\mathop{\mathrm{argmin}}}
\newcommand\blfootnote[1]{%
  \begingroup
  \renewcommand\thefootnote{}\footnote{#1}%
  \addtocounter{footnote}{-1}%
  \endgroup
}
	\renewcommand*{\thesection}{\arabic{section}}
\renewcommand*{\thesubsection}{\arabic{section}.\arabic{subsection}}
	\renewcommand*{\theequation}{I.\arabic{equation}}
	\renewcommand*{\thetable}{I.\arabic{table}}
 \renewcommand*{\thefigure}{I.\arabic{figure}}
\begin{document}

\begin{frontmatter}

% "Title of the Paper"
\title{A two-sample comparison of mean \\ \mbox{survival times of uncured sub-populations} -- Part I: Nonparametric analyses}
\runtitle{Mean survival of uncured patients in two samples -- Part I}

% indicate corresponding author with \corref{}
% \author{\fnms{John} \snm{Smith}\thanksref{t1}\corref{}\ead[label=e1]{smith@foo.com}\ead[label=e2,url]{www.foo.com}}
% \thankstext{t1}{Thanks to somebody} 
% \address{line 1\\ line 2\\ \printead{e1}\\ \printead{e2}}

\author{\fnms{Dennis} \snm{Dobler${}^*$}\ead[label=e2]{d.dobler@vu.nl}}
\address{Vrije Universiteit Amsterdam \\ The Netherlands \\  \printead{e2}}
\and
\author{\fnms{Eni} \snm{Musta${}^*$}\ead[label=e1]{e.musta@uva.nl}}
\address{University of Amsterdam \\ The Netherlands \\ \printead{e1}}

\runauthor{D.\ Dobler and E.\ Musta}

\blfootnote{${}^*$: The authors contributed equally and are given in alphabetical order.}

\begin{abstract}
Comparing the survival times among two groups is a common problem in time-to-event analysis, for example if one would like to understand whether one medical treatment is superior to another. In the standard survival analysis setting, there has been a lot of discussion on how to quantify such difference and what can be an intuitive, easily interpretable, summary measure. In the presence of subjects that are immune to the event of interest (`cured'), we illustrate that it is not appropriate to just compare the overall survival functions. Instead, it is more informative to compare the cure fractions and the survival of the uncured sub-populations {separately from each other.} %\\
Our research is {mainly} driven by the question: if the cure fraction is similar for two available treatments, how else can we determine which is preferable? To this end, we estimate the
mean survival times in the uncured fractions of both treatment groups and develop permutation tests for inference. In this first out of two connected papers, we focus on nonparametric approaches. The methods are illustrated with medical data of leukemia patients.
\end{abstract}

\begin{keyword}
\kwd{asymptotic statistics}
\kwd{cure models}
\kwd{inference}
\kwd{random permutation}
\kwd{right-censoring}
\end{keyword}

% history:
% \received{\smonth{1} \syear{0000}}

%\tableofcontents

\end{frontmatter}

% Main text entry area
	
	\section{Introduction}
	In many applications, it is of interest to compare survival probabilities among two different samples, e.g., two treatment arms. One common approach is to test for the equality of the survival functions although it does not provide information on the size of the difference. Alternatively, as a graphical tool, one could plot the difference between the two estimated survival curves together with confidence bands. However, in practice it is preferred to have a summary measure of such difference. This facilitates the understanding and interpretation of study results even though it provides limited information since no single metric can capture the entire profile of the difference between two survival curves.  The hazard ratio (HR) is commonly used to quantify this   difference under the assumption that the ratio of the two hazard functions remains constant over time; 
 for example, the proportionality of hazard rates is central to the famous semiparametric model by Cox \cite{cox1972}. However, such assumption is often not satisfied in practice and the use of the HR would be problematic. 

 One alternative approach is given by the restricted mean survival time (RMST).
 The difference between restricted mean survival times	for different groups has been advocated as a useful summary measure that offers clinically meaningful interpretation \cite{uno2014moving,royston2011use,royston2013restricted,ambrogi2022analyzing,zhao2016restricted}. The RMST is defined as the the expected lifetime truncated at a clinically relevant time point $\tau$.
 The restriction to $\tau$ is used to accommodate the limited study duration, as a result  of which the upper tail of the survival function cannot be estimated, unless one is willing to assume a specific parametric model for extrapolation beyond the range of the observed data. 
 
 Recently, \cite{horiguchi2020} and \cite{wolski2020} investigated a random permutation method for inference on the difference in restricted mean (net) survival times.
 While their test is finitely exact under exchangeable data, \cite{horiguchi2020} stated for the case of non-exchangeable data that ``Further research to develop methods for constructing confidence intervals for RMST
difference with a small sample data is warranted. It is quite challenging to construct an exact confidence interval for the
difference in RMST.''
\cite{ditzhaus2023}
continued in the direction of this remark and analyzed a studentized permutation version of the just-mentioned approach. 
Their resulting hypothesis test is exact under exchangeability and it even controls the type-I error probability asymptotically under non-exchangeability.
Because of this additional feature of exactness under exchangeability, permutation tests also enjoy great popularity in survival analytic applications beyond the RMST:
for instance, \cite{brendel2014weighted} and \cite{ditzhaus2020more} researched permutation-based weighted log-rank tests.

In this paper, we will consider the not unusual case that some of the subjects are immune to the event of interest (`cured') instead of the classical survival problem.
The challenge arises because, as a result of censoring, the cured subjects	(for which the event never takes place) cannot be distinguished from the susceptible ones. Cure rate models, which account for the presence of a cured sub-population have become increasingly popular particularly for analyzing cancer survival and evaluating the curative effect of treatments. More in general, they have found applications in many other domains including fertility studies, credit scoring, demographic and social studies analyzing among other things time until marriage, time until rearrest of released prisoners, time until one starts smoking. For a complete review on cure models methodology and applications, we refer the reader to \cite{AK2018,peng2021cure,legrand2019cure}. In presence of immune subjects, comparing survival between two samples becomes more complicated than in the standard setting since one can compare overall survivals, cure chances and survival probabilities for the uncured subjects. Several papers have focused on testing for differences in the cure rates in a nonparametric setting. \cite{klein2007analyzing,halpern1987cure,sposto1992comparison,laska1992nonparametric}
On the other hand, different methods have been proposed to test for equality of the survival functions among the uncured sub-populations. \cite{li2005nonparametric,tamura2000comparing,zhao2010empirical,broet2003score,broet2001semiparametric}

However, as in the standard survival analysis setting, just testing for equality of the two distributions would not be sufficient for many practical applications. Thus, apart from comparing the cure probabilities, it would be meaningful to compare relevant statistical summaries for the sub-population of uncured subjects.
To this end, we propose to analyze mean survival times of the uncured.
This has the advantage of being an easy-to-interpret extension of the RMST in the present context, whereas we will not impose a time restriction.
Hence, we will use the abbreviation MST for our purposes.
To the best of our knowledge, this has rarely been investigated in the literature so far. 
One notable exception is the recent paper \cite{chen2023mean} where a semiparametric proportional model for the mean (residual) life of the uncured was proposed and analyzed in a one sample context.
Their approach will be further discussed in Part~II of the present two companion papers.

To illustrate the above-mentioned concepts, 
Figure~\ref{fig:survival_curves} visualizes different  survival models with cure fractions.
These may be summarized with the following three important numbers:
the cure fraction $p$, i.e., the height of the plateau for late time points; the time point $\tau_0$ when the plateau is reached; the mean survival time of the uncured patients.
It is evident that that mean is potentially completely unrelated to the cure fraction.
On the other hand, the mean is of course affected by $\tau_0$.
In each panel of Figure~\ref{fig:survival_curves} except for the top-right one, the two survival curves reach their plateaus at different time points.
The top-left panel exhibits the same cure fraction in both populations but clearly different mean survival times (of the uncured).
In all other panels, the cure fractions differ.
In the top-right and the bottom-left panels,
the mean survival times of the uncured populations are equal.
This can be seen by affine-linearly transforming the vertical axis such that the thus transformed survival curves cover the whole range from 1 to 0;
then, the areas under the transformed curves are the same in each of both just-mentioned panels.
Lastly, the bottom-right panel shows two survival curves for which all parameters are different.

Measures other than the first moment could obviously also be used to summarize the survival curves of the uncured patients, e.g., the median or other quantiles of these proper survival functions, or other moments. 
In our opinion, however, the mean offers the easiest interpretation: 
how much (in absolute numbers) of the wholly available area of $\tau_0$ is below the survival curve?  
The more, the better.
This offers another means of comparing the usefulness of two (or more) treatments: which treatment prolongs the mean survival times most effectively, next to comparing the cure fractions?

\begin{figure}[ht]
    \centering
    \includegraphics[width=1\textwidth]{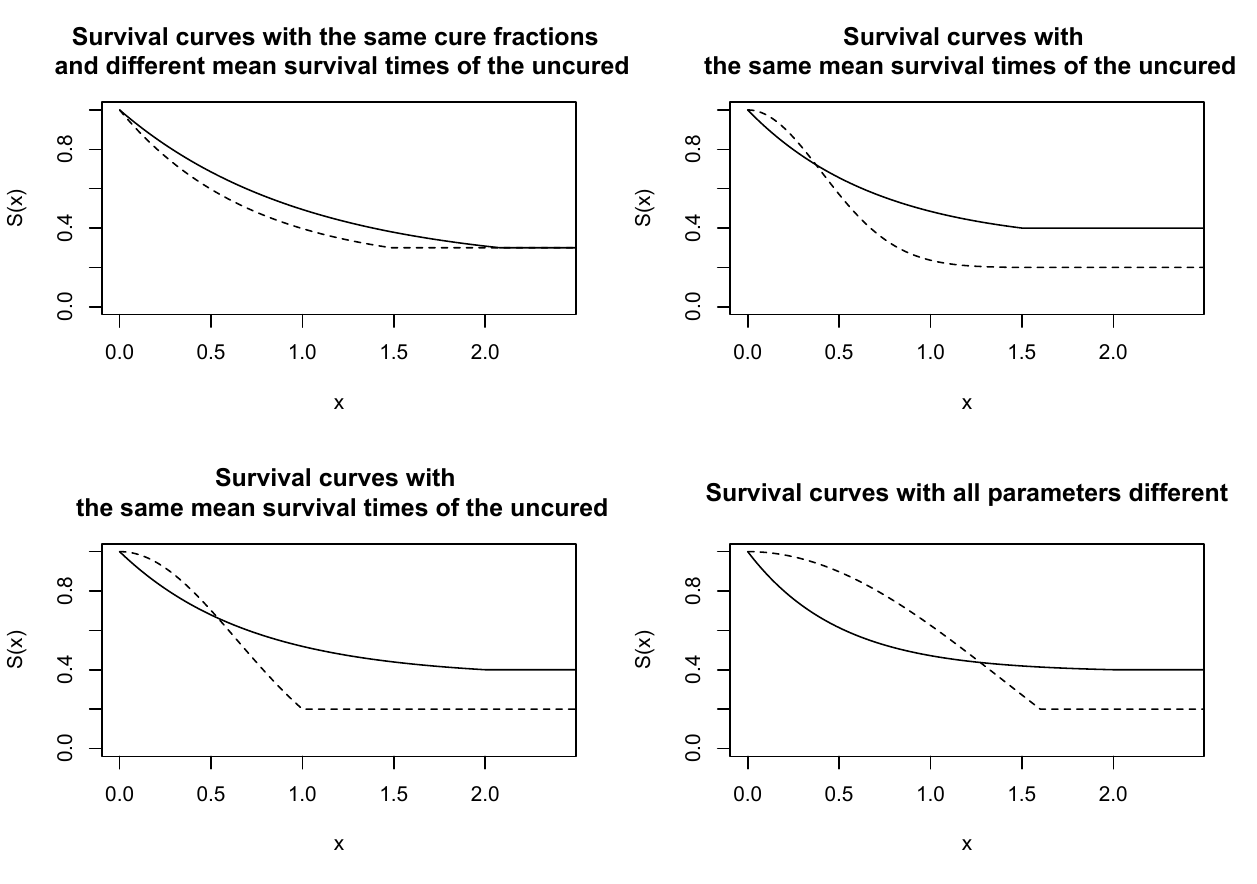}
    \caption{Different constellations of survival curves in two sample problems.}
    \label{fig:survival_curves}
\end{figure}

Let us now discuss the agenda for the present paper and the benefits of the proposed procedure:
\begin{itemize}
    \item We will propose a method for comparing the lifetimes of the uncured subject via mean survival times.
    \item This will allow for comparisons in two sample problems.
    \item Restrictions of time will not be necessary.
    \item Only weak assumption, e.g., for the sake of identifiability will be made.
    \item In particular, hazard rates are generally allowed to be discontinuous.
    \item Inference will be based on the random permutation method which gives rise to finitely exact hypothesis tests in the case of exchangeable samples and, otherwise, good small sample properties.
\end{itemize}
In this Part~I of our two companion papers, we will consider nonparametric models with a constant cure rate. 
In Part~II, we will extend the method to a semiparametric mixture cure model that allows for expressions of mean survival times conditionally on covariates. For the latter, we will assume a logistic-Cox model since it is the most widely used in practice.

This article is organized as follows.
The model and notation are introduced in Section~\ref{sec:model_I}.
In subsections therein, we also offer a toy example for a discussion about comparisons of survival in the two-sample problem and the formal definition of mean survival times of the uncured.
In Section~\ref{sec:stat_I}, we propose a Kaplan-Meier estimator-based estimation procedure, 
introduce the random permutation scheme, 
and present large sample properties which are crucial for inference.
Section~\ref{sec:simulations} contains a description of an extensive simulation study as well as the numerical results.
Data from a study on leukemia are illustrated and analyzed in the light of the proposed method in Section~\ref{sec:app_I}.
We conclude this Part~I with a discussion in Section~\ref{sec:disc_I} where we also offer an interim conclusion in preparation of Part~II.
All proofs are contained in Appendix~\ref{app}. The R code with an implementation of our methods is available in the GitHub repository \url{https://github.com/eni-musta/MST_uncured}.

	\section{Model and notation}
    \label{sec:model_I}
 
	We consider i.i.d.\ survival times $T_{11}, \dots, T_{1n_1}$ and  $T_{21}, \dots, T_{2n_2}$ from two independent groups $(i=1,2)$ that consist of a mixture of cured and uncured individuals, meaning that a fraction of the study population in each group (the cured ones) would not experience the event of interest.  
	We denote the event time of the cured individuals by $\infty$ and assume that, for the uncured ones, the event can happen on the interval $[0,\tau_{0,i}]$, $i=1,2$, respectively for each group. 
	We do not assume the cure threshold $\tau_{0,i}<\infty$ to be known in advance but depending on the application at hand one might have some information about it, for example in oncology based on the medical knowledge $\tau_{0,i}$ is expected to be somewhere between 5 or 10 years depending on the cancer type. 
	
	For the remainder of this paper, we denote by $(\Omega,\mathcal{A},\mathbb{P})$ the underlying probability space, $\mathbb{E}$ denotes expectation, $\stackrel d\to$ denotes convergence in distribution, and $\stackrel d=$ denotes equality in distribution.
	
	Let us denote by $F_i$ and $S_i=1-F_i$, $i=1,2$, the improper cumulative distribution function and the improper survival function of the time-to-event variables in the two treatment groups. 
 Let $F_{u,i}$ and $S_{u,i}=1-F_{u,i}$ be the proper cumulative distribution function and survival function for the uncured individuals in the two treatment groups. Let $$p_i=\p(T_{i1}>\tau_{0,i})=S_i(\tau_{0,i})\in(0,1)$$ denote the cure fractions in both groups. In particular, we have 
	\begin{equation}
		\label{eqn:distribution_survival}
		F_i(t)=(1-p_i)F_{u,i}(t),\qquad S_i(t)=p_i+(1-p_i)S_{u,i}(t)
	\end{equation}
	In the presence of right censoring, instead of the survival times, we observe the follow-up times $Y_{i1},\dots,Y_{in_i}$ and the censoring indicators $\Delta_{i1},\dots,\Delta_{in_i}$, where $Y_{ij}=\min\{T_{ij},C_{ij}\}$, $\Delta_{ij}=\1_{\{T_{ij}\leq C_{ij}\}}$ and $C_{ij}$ are the censoring times. We assume that censoring is independent of the survival times and has bounded support $[0,\tau_i]$ in each group. 
	In particular, because of the finite censoring times, all the cured individuals will be observed as censored. 
    In order to be able to identify the cure fraction, we need $\tau_{0,i}\leq\tau_i$  and $F_i$ continuous at $\tau_{i}$ in case $\tau_{0,i}=\tau_{i}$, which is known as the sufficient follow-up assumption \cite{maller1996survival}. 
    The idea is that, since the cure status is not observed and $F_{u,i}(\cdot)$ is left unspecified, if $\tau_i<\tau_{0,i}$ then the events $\{T_{i1}\in(\tau_i,\tau_{0,i}]\}$ and $\{T_{i1}=\infty\}$ would be indistinguishable.
    As a result, the cure rate could not be identified. A statistical test for this assumption is proposed in \cite{maller1996survival} but its practical behavior is not very satisfactory given the unstable behaviour of the Kaplan Meier estimator in the tail region. In practice, a long plateau of the Kaplan-Meier estimator, containing many censored observations, is considered to be an indication of sufficient follow-up.
		
\subsection*{Comparison of overall survival}
	We first illustrate why comparing overall survival functions is not appropriate in the presence of a cure fraction. The difference in overall survival combines together the difference in cure fractions and in the survival times of the uncured in a way that it is difficult to interpret. For example, if group one has a higher cure fraction but lower survival times for the uncured, the overall survival functions might cross and the difference between them would be a weighted combination of the two effects $$S_1(t)-S_2(t)=(p_1-p_2)\{1-S_{u,2}(t)\}+(1-p_1)\{S_{u,1}(t)-S_{u,2}(t)\}.$$ On the other hand, if the two groups have the same cure fraction $p$, then
	\[
	S_1(t)-S_2(t)=(1-p)\{S_{u,1}(t)-S_{u,2}(t)\}
	\]
	This means that, particularly for a large cure fraction, the observed difference in overall survival functions is much smaller than the actual difference of the survival functions for the uncured. 
	
	Using a one number summary of the difference in overall survival is even more problematic in the presence of a cure fraction. First, the proportional hazard assumption is clearly violated on the level of the whole population and, as a result, the hazard ratio cannot be used. For the Mann-Whitney effect, what counts are the chances of having  longer  survival times for one group compared to the other, but the actual difference between these times does not matter. So one cannot distinguish between having a larger cure fraction or just slightly longer survival. 
 
 Consider for example the following hypothetical scenario:  patients receiving treatment A have a 20\% chance of being cured while with treatment B there is no cure chance; a random person receiving treatment B lives several months longer compared to an uncured patient who received treatment A.
    Let $T_1$ and $T_2$ represent the random lifetimes of patients receiving treatments A and B, respectively.
    According to the above description, the Mann-Whitney effect would then be 
	\[
	\p(T_1>T_2)+\frac12\p(T_1=T_2)=0.2<0.5,
	\] 
	leading to the conclusion that treatment B should be preferred. This is counter-intuitive because, given the small difference in survival times of the uncured, in practice one would probably prefer the treatment that offers some chance of getting cured. On the other hand, if one would use the restricted mean survival time  as a summary measure, the actual survival times matter. However, because of the restriction  to a specified point $\tau$ (duration of the study), there would still be no distinction between the cured individuals and those who survive more than $\tau$
	\[
	RMST^{(i)}_\tau=\E[\min(T_i,\tau)]=(1-p_i)\E[\min(T_i,\tau)| T_i < \tau]+\tau p_i, \quad i=1,2.
	\] To illustrate this, consider the following example: patients receiving treatment A have 20\% chance of being cured, while with treatment B there is no cure chance; a random person receiving treatment B lives on average 60 months, while an uncured patient who received treatment A lives on average 24 months. Let us assume that $\tau_{0,1}=\tau_{0,2}=120$ months. If the duration of the study was also 120 months (sufficient follow-up), we would obtain
	\[
		RMST^{(1)}_{120}-RMST^{(2)}_{120}=0.8\cdot \E[T_1|T_1 < 120]+120\cdot0.2-\E[T_2]=-16.8,
	\]
	 leading to the conclusion that treatment B should be preferred. However, if the study had continued for longer, e.g., 240 months, we would obtain
	 	\[
	 RMST^{(1)}_{240}-RMST^{(2)}_{240}=0.8\cdot\E[T_1|T_1<240]+240\cdot0.2-\E[T_2]=7.2,
	 \]
	suggesting that treatment A is better, which contradicts the previous conclusion.
	
	For these reasons, we think that in the presence of a cure fraction, it is more informative to compare separately the cure fractions and the survival functions of the uncured. In practice, one can then make a personalized decision by choosing to put more weight to one component compared to the other, based on the life expectancy if uncured and the risks one is willing to take. For example, for children there is an essential difference between cure and 10 year survival, while such difference might be less significant for elderly patients. 
	
	\subsection*{Mean survival time of the uncured}
The problem of comparing cure fractions has already been considered in the literature. Here, we focus on comparing the survival times of the uncured.
In particular, we propose the mean survival time as a summary measure.

	We are interested in the difference of mean survival times of the uncured individuals among the two groups 
	\[
	MST_{u,1}-MST_{u,2}=\E[T_{11} \mid T_{11} < \infty]-\E[T_{21} \mid T_{21} < \infty].
	\] 
	In combination with the cure fractions, such mean survival times provide useful summaries of the improper survival curves.
	Using the relations in \eqref{eqn:distribution_survival}, we obtain the following expression for the mean survival times
	\[
	MST_{u,i}= \int_0^{\tau_{0,i}}S_{u,i}(u)\,\dd u=\int_0^{\tau_{0,i}}\frac{{S}_{i}(u)-{p}_i}{1-{p}_i}\,\dd u, \qquad i=1,2.
	\]

	\section{Estimation, asymptotics, and random permutation}
 \label{sec:stat_I}

\subsection*{Estimators and their large sample properties}
 
	Estimation of the cure rate and of the nonparametric survival function in this setting has been considered in \cite{maller1992,maller1996survival} and is based on the Kaplan-Meier (KM) estimator. In particular, we can estimate $S_i$ and $p_i$ by 
	\[
	\hat{S}_i(t)=\prod_{u\in(0,t]}\left(1-\frac{\dd N_i(u)}{R_i(u)}\right)\quad \text{and}\quad \hat{p}_i=\hat{S}_i(Y_{i,(m_i)}),
	\]
	respectively, where $ Y_{i,(m_i)}$ is the largest observed event time in group $i$, $N_i(u)=\sum_{j=1}^{n_i}\1_{\{Y_{ij}\leq u,\Delta_{ij}=1\}}$ counts the number of observed events up to time $u$ and $R_i(u)=\sum_{j=1}^{n_i}\1_{\{Y_{ij}\geq u\}}$ counts the numbers of individuals at risk at time $u$. 
	By a plug-in method, we estimate the mean survival time of the uncured by 
	\[
	\widehat{MST}_{u,i}=\int_0^{Y_{i,(m_i)}}\frac{\hat{S}_{i}(u)-\hat{p}_i}{1-\hat{p}_i}\,\dd u
	\]

	Let $\hat{m}=	\widehat{MST}_{u,1}-	\widehat{MST}_{u,2}$ be an estimator of $m={MST}_{u,1}-{MST}_{u,2}$. 
	Using the asymptotic properties of the KM estimator,
	we obtain the following result for the mean survival times, for which we define 
	\begin{equation}
		\label{eqn:v_i}
	v_i(t)=\int_0^t\frac{\dd F_i(u)}{S_i(u)\{1-H_i(u-)\}},
	\end{equation}
$H_i(t)=\p(Y_{i1}\leq t)=\{1-F_i(t)\}\{1-G_i(t)\}$, and $G_i(t)=\p(C_{i1}\leq t)$ denotes the distribution of the censoring times.
	\begin{thm}
		\label{theo:MST_homog}
	 For $i=1,2$, assume $p_i\in(0,1)$  
	and that one of the following conditions holds:
	\begin{itemize}
		\item[a)] $\tau_{0,i}<\tau_i$
  \item[b)] $\tau_{0,i}=\tau_i$, $F_i$ is continuous and 
		\begin{equation}
	\label{eqn:cond_integral_KM}
		\int_0^{\tau_i}\frac{\dd F_i(t)}{1-G_i(t-)}<\infty;
		\end{equation}
    		\item[c)] $\tau_{0,i}=\tau_i$, $F_i$ is continuous at ${\tau_{i}}$, $\lim_{t\uparrow{\tau_{i}}}\{F_i({\tau_{i}})-F_i(t)\}^2v_i(t)=0$, and
		\[
		\lim_{t\uparrow{\tau_{i}}}\int_t^{{\tau_{i}}}\frac{\1_{\{0\leq G_i(u-)<1\}}S_i(u)}{\{1-G_i(u-)\}S_i(u-)}\,\dd F_i(u)=0.
		\]
	\end{itemize}
	 Then the variable $\sqrt{n_i}(\widehat{MST}_{u,i}-{MST}_{u,i})$ is asymptotically normally distributed, i.e. $$\sqrt{n_i}(\widehat{MST}_{u,i}-{MST}_{u,i})\xrightarrow{d}N\sim\mathcal{N}(0,\sigma_i^2)$$ as $n\to\infty$. The limit variance $\sigma_i^2$ is defined in \eqref{def:sigma-i} in the appendix. 
	\end{thm}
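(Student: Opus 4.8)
The plan is to linearize the plug-in functional so that the whole deviation becomes a smooth functional of the single Kaplan--Meier process, and then to reduce that functional to one stochastic integral against the counting-process martingale, to which a martingale central limit theorem applies. Writing $g(s,p)=(s-p)/(1-p)$, with $\partial_s g=(1-p)^{-1}$ and $\partial_p g=(s-1)/(1-p)^2$, a first-order expansion of the integrand $g(\hat S_i(u),\hat p_i)$ about $(S_i(u),p_i)$ gives
\[
\widehat{MST}_{u,i}-MST_{u,i}=\frac{1}{1-p_i}\int_0^{\tau_{0,i}}\{\hat S_i(u)-S_i(u)\}\,\dd u+\frac{\hat p_i-p_i}{(1-p_i)^2}\int_0^{\tau_{0,i}}\{S_i(u)-1\}\,\dd u+R_{n,i}+B_{n,i},
\]
where $R_{n,i}$ collects the quadratic remainder of the expansion and $B_{n,i}=-\int_{Y_{i,(m_i)}}^{\tau_{0,i}}g(S_i(u),p_i)\,\dd u$ is the error from replacing the random upper limit $Y_{i,(m_i)}$ by $\tau_{0,i}$. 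Because $\hat p_i=\hat S_i(Y_{i,(m_i)})$ is itself a functional of $\hat S_i$, both leading terms are driven by the same Gaussian limit, so their joint behaviour is automatic once $\hat S_i$ is controlled.

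Next I would insert the martingale representation $\sqrt{n_i}\{\hat S_i(t)-S_i(t)\}=-S_i(t)\sqrt{n_i}\int_0^t R_i(u)^{-1}\,\dd M_i(u)+o_P(1)$, with $R_i(u)/n_i\to y_i(u)=1-H_i(u-)$ and $M_i$ the aggregate counting-process martingale, and use $S_i(\tau_{0,i})=p_i$ in the cure-fraction term. Interchanging the time integral and the stochastic integral via Fubini collapses the two leading terms into a single integral,
\[
\sqrt{n_i}(\widehat{MST}_{u,i}-MST_{u,i})=\sqrt{n_i}\int_0^{\tau_{0,i}}w_i(s)\,\frac{\dd M_i(s)}{R_i(s)}+o_P(1),\quad w_i(s)=-\frac{1}{1-p_i}\int_s^{\tau_{0,i}}S_i(u)\,\dd u+\frac{p_i}{(1-p_i)^2}\int_0^{\tau_{0,i}}\{1-S_i(u)\}\,\dd u,
\]
with a purely deterministic weight $w_i$. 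Rebolledo's martingale central limit theorem then delivers asymptotic normality, and rewriting the predictable variation $\dd\langle M_i\rangle/R_i^2$ through the measure $\dd v_i$ of \eqref{eqn:v_i} (recall $v_i(t)=\int_0^t \dd\Lambda_i(u)/y_i(u)$ for the cumulative hazard $\Lambda_i$) identifies the limiting variance as $\sigma_i^2=\int_0^{\tau_{0,i}}w_i(s)^2\,\dd v_i(s)$, which is the quantity \eqref{def:sigma-i} in the appendix.

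The genuine difficulty is the boundary behaviour at $\tau_{0,i}$, and conditions (a)--(c) are exactly what make the three loose ends tractable. One must show: (i) $\hat p_i$ is $\sqrt{n_i}$-consistent and asymptotically equivalent to $\hat S_i(\tau_{0,i})-p_i$, which requires $\sqrt{n_i}\{S_i(Y_{i,(m_i)})-p_i\}=o_P(1)$ together with asymptotic equicontinuity of the Kaplan--Meier process at the endpoint; (ii) $\sqrt{n_i}\,B_{n,i}=o_P(1)$, which holds because $g(S_i(\cdot),p_i)$ vanishes at $\tau_{0,i}$ and $\tau_{0,i}-Y_{i,(m_i)}$ is small; and (iii) $w_i\in L^2(\dd v_i)$, i.e.\ $\sigma_i^2<\infty$, so that the martingale CLT is applicable. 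In case~(a) the inequality $\tau_{0,i}<\tau_i$ confines the analysis to a compact subinterval on which $1-G_i$ stays bounded away from $0$, the KM process converges weakly up to $\tau_{0,i}$, and all three points are routine. The real work is in cases~(b) and~(c), where $\tau_{0,i}=\tau_i$ is the right endpoint of the support and $v_i$ may diverge: there the integrability bound \eqref{eqn:cond_integral_KM} in~(b), and the two limit conditions in~(c) controlling $\{F_i(\tau_i)-F_i(t)\}^2 v_i(t)$ and the censored tail integral, are precisely the hypotheses that secure weak convergence of the integrated KM process up to the endpoint and the finiteness of $\sigma_i^2$. I would close these steps by invoking the endpoint weak-convergence theory for the Kaplan--Meier estimator together with the cure-rate asymptotics of Maller and Zhou \cite{maller1996survival}.
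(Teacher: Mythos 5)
Your argument is correct in substance and recovers exactly the paper's limit variance (expanding your $\int_0^{\tau_{0,i}} w_i(s)^2\,\dd v_i(s)$ with $w_i(s)=-\frac{1}{1-p_i}\int_s^{\tau_{0,i}}S_i(u)\,\dd u-\frac{p_i}{1-p_i}(MST_{u,i}-\tau_{0,i})$ and applying Fubini reproduces the three terms of \eqref{def:sigma-i}), but it takes a genuinely different route. The paper first observes that the Kaplan--Meier estimator is constant beyond $Y_{i,(m_i)}$, so that $\hat p_i=\hat S_i(\tau_{0,i})$ and $\widehat{MST}_{u,i}=\psi(\hat S_i)$ with $\psi(\theta)=\int_0^{\tau_{0,i}}\{\theta(u)-\theta(\tau_{0,i})\}/\{1-\theta(\tau_{0,i})\}\,\dd u$ \emph{exactly}; it then applies the functional delta method (Hadamard differentiability of $\psi$ plus weak convergence of $\sqrt{n_i}(\hat S_i-S_i)$ to $S_i\cdot B\circ v_i$ from Gill/Ying) and reads off the variance from the Gaussian limit. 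You instead Taylor-expand the integrand by hand, pass to the counting-process martingale representation, collapse everything to one stochastic integral via Fubini, and invoke Rebolledo's CLT. Both are standard and valid; your route is more self-contained in case~(a) and makes the variance computation mechanical, while the paper's route buys two things you partially re-derive by hand: first, your loose ends (i) and (ii) are actually identities rather than estimates --- $\hat S_i(Y_{i,(m_i)})=\hat S_i(\tau_{0,i})$ and $g(\hat S_i(u),\hat p_i)\equiv 0$ on $[Y_{i,(m_i)},\tau_{0,i}]$ --- so no equicontinuity or tail-length argument is needed there (your $B_{n,i}$, written with the \emph{true} $g(S_i,p_i)$, does still need the small $O_P(1/n)$ bound you sketch); second, your identification $\dd v_i=\dd\Lambda_i/y_i$ and the clean form $\int w_i^2\,\dd v_i$ are exact only for continuous $F_i$, whereas the theorem in case~(a) allows discontinuous $F_i$ --- the paper handles this via the stopped process $\hat L_i$ and Lemma~3 of Ying, and a martingale version would pick up $S_i(u)/S_i(u-)$ correction factors that your sketch glosses over. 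For cases (b) and (c) you correctly identify that the listed hypotheses are precisely the endpoint weak-convergence conditions of Gill/Ying, which is also exactly how the paper uses them.
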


The extra technical conditions in b) and c) of the previous theorem are the conditions needed to obtain weak convergence of the normalized Kaplan-Meier estimator to a Gaussian process \cite{gill1980censoring,ying1989note}. Note that in the particular case $\tau_{0,i} < \tau_i$,  the conditions unrelated to the continuity of $F_i$ are automatically satisfied, leading to no extra requirements for case~a).
Continuity of $F_i$ is nowhere needed in case~a).

	From the independence assumption between the two groups and Theorem~\ref{theo:MST_homog},
	we obtain the following result for which we define $a_n=\sqrt{n_1n_2/(n_1+n_2)}$.
	\begin{cor}
		\label{cor:MST_homog}
		Assume that $n_1/(n_1+n_2)\to\kappa\in(0,1)$ {as $\min(n_1, n_2) \to \infty$}. Under any of the two conditions in Theorem 1, we have that $a_n(\hat{m}-m)$ is asymptotically normally distributed with mean zero and variance $$\sigma^2=(1-\kappa)\sigma_1^2+\kappa\sigma_2^2.$$
	\end{cor}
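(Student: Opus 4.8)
The plan is to reduce the two-sample statement to the one-sample result of Theorem~\ref{theo:MST_homog} by exploiting the independence of the two groups together with Slutsky's theorem. Write $Z_i = \widehat{MST}_{u,i} - MST_{u,i}$ so that
\[
a_n(\hat m - m) = a_n Z_1 - a_n Z_2,
\]
and recall that Theorem~\ref{theo:MST_homog} gives $\sqrt{n_i}\,Z_i \xrightarrow{d} \mathcal{N}(0,\sigma_i^2)$ for each $i=1,2$ as $\min(n_1,n_2)\to\infty$.

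First I would rescale each summand. Factoring $a_n = \sqrt{n_1}\,\sqrt{n_2/(n_1+n_2)} = \sqrt{n_2}\,\sqrt{n_1/(n_1+n_2)}$ and using $n_1/(n_1+n_2)\to\kappa$, hence $n_2/(n_1+n_2)\to 1-\kappa$, the deterministic factors converge: $\sqrt{n_2/(n_1+n_2)}\to\sqrt{1-\kappa}$ and $\sqrt{n_1/(n_1+n_2)}\to\sqrt{\kappa}$. Applying Slutsky's theorem to each term separately yields
\[
a_n Z_1 \xrightarrow{d} \mathcal{N}(0,(1-\kappa)\sigma_1^2), \qquad a_n Z_2 \xrightarrow{d} \mathcal{N}(0,\kappa\sigma_2^2).
\]

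Next I would combine the two terms. Because $T_{11},\dots,T_{1n_1}$ (together with their censoring variables) are independent of $T_{21},\dots,T_{2n_2}$, the statistics $Z_1$ and $Z_2$ are independent, so the pair $(a_n Z_1, a_n Z_2)$ converges jointly in distribution to a pair of independent normal variables with the marginals found above; here the passage from marginal to joint weak convergence is legitimate precisely because of independence, as the joint characteristic function factorizes. Taking the difference, a continuous map, gives
\[
a_n(\hat m - m) = a_n Z_1 - a_n Z_2 \xrightarrow{d} \mathcal{N}\bigl(0,(1-\kappa)\sigma_1^2+\kappa\sigma_2^2\bigr),
\]
since the difference of the two independent centered normal limits is centered normal with variance equal to the sum $(1-\kappa)\sigma_1^2 + \kappa\sigma_2^2 = \sigma^2$, as claimed.

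This argument is essentially routine once Theorem~\ref{theo:MST_homog} is available; the only point requiring a little care is the step from marginal to joint convergence, which relies crucially on the independence of the two samples. Were the samples dependent, one could not simply add the variances and would instead need the full bivariate limiting distribution including a covariance term. I would also note that the hypothesis $\min(n_1,n_2)\to\infty$ guarantees that both $n_1\to\infty$ and $n_2\to\infty$, so that Theorem~\ref{theo:MST_homog} may indeed be invoked within each group.
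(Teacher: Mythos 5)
Your proof is correct and is exactly the argument the paper intends: the corollary is stated as an immediate consequence of Theorem~\ref{theo:MST_homog} and the independence of the two samples, and your rescaling via $a_n=\sqrt{n_1}\sqrt{n_2/(n_1+n_2)}=\sqrt{n_2}\sqrt{n_1/(n_1+n_2)}$, Slutsky, joint convergence from independence, and the continuous mapping theorem is the standard way to fill in the details. No discrepancy with the paper.
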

	The canonical plug-in estimator of $\sigma^2$, say
 \begin{equation}
	    \label{eqn:sigma_hat}
     \hat{\sigma}^2 = \frac{n_2}{n_1 + n_2} \hat \sigma_1^2 + \frac{n_1}{n_1+n_2} \hat \sigma_2^2
	\end{equation}  is obviously consistent.
	Hence, the combination of Theorem~\ref{theo:MST_homog} and Corollary~\ref{cor:MST_homog} could be used to justify inference methods for ${MST}_{u,1} - {MST}_{u,2}$ based on the asymptotic normal approximation.
	However, such inference procedures can usually be made more reliable by means of resampling methods.

\subsection*{Inference via random permutation across samples}
 
	We propose random permutation to construct inference methods for $m$.
	To introduce this procedure, let $\pi=(\pi_1, \dots, \pi_{n_1+n_2})$ be any permutation of $(1,2, \dots, n_1+n_2)$.
	When applied to the pooled sample, say, $(Y_{1}, \Delta_{1}), \dots, (Y_{{n_1+n_2}}, \Delta_{{n_1+n_2}})$, this permutation leads to the permuted samples $(Y_{\pi_1}, \Delta_{\pi_1}), \dots, (Y_{\pi_{n_1}}, \Delta_{\pi_{n_1}})$ and $(Y_{\pi_{n_1+1}}, \Delta_{\pi_{n_1+1}}), \dots, (Y_{\pi_{n_1+n_2}}, \Delta_{\pi_{n_1+n_2}})$.
	
	In the special case of exchangeability, i.e.\ $(Y_{1j}, \Delta_{1j}) \stackrel{d}{=} (Y_{2j}, \Delta_{2j})$, 
	$\hat m = \widehat{MST}_{u,1} - \widehat{MST}_{u,2}$ would have the same distribution as $\hat m^{\pi} = \widehat{MST}_{u,1}^\pi - \widehat{MST}_{u,2}^\pi$. Here, $\widehat{MST}_{u,i}^\pi$ are the estimators of the mean survival times, just based on the $i$-th permuted sample.
	So, under a sharp null hypothesis of exchangeability, a test for the equality of mean survival times would reject the null hypothesis if $\hat m$ belongs to the $\alpha \times 100\%$ most extreme values of $\hat m^\pi$ across all $(n_1+n_2)!$ permutations.
	
	However, under the weak null hypothesis of equal mean survival times, $H: m=0$, the samples are in general not exchangeable.
	As a consequence, the asymptotic variances of $\hat m$ and $\hat m^{\pi}$ cannot be assumed equal and hence they must be studentized.
	We will thus focus on $\hat m^\pi/\hat{\sigma}^{\pi}$ as the permutation version of $\hat m/\hat \sigma$, where
	\begin{equation}
	    \label{eqn:hat_sigma_pi}
     \hat{\sigma}^{\pi2} = \frac{n_2}{n_1 + n_2} \hat \sigma_1^{\pi2} + \frac{n_1}{n_1+n_2} \hat \sigma_2^{\pi2},
	\end{equation} and $\sigma_i^{\pi2}$ is the plug-in variance estimator based on the $i$-th permuted sample.
	Consequently, our aim is to compare $\hat m/\hat \sigma$ to the conditional distribution of $\hat m^\pi/\hat{\sigma}^{\pi}$ given the data to reach a test conclusion.
	
	Because it is computationally infeasible to realize $\hat m^\pi/\hat{\sigma}^{\pi}$ for all $(n_1+n_2)!$ permutations, we will realize a relatively large number $B$ of random permutations $\pi$ and approximate the conditional distribution by the collection of the realized $\hat m^\pi/\hat{\sigma}^{\pi}$ of size $B$.
	
	In the following, we will discuss the asymptotic behaviour of $\hat m^\pi/\hat{\sigma}^{\pi}$ to justify the validity of the resulting inference procedures.
	From now on, we understand the weak convergence of conditional distributions (in probability) as the convergence of these distributions to another with respect to e.g.\ the bounded Lipschitz metric (in probability); see e.g.\ Theorem~1.12.4 in \cite{VW96}.
	\begin{thm}
		\label{theo:MST_homog_perm}    
		Assume that $\tau_{0,i}<\tau_i$ and $p_i\in(0,1)$ 
  for $i=1,2$. Then, as $n_1, n_2\to\infty$ with $n_1/(n_1+n_2) \to \kappa \in (0,1)$, the conditional distribution of 
  $a_n\hat m^\pi$
  given the data converges weakly in probability to the zero-mean normal distribution with variance given in \eqref{def:sigma} in the appendix.
	\end{thm}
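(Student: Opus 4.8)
The plan is to exploit that, conditionally on the data, each permuted sample is a draw \emph{without replacement} from the pooled sample $(Y_1,\Delta_1),\dots,(Y_n,\Delta_n)$ with $n=n_1+n_2$, whose empirical law converges to the mixture $\bar H=\kappa H_1+(1-\kappa)H_2$. Writing $\bar S$, $\bar p$ and $\bar\tau_0=\max(\tau_{0,1},\tau_{0,2})$ for the survival function, cure fraction and cure threshold attached to this mixture, both $\widehat{MST}^\pi_{u,1}$ and $\widehat{MST}^\pi_{u,2}$ consistently estimate the \emph{same} pooled target $\overline{MST}_u=\int_0^{\bar\tau_0}(\bar S(u)-\bar p)/(1-\bar p)\,\dd u$. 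I would therefore linearise the two permuted statistics around this common target and reduce the assertion to a combinatorial central limit theorem for a linear statistic of the permuted indices.

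First I would establish the permutation analogue of the Kaplan--Meier linearisation underlying Theorem~\ref{theo:MST_homog}. Because $\tau_{0,i}<\tau_i$ strictly, the relevant integrands stay away from the censoring boundary, so the influence function of the MST functional is bounded on $[0,\bar\tau_0]$ and no tail condition of type b) or c) is required. Let $\phi(Y_k,\Delta_k)$ denote this pooled influence function evaluated at the $k$-th pooled observation, and set $\bar\phi_n=n^{-1}\sum_{k=1}^n\phi(Y_k,\Delta_k)$. Using Hadamard differentiability of the MST map \cite{gill1980censoring,ying1989note} together with a conditional (permutation) functional delta method, I would show
\[
a_n\hat m^\pi=\frac{1}{a_n}\sum_{j=1}^{n_1}\bigl(\phi(Y_{\pi_j},\Delta_{\pi_j})-\bar\phi_n\bigr)+o_{\p}(1),
\]
where the remainder is negligible conditionally in probability; the collapse to a single sum of size $n_1$ uses that the $n_1$- and $n_2$-averages share the common centering $\bar\phi_n$ and that $a_n=\sqrt{n_1n_2/n}$, so the two cure-fraction contributions cancel.

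Next I would apply H\'ajek's combinatorial central limit theorem for simple random sampling without replacement, conditionally on the pooled data. Given the pooled sample the summands $\phi(Y_k,\Delta_k)-\bar\phi_n$ are fixed and, thanks to boundedness of $\phi$ and $n_1/n\to\kappa\in(0,1)$, satisfy a Lindeberg--Noether negligibility condition; hence the displayed linear statistic is asymptotically $\mathcal N(0,\cdot)$ conditionally in probability, in the sense of weak convergence with respect to the bounded Lipschitz metric \cite{VW96}. Its conditional variance equals $\tfrac{n}{n-1}\,n^{-1}\sum_k(\phi(Y_k,\Delta_k)-\bar\phi_n)^2$, which converges in probability to $\mathrm{Var}_{\bar H}\{\phi(Y,\Delta)\}$; identifying this limit with \eqref{def:sigma} finishes the argument.

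The main obstacle is the conditional negligibility of the delta-method remainder. Unlike the i.i.d.\ setting of Theorem~\ref{theo:MST_homog}, the permuted observations are sampled without replacement, so the permuted at-risk and counting processes are only approximately those of an i.i.d.\ sample from $\bar H$, and the upper integration limit $Y^\pi_{i,(m_i)}$ is itself random. I would control these by exploiting the strict inequality $\tau_{0,i}<\tau_i$ to bound the permuted Kaplan--Meier process uniformly on a compact subinterval of the support, and by showing that $Y^\pi_{i,(m_i)}$ exceeds $\bar\tau_0$ with conditional probability tending to one, so that the random limit may be replaced by $\bar\tau_0$ at no asymptotic cost. Establishing these uniform bounds \emph{conditionally in probability}, rather than unconditionally, is the delicate point of the proof.
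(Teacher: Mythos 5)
Your overall strategy is sound and reaches the right limit, but it takes a genuinely different technical route from the paper. The paper does not linearise by hand: it centers both permuted Kaplan--Meier estimators at the \emph{pooled} Kaplan--Meier estimator $\hat S$, invokes a known process-level result (Lemma~2 in the supplement to \cite{dobler18}) giving conditional weak convergence of $a_n(\hat S_1^\pi-\hat S,\hat S_2^\pi-\hat S)$ to $\bigl((1-\kappa)S\,B\circ v,\,-\kappa S\,B\circ v\bigr)$ on $(D[0,\tau_0])^2$, and then applies the conditional functional delta method (Theorem~3.9.11 in \cite{VW96}) using the Hadamard differentiability of the MST map already established for Theorem~\ref{theo:MST_homog}; the difference of the two coordinates then yields the $\mathcal N(0,\sigma^{\pi2})$ limit directly. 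Your route --- an influence-function expansion of the MST functional around the pooled target followed by H\'ajek's combinatorial CLT for sampling without replacement --- is essentially a from-scratch re-derivation of the content of that cited lemma, specialised to the one-dimensional functional. What the paper's approach buys is that the only genuinely hard step (uniform conditional negligibility of the remainder for the permuted, without-replacement Kaplan--Meier process) is outsourced to an existing process-level result, after which everything is soft; what your approach buys is a more elementary and self-contained argument with an explicit variance formula $\mathrm{Var}_{\bar H}\{\phi\}$, at the cost of having to prove that linearisation yourself. You correctly identify this remainder control as the delicate point, but it is left as a sketch, and it is precisely the step that carries the mathematical weight.

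Two smaller points to watch. First, be precise about what $\bar S$ is: the limit of the pooled Kaplan--Meier estimator is \emph{not} the mixture $\kappa S_1+(1-\kappa)S_2$, but the Kaplan--Meier functional applied to the mixture of the observation laws, i.e.\ the product-integral of the pooled hazard (the paper writes it explicitly as $S(t)=\exp\bigl(-\int_0^t\{\kappa(1-G_1(u-))\dd F_1(u)+(1-\kappa)(1-G_2(u-))\dd F_2(u)\}/\{\kappa(1-G_1(u-))S_1(u-)+(1-\kappa)(1-G_2(u-))S_2(u-)\}\bigr)$); your pooled target $\overline{MST}_u$ is correct only under that reading. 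Second, your reduction to a single group-$1$ sum via the exact antithetic identity $\mathbb P^\pi_{n_2,2}-\mathbb P_{n_1+n_2}=-(n_1/n_2)(\mathbb P^\pi_{n_1,1}-\mathbb P_{n_1+n_2})$ and the resulting variance bookkeeping are both correct and consistent with the perfectly negatively correlated pair $\bigl((1-\kappa)N,-\kappa N\bigr)$ appearing in the paper's proof.
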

Under continuity assumptions similar to those in Theorem~\ref{theo:MST_homog} and additional assumptions on the censoring distributions, we conjecture that a similar weak convergence result holds for the case of $\tau_{0,i}=\tau_i$.
This could potentially be shown by extending the results of \cite{dobler19} to the random permutation method instead of the classical bootstrap.
Deriving such results, however, is beyond the scope of the present paper.
 
	The structure of asymptotic variance in the previous theorem motivates a canonical permutation-type variance estimator $\hat \sigma^{\pi2}$, that is, the plug-in estimator based on the pooled sample.
	Due to the obvious consistency of this estimator, we arrive at the following main result on the permuted studentized mean survival time:
	\begin{cor}
		\label{cor:MST_homog_perm}
		Assume that $\tau_{0,i}<\tau_i$ and $p_i\in(0,1)$  and $F_i$ is continuous for $i=1,2$. Then, as $n_1, n_2\to\infty$ with $0 < \liminf n_1/(n_1+n_2) \leq \limsup n_1/(n_1+n_2) < 1$, the (conditional) distributions of 
  $a_n \hat m^\pi/\hat \sigma^\pi$
  (given the data) and $a_n  (\hat m - m)/\hat \sigma$ converge weakly (in probability) to the same limit distribution which is standard normal.
	\end{cor}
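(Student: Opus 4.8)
The plan is to prove the two convergence statements separately and then observe that studentization forces both limits to coincide with the standard normal, regardless of whether the two samples are exchangeable. For the non-permuted statistic, Corollary~\ref{cor:MST_homog} already supplies the central limit theorem $a_n(\hat m - m)\xrightarrow{d}\mathcal{N}(0,\sigma^2)$ with $\sigma^2=(1-\kappa)\sigma_1^2+\kappa\sigma_2^2$. Since the plug-in estimator $\hat\sigma^2$ in \eqref{eqn:sigma_hat} is consistent — each $\hat\sigma_i^2$ converges in probability to $\sigma_i^2$ by the same functional-delta/continuous-mapping arguments that underlie Theorem~\ref{theo:MST_homog}, and the weights converge to $1-\kappa$ and $\kappa$ — an application of Slutsky's theorem yields $a_n(\hat m - m)/\hat\sigma\xrightarrow{d}\mathcal{N}(0,1)$.

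For the permuted statistic, write $\sigma_\infty^2$ for the variance appearing in \eqref{def:sigma}. Theorem~\ref{theo:MST_homog_perm} gives that the conditional law of $a_n\hat m^\pi$ given the data converges weakly in probability to $\mathcal{N}(0,\sigma_\infty^2)$, so the numerator is already handled. The remaining task is to show $\hat\sigma^{\pi2}\xrightarrow{p}\sigma_\infty^2$. Here I would use that a random permutation of the pooled sample produces, within each block, an empirical measure converging to the pooled mixture distribution (sampling without replacement from the pooled sample mimicking i.i.d.\ draws from that mixture in the large-sample limit). Consequently both $\hat\sigma_1^{\pi2}$ and $\hat\sigma_2^{\pi2}$ converge in probability to the same quantity, namely the variance functional evaluated at the pooled distribution, and hence so does their convex combination $\hat\sigma^{\pi2}$ from \eqref{eqn:hat_sigma_pi}. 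By the way $\sigma_\infty^2$ is constructed in \eqref{def:sigma} — precisely as this pooled-distribution variance — the limit equals $\sigma_\infty^2$; equivalently, $\hat\sigma^{\pi2}$ is asymptotically equivalent to the pooled-sample plug-in estimator targeting $\sigma_\infty^2$.

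With these two facts in place I would invoke a conditional version of Slutsky's theorem: since the conditional law of $a_n\hat m^\pi$ converges weakly in probability to $\mathcal{N}(0,\sigma_\infty^2)$ while $\hat\sigma^\pi\xrightarrow{p}\sigma_\infty>0$, the pair converges jointly in conditional distribution (its variance component being degenerate), and the continuous map $(x,s)\mapsto x/s$ shows that the conditional law of $a_n\hat m^\pi/\hat\sigma^\pi$ converges weakly in probability to $\mathcal{N}(0,1)$. The essential point is that, although $\sigma^2$ and $\sigma_\infty^2$ need not agree once exchangeability fails, each statistic is divided by a consistent estimator of \emph{its own} numerator variance; both studentized quantities therefore converge to the common standard normal limit, which is exactly the mechanism that makes the permutation test asymptotically valid beyond exchangeability.

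The step I expect to demand the most care is the pair consisting of the consistency $\hat\sigma^{\pi2}\xrightarrow{p}\sigma_\infty^2$ and the conditional Slutsky argument in the "weakly in probability" sense: one must verify that the permutation-based variance estimator targets exactly the limiting variance of Theorem~\ref{theo:MST_homog_perm}, and that division by $\hat\sigma^\pi$ is compatible with the bounded-Lipschitz-metric notion of conditional convergence (for instance by establishing joint conditional tightness and exploiting the degeneracy of the denominator's limit). By contrast, the central limit behaviour itself is inherited verbatim from Corollary~\ref{cor:MST_homog} and Theorem~\ref{theo:MST_homog_perm} and requires no new argument.
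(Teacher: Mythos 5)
Your proof is correct and follows essentially the same route the paper takes (Theorem~\ref{theo:MST_homog_perm} plus consistency of the permutation-type variance estimator for the pooled-limit variance \eqref{def:sigma}, combined with a conditional Slutsky argument, alongside Corollary~\ref{cor:MST_homog} plus consistency of $\hat\sigma^2$ for the unconditional statistic); the paper records this only as the ``obvious consistency'' of the pooled plug-in estimator, so your elaboration of why $\hat\sigma^{\pi2}$ targets exactly the variance of Theorem~\ref{theo:MST_homog_perm} is, if anything, more explicit than the original. The one routine detail worth adding is a subsequence step, since the corollary assumes only $0<\liminf n_1/(n_1+n_2)\leq\limsup n_1/(n_1+n_2)<1$ rather than convergence of the ratio to a fixed $\kappa$: because the standard normal limit is free of $\kappa$, convergence along every convergent subsequence of the ratio upgrades to convergence along the full sequence.
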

	
	{We conclude this section with a remark on the inference procedures deduced from the random permutation approach.
    \begin{re}
    Corollary~\ref{cor:MST_homog_perm} gives rise to asymptotically exact 1- and 2-sided tests for the null hypotheses $H_0^{(1)}: m \leq 0$, $H_0^{(2)}: m \geq 0$, or $H_0^{(3)}: m\neq 0$ against the respective complementary alternative hypotheses: comparing $\hat m/\hat \sigma$ with data-dependent critical value(s) obtained from the collection of realized $\hat m^\pi/\hat \sigma^\pi$ (for fixed data) allows for controlling the chosen significance level $\alpha \in (0,1)$ as $n_1, n_2 \to \infty$ under the assumptions made above.
    A similar remark holds true for more general null hypotheses in which $m$ is compared to some hypothetical value $m_0 \in \R$.
    In addition, as a well-known property of permutation tests based on studentized test statistics, the just-mentioned tests are exact in the special case of exchangeability between both sample groups for which $m_0=0$ is automatically fulfilled.
    Similarly, by inverting hypothesis tests into confidence intervals, asymptotically exact confidence intervals for $m$ can be constructed; see the subsequent section for details.
    \end{re}
    }

\section{Simulation study}
\label{sec:simulations}

In this section, we study the finite sample performance of both the asymptotic and the permutation approach when constructing confidence intervals for $m$ and testing the one sided hypothesis $H_0:m\leq 0$ versus $H_1: m>0.$ In order
to cover a wide range of scenarios, we consider nine settings as described below. Settings 1-4 correspond to having two samples with the same mean survival time for the uncured ($m=0$) but possibly different distributions, while settings 5-9 correspond to having two samples with different mean survival times for the uncured ($m\neq 0$) with different magnitudes and signs for $m$. The cure and censoring rates also vary across the settings. In all of the following settings, the survival times for the uncured are truncated at $\tau_{0,i}$ equal to the $99\%$ quantile of their distribution in order to satisfy the assumption of compact support.
The censoring times are generated independently from an exponential distribution with parameter $\lambda_{C,i}$ and are truncated at  $\tau_{i}=\tau_{0,i}+2$. The truncation of the censoring times is done only to reflect the bounded follow-up but $\tau_{i}$ does not play any role  apart from the fact that $\tau_i>\tau_{0,i}.$ Note also that the reported censoring rate includes the cured subjects, which are always observed as censored, hence it is larger than the cure rate. 

\subsection*{Simulation settings}

\textit{Setting 1.} The two samples are exchangeable ($m=0$): cure rate $40\%$,  Weibull distribution for the uncured with shape and scale parameters $0.75$ and $1.5$ respectively,  censoring rate around $50\%$ ($\lambda_{C,i}=0.3$, $i=1,2$). 
\\[0.2cm]
\textit{Setting 2.} The uncured have the same Weibull distribution in both samples ($m=0$) with shape and scale parameters 0.75 and 1.5 respectively. The cure rate is $20\%$ in sample 1 and $60\%$ in sample 2, the censoring rate is around $30\%$ and $70\%$ in sample 1 and 2 respectively ($\lambda_{C,1}=0.25$, $\lambda_{C,2}=0.5$). 
\\[0.2cm]
\textit{Setting 3.} The uncured have the same Weibull distribution in both samples ($m=0$) with shape and scale parameters 0.75 and 1.5 respectively. The cure rate is samples 1 and 2 is $60\%$ and $20\%$ respectively; censoring rate is around $65\%$ and $25\%$ respectively ($\lambda_{C,1}=0.3$, $\lambda_{C,2}=0.1$).
\\[0.2cm]
\textit{Setting 4.} The uncured in sample 1 follow a  Weibull distribution with shape and scale parameters 0.75 and 1 respectively, while the uncured in sample 2 follow a Gompertz distribution with scale parameter 1 and shape parameter 0.327. The parameters are chosen such that the two groups have the same mean survival time (so $m=0$). The cure rate is $40\%$ in both samples; censoring rate is around $50\%$ in both samples ($\lambda_{C,1}=0.2$, $\lambda_{C,2}=0.15$).
\\[0.2cm]
\textit{Setting 5.} The uncured in the two samples have different Gompertz distributions with the same scale parameter 1 and shape parameters 0.1 and 0.5 respectively. The  difference of mean survival times is $m=1.09$. For this choice of parameters, the supports of the event times in the two samples are $[0,3.8]$ and $[0,2.3]$ respectively. The cure rate is $40\%$ in both samples; censoring rate is around $65\%$ in sample 1 and $45\%$ in sample 2 ($\lambda_{C,1}=0.3$, $\lambda_{C,2}=0.1$).
\\[0.2cm]
\textit{Setting 6.} This is the same as Setting 5 but the two groups are exchanged, i.e. $m=-1.09$. 
\\[0.2cm]
\textit{Setting 7.} The uncured in the two samples have different Gompertz distributions with the same scale parameter 1 and shape parameters 0.08 and 0.1 respectively. The  difference of mean survival times is $m=0.18$ For this choice of parameters, the supports of the event times in the two samples are $[0,4.1]$ and $[0,3.9]$ respectively.  The cure rate is $60\%$ in sample 1 and  $20\%$ in sample 2; censoring rate is around $70\%$ in sample 1 and $40\%$ in sample 2 ($\lambda_{C,1}=0.2$, $\lambda_{C,2}=0.15$).
\\[0.2cm]
\textit{Setting 8.} The survival distributions of the uncured are as in Setting 7, i.e. $m=0.18$.  The cure rate is samples 1 and 2 is $30\%$ and $20\%$ respectively; censoring rate is around $44\%$ in sample 1 and $34\%$ in sample 2 ($\lambda_{C,1}=0.1$, $\lambda_{C,2}=0.1$).
\\[0.2cm]
\textit{Setting 9.} The event times of the uncured in the sample 1 follow a Gompertz distribution with scale parameter 1 and shape parameter 0.08, while in sample 2 they follow a Weibull distribution with shape ans scale parameters 2 and 0.28 respectively. Both  distributions have support $[0,4.1]$ but different mean survival times ($m=0.52$).  The cure rate is $40\%$ in both samples; censoring rate is around $50\%$ in both samples ($\lambda_{C,1}=0.1$, $\lambda_{C,2}=0.1$).

\subsection*{Simulation results}

First, considering different sample sizes $n_1=2n_2\in\{50,200\}$ or $n_1=n_2\in\{100,200\}$, $95\%$ confidence intervals for $m$ are constructed based on both the asymptotic approximation and the permutation approach:
\[
I_n=[\hat{m}\mp q_{1-\alpha/2}\hat\sigma/a_n],\qquad I^\pi_n=[\hat{m}- q^\pi_{1-\alpha/2}\hat\sigma,\hat{m}- q^\pi_{\alpha/2}\hat\sigma],
\]
where $\hat\sigma$ is given in \eqref{eqn:sigma_hat}, $q_{1-\alpha}$ denotes the $100(1-\alpha)\%$-quantile of the standard normal distribution, $\alpha=0.05$ and $q^\pi_{1-\alpha}$ denotes the $100(1-\alpha)\%$-quantile of the conditional distribution of $\hat{m}^\pi/\hat\sigma^\pi$ for the permutation approach. 
We take $B=500$ random permutations, which seemed to be sufficient since increasing $B$ to 1000 did not have much effect in the results. Average length and coverage probabilities over 1,000 repetitions are reported in Table~\ref{tab:1}. 
The coverage rates closest to 95\% among both types of confidence intervals is printed in bold-type.
\begin{table}[h]
	\centering
	\begin{tabular}{c|c|cc|cc|cc|cc}
		Sett.& &\multicolumn{2}{c|}{$n_1=2n_2=50$}& \multicolumn{2}{c|}{$n_1=n_2=100$}& \multicolumn{2}{c|}{$n_1=2n_2=200$} &\multicolumn{2}{c}{$n_1=n_2=200$}\\
		& & M1 & M2 & M1 & M2 & M1 & M2 & M1 & M2\\
		\hline
		\hline
		1 & L & 1.34 & 1.15& 0.83&0.88 & 0.74&0.80 &0.62 &0.64\\
		%\cline{2-10}
		& CP & 88.4  & \textbf{93.1}& 93.0 & \textbf{94.3} &93.3 & \textbf{95.4}& 94.1&\textbf{94.2}\\
		\hline
		2 & L & 1.04 & 1.22&0.91 &1.00 & 0.84& 0.92& 0.73&0.78\\
		%\cline{2-10}
		& CP & 76.0 & \textbf{87.2}&86.6 & \textbf{90.0}&83.5 &\textbf{87.6} &89.4 &\textbf{90.6}\\
		\hline
		3 & L & 1.18 & 1.37& 0.82&0.87 &0.66 & 0.71&0.60 &0.62\\
		%\cline{2-10}
		& CP & 89.8 & \textbf{93.5} &88.8 & \textbf{90.1} &\textbf{93.3}& 92.9&92.0 &\textbf{92.2}\\
		\hline
		4 & L & 1.44 &1.64 & 1.08& 1.12& 0.87& 0.90&0.83&0.84\\
		%\cline{2-10}
		& CP &86.3 &\textbf{87.2} & 84.2& \textbf{86.5} & \textbf{89.5} &88.9 &89.2 & \textbf{89.3}\\
		\hline
		5 & L & 1.02 &1.22 &0.67 & 0.69&0.53&0.54 &0.48 & 0.48\\
		%\cline{2-10}
		& CP &93.1  & \textbf{95.9} &93.6 & \textbf{94.6} & \textbf{95.0}& 94.8& \textbf{95.2} &95.4 \\
		\hline
		6 & L &  1.18& 1.34& 0.67& 0.69& 0.64& 0.66&0.48 &0.48\\
		%\cline{2-10}
		& CP &86.4 &\textbf{89.2} &93.6 &\textbf{94.7} &\textbf{93.3} & \textbf{93.3}&\textbf{95.0} &94.7\\
		\hline
  7 & L & 1.31 & 1.52&0.83 & 0.85&0.67 & 0.69&0.60&0.60\\
		%\cline{2-10}
		& CP & 92.3 &\textbf{95.7} & 94.3&\textbf{94.5} &\textbf{93.3}&92.9 & \textbf{93.7}&93.6\\
		\hline
		8 & L &  1.07& 1.15& 0.64& 0.65& 0.55& 0.55&0.45 &0.45\\
		%\cline{2-10}
		& CP &94.1 &\textbf{95.6} &\textbf{93.3} &\textbf{93.3} &\textbf{94.8}& 94.7&\textbf{93.9} &93.5\\
		\hline
		9 & L &  1.18& 1.30& 0.71& 0.72& 0.61& 0.61&0.50 &0.50\\
		%\cline{2-10}
		& CP &92.5 &\textbf{94.1} &94.2 &\textbf{94.8}&94.2& \textbf{94.6}&94.4 &\textbf{94.6}\\
		\hline
	\end{tabular}
	\caption{\label{tab:1}Coverage probabilities (CP) in $\%$  and average length (L) of 95\% confidence intervals using the asymptotic approach (M1) and the permutation approach (M2) for different sample sizes.}
\end{table}

We observe that the confidence intervals based on the permutation approach are in general slightly wider and have better coverage, particularly for small sample sizes. As the sample sizes increase, the two approaches give more comparable results. For some settings, much larger sample sizes are needed to have coverage close to the nominal level but, for most of them, coverage is close to 95\%. 
When the sample sizes are the same, settings 2 and 3 are almost the same, with setting~3 having less censoring, leading to shorter confidence intervals and better coverage. When $n_1=2n_2$, setting 2 is more difficult because the smaller sample has a very large cure and censoring rate, leading to worse coverage probabilities. Similarly, when the sample sizes are the same, setting 5 and 6 are the same, leading to same length confidence intervals and approximately same coverage (due to sampling variation). When $n_1=2n_2$, setting 6 is more difficult because has higher censoring rate in the smaller sample.
As a result, we observe longer confidence intervals and worse coverage probabilities.  Setting 8 is similar to setting 7 but the first sample has lower cure rate, leading to shorter confidence intervals. 
When the two samples are not comparable in terms of cure and censoring rate, increasing the sample size of the sample in which it is easier to estimate ${MST}_{u,i}$, does not usually lead to better coverage (compare settings 2 and 3, 5 and 6). On the other hand, increasing the sample size of the sample in which estimation of ${MST}_{u,i}$ is more difficult usually leads to better coverage.

We further investigate settings 2, 3, 4 which exhibit the worst performance in terms of coverage probabilities. In setting 2, as estimation in the second sample is more difficult (because of higher cure and censoring rates), the performance of both the asymptotic and permutation approach in worse when the size of sample 1 is larger than the size of sample 2.  
In setting 3, estimation of the first sample is more challenging and we observed that the coverage is better when the size of sample 1 is larger. In setting 4, both samples have the same censoring and cure rate but the coverage seems to be worse when the sample sizes are the same. Results for larger sample sizes under the most difficult scenarios for each of these three settings are reported in Table~\ref{tab:5}. They show that, as expected, the coverage probabilities for both approaches converge to the nominal level.
\begin{table}[h]
	\centering
	\begin{tabular}{c|c|cc|cc|cc|cc}
		Sett.& &\multicolumn{2}{c|}{$n_1=2n_2=600$}& \multicolumn{2}{c|}{$n_1=2n_2=1200$}& \multicolumn{2}{c|}{$n_1=2n_2=4000$} &\multicolumn{2}{c}{$n_1=2n_2=10000$}\\
		& & M1 & M2 & M1 & M2 & M1 & M2 & M1 & M2\\
		\hline
		\hline
		2 & L & 0.57 & 0.60& 0.44&0.45 & 0.26&0.26 &0.12 &0.12\\
		%\cline{2-10}
		& CP & 84.4 & \textbf{87.6}&88.4 & \textbf{89.9}&\textbf{92.1} & 92.0& \textbf{94.5}&94.2\\
	\cline{1-10}
 \multicolumn{10}{c}{}\\
 \multicolumn{2}{c|}{}&\multicolumn{2}{c|}{$n_1=n_2=500$}& \multicolumn{2}{c|}{$n_1=n_2=1000$}& \multicolumn{2}{c|}{$n_1=n_2=2000$}&\multicolumn{2}{c}{}\\
	\multicolumn{2}{c|}{}& M1 & M2 & M1 & M2 & M1 & M2 &\multicolumn{2}{c}{} \\
  \hline
  \hline
  	3 & L & 0.40& 0.40& 0.28&0.28 & 0.20& 0.20&&\\
		%\cline{2-10}
		& CP & \textbf{93.8} & 93.6&\textbf{92.5}& 91.9&\textbf{94.8}&94.4&&\\
  \hline
		4 & L & 0.55 & 0.56& 0.39& 0.40 &0.28&0.28&&\\
		%\cline{2-10}
		& CP & \textbf{92.3} & \textbf{92.3} &\textbf{94.2} &  \textbf{94.2}&94.1&\textbf{94.5}&& \\
		\hline
	\end{tabular}
	\caption{\label{tab:5}Coverage probabilities (CP) in $\%$ and average length (L) of 95\% confidence intervals for $m$ using the asymptotic approach (M1) and the permutation approach (M2) for different sample sizes.}
\end{table}

In addition, we selected 2 of the settings (setting 2 and 9) and further investigated the effect of the censoring and cure rates for sample sizes $100-100$ and $200-100$. First, we keep the cure rate fixed at $40\%$ (moderate) and consider 3 censoring levels: $45\%$ (low), $50\%$ (moderate) and $60\%$ (high). Secondly, we vary the cure rate: $20\%$ (low), $40\%$ (moderate) and $60\%$ (high), while maintaining the same moderate censoring level equal to the cure rate plus $10\%$. Average length and coverage probabilities over 1,000 repetitions are reported in Table~\ref{tab:3}. 
\begin{table}[h]
	\centering
	\begin{tabular}{c|c|c|cc|cc|cc}
		\multicolumn{3}{c}{}&\multicolumn{6}{c}{censoring rate}\\
		\multicolumn{3}{c|}{}&\multicolumn{2}{c|}{low}&\multicolumn{2}{c|}{moderate}&\multicolumn{2}{c}{high}\\
		Sett. &$n_1/n_2$&& M1 & M2 & M1 & M2 & M1 & M2 \\
		\hline
		\hline
		4& 100/100&L& 0.87&0.89&1.08&1.12&1.12&1.23\\
		&&CP& 92.4&\textbf{92.6}&84.0&\textbf{86.5}&68.7&\textbf{72.2}\\
		&200/100&L& 0.67&0.71&0.87&0.90&1.06&1.17\\
		&&CP& \textbf{93.8}&92.9&\textbf{89.5}&88.9&76.8&\textbf{77.3}\\
		\hline
		9&100/100&L& 0.67&0.68&0.71&0.72&0.79&0.82\\
		&&CP&\textbf{94.1} &93.9&94.2&\textbf{94.8}&92.9&\textbf{94.3}\\
		&200/100&L&0.58&0.58&0.61&0.61&0.69&0.71\\
		&&CP&\textbf{94.5}&94.2&94.2&\textbf{94.6}&\textbf{94.7}&94.6\\
		\hline
		\hline
		\multicolumn{3}{c|}{}&\multicolumn{6}{c}{cure rate}\\
		\multicolumn{3}{c|}{}&\multicolumn{2}{c|}{low}&\multicolumn{2}{c|}{moderate}&\multicolumn{2}{c}{high}\\
		Sett. &$n_1/n_2$&& M1 & M2 & M1 & M2 & M1 & M2 \\
		\hline\hline
		4& 100/100&L&0.85 &0.92&1.08&1.12&1.21&1.21\\
		&&CP& 90.6&\textbf{91.4}&84.0&\textbf{86.5}&88.1&\textbf{89.0}\\
		&200/100&L&0.66 &0.72&0.87&0.90&1.20&1.31\\
		&&CP& \textbf{93.0}&\textbf{93.0}&\textbf{89.5}&88.9&80.6&\textbf{80.7}\\
		\hline
		9	&100/100&L& 0.59&0.60&0.71&0.72&0.92&0.96\\
		&&CP& \textbf{94.3}&\textbf{94.3}&94.2&\textbf{94.8}&94.4&\textbf{95.5}\\
		&200/100&L&0.51 &0.51&0.61&0.61&0.80&0.82\\
		&&CP& 95.8&\textbf{95.6}&94.2&\textbf{94.6}&93.2&\textbf{93.8}\\
		\hline
	\end{tabular}
	\caption{\label{tab:3}Coverage probabilities (CP) in $\%$  and average length (L) of 95\% confidence intervals using the asymptotic approach (M1) and the permutation approach (M2) for different sample sizes, censoring and cure rates.}
\end{table}	
As expected, we observe that, as the censoring or cure rate increases, the length of the confidence intervals increases. In setting 4 the coverage deteriorates significantly for a high censoring rate, while in setting 9 the coverage remains stable and close to the nominal value throughout all scenarios. 

Next, we consider a one-sided hypothesis test for $H_0: m\leq 0$ versus $H_1: m> 0$ at level $5\%$.  The rejection rates for the test  are reported in Table~\ref{tab:2}. Looking at settings 1-4 and 6  for which $H_0$ is true (with $m=0$ for settings 1-4), we observe that most of the time the rejection rate is lower or close to $5\%$ for both methods. Setting 2 is again the most problematic one with rejection rate higher than the level of the test.  This might be because in setting 2 the second sample has a high cure (and censoring) rate, which might lead to underestimation of the mean survival times for the uncured in sample 2 and as a result an overestimation of $m$. For settings 2, 3, and 4 we also considered larger sample sizes.
The results are reported in Table~\ref{tab:6}. In particular, we observe that the rejection rates in setting 2 decrease and approaches the significance level as the sample size increases. 
In terms of power, as $m $ or the sample size increase, the power increases. Both methods are comparable, with the permutation approach usually leading to slightly lower rejection rate under both hypothesis. 
Again,  in settings 4 and 9 we also investigate the effect of the censoring and cure rate as above.  Results are given in Table~\ref{tab:4}.
 As the censoring or cure rate increases, the power of the test decreases, while the rejection rate in setting 4 ($H_0$ is true) remains below the $5\%$ level throughout all scenarios.
 
 {Finally, to acknowledge that the case of unbalanced sample sizes where the smaller sample meets the higher censoring rate, we would like to point to \cite{ditzhaus2023}.
 For very small sample sizes, their studentized permutation test about the RMST also exhibited the worst control of the type-I error rate in this challenging context; see Table~1 therein, and also Tables~S.1 and~S.2 in the supplementary material accompanying that paper.
 Of note, their proposed permutation test is still quite accurate with a size not exceeding 6.8\% even in the most challenging setting.}

\begin{table}[h]
	\centering
	\begin{tabular}{c|cc|cc|cc|cc}
		Sett. &\multicolumn{2}{c|}{$n_1=2n_2=50$}& \multicolumn{2}{c|}{$n_1=n_2=100$}& \multicolumn{2}{c|}{$n_1=2n_2=200$} &\multicolumn{2}{c}{$n_1=n_2=200$}\\
		&  M1 & M2 & M1 & M2 & M1 & M2 & M1 & M2\\
		\hline
		\hline
		1 &  5.6 &5.0 &6.9 & 4.9&5.9& 5.4&6.2 &5.5\\
		%\hline
		2  & 20.9  & 19.0& 17.5& 14.0&18.0 &17.2&14.3 &12.1\\
		%	\hline
		3  &0.9& 0.8& 2.3& 1.7&0.6& 0.5& 2.4&2.2\\
		%	\hline
		4 & 0.6&0.2 &2.3& 0.6&0.3& 0.2&1.9 &0.7\\
		%	\hline
		5  &  96.5&94.1 &100 &100 &100 & 100& 100&100\\
		%	\hline
		6  & 0.0&0.0 &0.0&0.0& 0.0& 0.0& 0.0&0.0\\
  	%		\hline
  7 & 11.7 & 9.5& 21.3&21.0&23.0 &22.1 &30.5 &30.3\\
		%		\hline
		8  & 15.1 &13.0 & 30.3&29.3 & 33.3& 33.0& 45.0&45.0\\
		%\hline
		9  & 55.4 &50.0 & 87.4&86.4 & 94.8&94.4& 97.3&98.7\\
		\hline
	\end{tabular}
	\caption{\label{tab:2}Rejection rate in $\%$  for testing the hypothesis $H_0: m\leq 0$ versus $H_1: m> 0$ at level $5\%$ using the asymptotic approach (M1) and the permutation approach (M2) for different sample sizes.}
\end{table}
\begin{table}[h]
	\centering
	\begin{tabular}{c|cc|cc|cc|cc}
		Sett. &\multicolumn{2}{c|}{$n_1=2n_2=600$}& \multicolumn{2}{c|}{$n_1=2n_2=$ 1,200}& \multicolumn{2}{c|}{$n_1=2n_2=$ 4,000} &\multicolumn{2}{c}{$n_1=2n_2=$ 10,000}\\
		 & M1 & M2 & M1 & M2 & M1 & M2 & M1 & M2\\
		\hline
		\hline
		2  &18.8&15.0&15.5&13.2&7.7&10.1&3.4&7.0\\
	\cline{1-9}
 \multicolumn{9}{c}{}\\
&\multicolumn{2}{c|}{$n_1=n_2=500$}& \multicolumn{2}{c|}{$n_1=n_2=$ 1,000}& \multicolumn{2}{c|}{$n_1=n_2=$ 2,000}&\multicolumn{2}{c}{}\\
	& M1 & M2 & M1 & M2 & M1 & M2 &\multicolumn{2}{c}{} \\
  \hline
  \hline
  	3  &2.8&2.7&3.6&3.8&4.1&4.2&&\\
  \hline
		4  &2.0&1.7&2.3&1.9&3.4&3.2&&\\
		\hline
	\end{tabular}
	\caption{\label{tab:6}Rejection rates  in $\%$  for testing the hypothesis $H_0: m\leq 0$ versus $H_1: m> 0$ at level $5\%$  using the asymptotic approach (M1) and the permutation approach (M2) for different sample sizes.}
\end{table}
\begin{table}[h]
	\centering
	\begin{tabular}{c|c|cc|cc|cc}
		\multicolumn{2}{c}{}&\multicolumn{6}{c}{censoring rate}\\
		\multicolumn{2}{c|}{}&\multicolumn{2}{c|}{low}&\multicolumn{2}{c|}{moderate}&\multicolumn{2}{c}{high}\\
		Sett. &$n_1/n_2$& M1 & M2 & M1 & M2 & M1 & M2 \\
		\hline
		\hline
		4& 100/100& 3.0&1.6&2.3&0.6&1.7&0.2\\
		&200/100&2.9&1.0&0.3&0.2&2.4&0.0\\
		\hline
		9&100/100&91.9&91.6&87.4&86.4&79.7&79.1\\
		&200/100&95.9&95.0&94.8&94.4&89.0&87.5\\
		\hline
		\hline
		\multicolumn{2}{c|}{}&\multicolumn{6}{c}{cure rate}\\
		\multicolumn{2}{c|}{}&\multicolumn{2}{c|}{low}&\multicolumn{2}{c|}{moderate}&\multicolumn{2}{c}{high}\\
		Sett. &$n_1/n_2$& M1 & M2 & M1 & M2 & M1 & M2 \\
		\hline\hline
		4& 100/100&2.8 &1.0&2.3&0.6&2.4&0.8\\
		&200/100&3.6&1.1&0.3&0.2&1.9&0.1\\
		\hline
		9	&100/100&95.7&95.5&87.4&86.4&73.1&72.0\\
		&200/100&99.2&99.0&94.8&94.4&79.5&77.6\\
		\hline
	\end{tabular}
	\caption{\label{tab:4}Rejection rate in $\%$  for testing the hypothesis $H_1: m>0$ at level 5\%  using the asymptotic approach (M1) and the permutation approach (M2) for different sample sizes, censoring and cure rates.}
\end{table}

\section{Application}
\label{sec:app_I}

In this section, we apply the developed methods to a real data set from research on leukemia \cite{kersey87}; the study ran from March 1982 to May 1987.
91 patients were treated with high-dose chemoradiotherapy, followed by a bone marrow transplant.
$n_1=46$ patients received allogeneic marrow from a matched donor and $n_2=45$ patients without a matched donor received autologous marrow, i.e.\ their own. 
They were followed for 1.4 to 5 years.
For other details such as additional patient characteristics and the frequency of the graft-versus-host disease among allogeneically transplanted patients we refer to the original study \cite{kersey87}.

In our analysis, we are going to re-analyze relapse-free survival.
The data sets are available in the monograph \cite{peng2021cure}.
They contain the (potentially right-censored) times to relapse or death (in days), together with the censoring status.
It was argued in both \cite{peng2021cure} and \cite{kersey87} that a cure model is appropriate for the data. The authors of the just-mentioned book first fit parametric accelerated failure time mixture cure models (see Section 2.6) and did not find a significant difference in either cure rates or survival times for the uncured. 
Additionally, in Section 3.6 of \cite{peng2021cure} they fit semiparametric logistic-Cox and logistic-AFT models.  Under the logistic-Cox model they did find a significant effect for the survival times of the uncured between both treatment groups. In particular, they concluded that the Autologous group has significantly higher hazard than the Allogeneic group with an estimated hazard ratio of 1.88,  p-value 0.04 and 95\% confidence interval (1.03,3.45). On the other hand, with the semiparametric logistic-AFT model, no statistically significant difference is detected between the two groups.

Let us briefly summarize the data sets.
The censoring percentages amounted to 28\% and 20\%
in the allogeneic and autologous groups, respectively, i.e.\ 13 and 9 patients in absolute numbers.
The percentages of data points in plateaus were 15\% and 16\% 
and the estimated cure fractions $\hat p_i = \hat S_i(\tau_{0,i})$ were 26\% and 19\%, respectively.
The test of \cite{klein2007analyzing} for the equality of cure fractions resulted in a non-significant $p$-value of 0.453.

\iffalse
Permutation with 500 iterations
as. CI: [3,255]
perm. CI: [3,260]
as. p-value: 0.045
perm. p-value: 0.058
\fi

\begin{figure}[ht]
    \centering
    \includegraphics[width=0.8\textwidth]{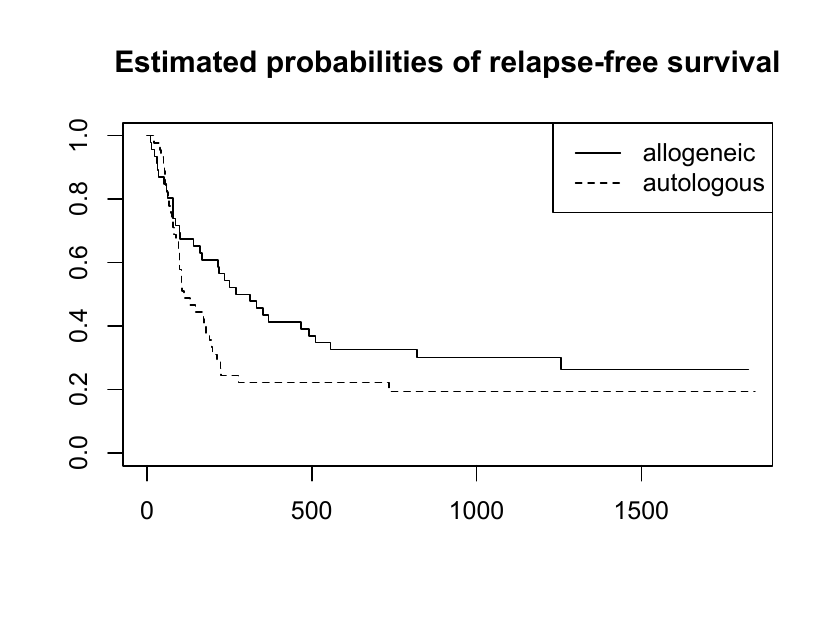}
    \caption{Kaplan-Meier curves for relapse-free survival.}
    \label{fig:kmes_I}
\end{figure}

Figure~\ref{fig:kmes_I} shows an illustration of the Kaplan-Meier curves.
It shows that the curves are crossing and very close to each other in during the first weeks.
After that, the curve for the autologous group clearly stays below the one for the allogeneic group.
However, that discrepancy melts down as time progresses, hence the non-significant $p$-value for the equality of cure fractions.

On the other hand, the difference in estimated mean survival times for the uncured patients amounts to $m=129$.
The $95\%$ confidence intervals for those mean survival differences were [3, 255] (asymptotic) and [1, 255] (permutation).
The two-sided hypothesis tests for equal mean survival times of the uncured resulted in the $p$-values $0.045$ (asymptotic) and $0.046$ (permutation), i.e.\ just significant at the significance level $\alpha=5\%$.
The random permutation-based inference methods have been run with 5,000 iterations.
The asymptotic and the permutation-based inference methods thus agreed on their outcomes, despite the rather small sample sizes.
In view of the rather low censoring rates and the simulation results for moderate censoring settings presented in the previous section, we deem all applied inference methods reliable.

\section{Discussion and interim conclusion}
\label{sec:disc_I}
In this article, we considered a two-sample comparison of survival data in the presence of a cure fraction. In such situations, instead of just looking at the overall survival function, it is more informative to compare the  cured fractions and the survival of the uncured sub-populations. We propose the use of the mean survival time as a summary measure of the survival curve for the uncured subjects since it is a model-free parameter and easy to interpret. We introduced a nonparametric estimator of the mean survival time for the uncured  
and developed both an asymptotic and a permutation-based method for inference on the difference between the $MST_u$. 
Based on our simulation results, both methods were quite reliable, with the permutation approach being recommended particularly for small sample sizes. However, more caution is required when applying the methods in the presence of a high censoring or cure rate, for which larger sample sizes are needed to obtain reliable results.
	
 The $MST_u$ is useful in assessing whether there is a difference in the survival times of the uncured among the two groups. However, we would like to point out that, even in the context of a randomized clinical trial, $MST_u$ 
 does not necessarily have a causal interpretation as a direct effect of the treatment on survival of the uncured. This is because of the conditioning on being uncured;
 the uncured sub-populations between the two treatment arms might, in general, fail to be comparable. For example, it might be that treatment~A is more beneficial in curing patients compared to treatment~B and those who do not get cured with treatment A are the patients with worse condition. As a result, the survival of the uncured for treatment~A might be worse compared to treatment~B, but that does not mean that treatment A shortens the survival of the uncured. However, the fact the $MST_u$ does not have an interpretation as a direct causal effect on the uncured is not a problem when the goal is to choose which treatment should be preferred. Randomized clinical trial data allow us to understand the causal effect of the treatment on the joint distribution of $\1_{T=\infty}$ and $T\1_{T<\infty}$.  One can then in practice define a utility function, for example $w_11_{T=\infty}+w_2T1_{T<\infty}$ for certain weights $w_1,w_2$ that represent whether the curative effect or the life prolonging one is more important. Maximizing the expected utility reduces to choosing the treatment that maximizes $w_1\p(T=\infty)+w_2\p(T<\infty)MST_u$. On the other hand, if one is interested only in the causal effect of the treatment on the uncured subpopulation, determining the relevant quantity is not straightforward and extra caution is required since we are conditioning on a post-treatment variable, which cannot be directly intervened upon.  If all the possible variables that can affect both the cure status and the survival time of the uncured were observed, one could condition on those to estimate the effect of treatment on the uncured but this is a quite unrealistic scenario. 
 
 \iffalse
 However, if the treatment does not have any effect on the cure fraction, conditioning on being uncured is not problematic and we can still give a causal interpretation to the effect on the $MST_u$.
 \\\textcolor{purple}{I have thought again about the previous statement. I believe, there could still be causal problems even in the case of equal cure rates. Because, even for a RCT, it could still be that under Treatment A some other subpopulation is cured than under Treatment B. Should we perhaps simply delete the previous sentence in blue and then also the following blue sentence? (Because I feel that we might be vulnerable in this point when we go under review.)}\\
 This is probably the most interesting case in practice since one would usually give priority to treatments that cure more patients and, only when cure fractions are comparable, one would be more interested in comparing the effect of treatment on the uncured.
 \fi

{We would also like to point out that, instead of comparing the $MST_u$'s over some time horizon $[0, \tau_0]$, trivial adjustments of the methods and proofs would also allow for comparing the mean residual survival times.
Similarly as in \cite{chen2023mean}, these are defined as $MRST_{u,t} = \E[T - t \ | \ t < T < \infty ] $.}

\iffalse
{In addition to the theorems presented in the current paper, we conjecture that a variant of Theorem~\ref{theo:MST_homog_perm} still holds true in the case of $\tau_{0,i} = \tau_i$.
However, the extra assumptions given in Theorem~\ref{theo:MST_homog} and the asymptotic results in \cite{dobler19} about bootstrapping the unrestricted Kaplan-Meier estimator suggest that stronger assumptions on the censoring distributions will be necessary.
Clearly, it is beyond the scope of the present paper to explore this case further.}
\fi
 
Despite the advantage of not requiring any model assumptions, in several situations, one might want to account for covariate effects when comparing the survival times of the uncured among two groups. This will be the focus of our study in Part II, where the results of this article will be further extended to the conditional $MST_u$ 
estimated under the assumption that each of the groups follows a semiparametric logistic-Cox mixture cure model.
\begin{appendix}

\setcounter{section}{0}

\section{Proofs}\label{app}

	\begin{proof}[Proof of Theorem \ref{theo:MST_homog}]
		Since the KM estimator remains constant after the last observed event time, we have $\hat{p}_i=\hat{S}_i(\tau_{0,i})$ and 
		\[
		\widehat{MST}_{u,i}=\int_0^{\tau_{0,i}}\frac{\hat{S}_{i}(u)-\hat{S}_{i}(\tau_{0,i})}{1-\hat{S}_{i}(\tau_{0,i})}\,\dd u.
		\]
		Hence $\widehat{MST}_{u,i}-{MST}_{u,i}=\psi(\hat{S}_i)-\psi(S_i)$, where
		\[
		\psi: \tilde{D}[0,\tau_{0,i}]\to \R\qquad \psi(\theta)=\int_0^{\tau_{0,i}}\frac{\theta(u)-\theta(\tau_{0,i})}{1-\theta(\tau_{0,i})}\,\dd u
		\]
		and
		\[
		\tilde{D}[0,\tau_{0,i}]=\{f\in{D}[0,\tau_{0,i}]\mid \sup_{t\in[0,\tau_{0,i}]}|f(t)|<1\}.
		\]
		We will derive the asymptotic distribution using the functional delta method based on standard limit results for the KM estimator.
		
		First, {we consider for simplicity the case of a continuous distribution $F_i$ and assume that either condition a) or b)} of the theorem holds. From results of \cite{gill1983,ying1989note}, we have that the stochastic process
		$\sqrt{n_i}\{\hat{S}_i(t)-S_i(t)\}$ 
		converges weakly in $D[0,\tau_{0,i}]$ to the process
		$S_i(t)\cdot B(v_i(t))$
		where $B$ is a standard Brownian motion and $v_i$ is defined as in \eqref{eqn:v_i}.
		Returning to the delta method, the function $\psi$ is Hadamard-differentiable tangentially to $C[0,\tau_{0,i}]$ with derivative $\dd \psi_\theta$ given by
		\[
		\dd \psi_\theta\cdot h=h(\tau_{0,i})\int_0^{\tau_{0,i}}\frac{\theta(u)-\theta(\tau_{0,i})}{\{1-\theta(\tau_{0,i})\}^2}\,\dd u+\int_0^{\tau_{0,i}}\frac{h(u)-h(\tau_{0,i})}{1-\theta(\tau_{0,i})}\,\dd u
		\]
		By Theorem 3.9.4 in \cite{VW96}, we conclude that $ \sqrt{n_i}\{\widehat{MST}_{u,i}-\widehat{MST}_{u,i}\}$ converges weakly to 
		\[
		\begin{aligned}
			N&= S_i(\tau_{0,i})B(v_i(\tau_{0,i}))\int_0^{\tau_{0,i}}\frac{S_i(u)-S_i(\tau_{0,i})}{\{1-S_i(\tau_{0,i})\}^2}\,\dd u \\
			&\qquad+\int_0^{\tau_{0,i}}\frac{S_i(u)B(v_i(u))-S_i(\tau_{0,i})B(v(\tau_{0,i}))}{1-S_i(\tau_{0,i})}\,\dd u\\
			&=p_iB(v_i(\tau_{0,i}))\int_0^{\tau_{0,i}}\frac{S_i(u)-p_i}{(1-p_i)^2}\,\dd u+\int_0^{\tau_{0,i}}\frac{S_i(u)B(v(u))-p_iB(v_i(\tau_{0,i}))}{1-p_i}\,\dd u\\
		\end{aligned}
		\]
		The variable $N$ is normally distributed with mean zero and variance
		\begin{equation}
			\label{def:sigma-i}
			\begin{aligned}
				\sigma_i^2&=\int_0^{\tau_{0,i}}\int_0^{\tau_{0,i}}\frac{S_i(u)S_i(t)}{(1-p_i)^2}v_i(u\wedge t)\,\dd u\,\dd t+\frac{p_i^2}{(1-p_i)^2}({MST}_{u,i}-\tau_{0,i})^2v_i(\tau_{0,i})\\
				&\quad+2\frac{p_i}{(1-p_i)^2}({MST}_{u,i}-\tau_{0,i})\int_0^{\tau_{0,i}}S_i(u)v_i(u)\,\dd u.\\
			\end{aligned}
		\end{equation}
	
{If  $F_i$ is not continuous and either condition a) or  c) of the theorem is satisfied}, we only have weak convergence in $D[0,\tau_{i}]$  of the stopped process 
		\[
		\begin{aligned}
			\hat{L}_i(t)&=\sqrt{n_i}\frac{S_i(t)}{S_i(t\wedge Y_{i,(n)})}\left\{\hat{S}_i(t\wedge Y_{i,(n)})-S_i(t\wedge Y_{i,(n)})\right\}\\
			&=\sqrt{n_i}\left\{\frac{S_i(t)\hat{S}_i(t\wedge Y_{i,(n)})}{{S}_i(t\wedge Y_{i,(n)})}-S_i(t)\right\}
		\end{aligned}
		\]
		where $Y_{i,(n)}$ denotes the largest observation in group $i$ (see for example Theorem 3.14 in \cite{maller1996survival}). {If $\tau_{0,i}<\tau_i$, then $Y_{i,(n)}>\tau_{0,i}$ with probability converging to one, which leads to the uniform convergence of $\sqrt{n_i}\{\hat{S}_i(t)-S_i(t)\}$ 
		on $D[0,\tau_{0,i}]$ as in part a). Otherwise, if $\tau_{0,i}=\tau_i$, } applying the Delta method as before, we would obtain the limit distribution of \[
		\sqrt{n_i}\left\{\psi(\hat{Q}_i)-\psi(S_i)\right\},\qquad Q(t)=\frac{S_i(t)\hat{S}_i(t\wedge Y_{i,(m_i)})}{{S}_i(t\wedge Y_{i,(m_i)})}
		\] 
		It then remains to deal with the difference $\sqrt{n_i}\left\{ \psi(\hat{S}_i)-\psi(\hat{Q}_i)\right\}$ and show that it converges to zero. For this one can use Lemma 3 in \cite{ying1989note}, which does not require continuity of $F_i$. 
	\end{proof}
	
	\begin{proof}[Proof of Theorem~\ref{theo:MST_homog_perm}]
		We begin by analyzing the asymptotic behaviour of $$ (W_1^\pi, W_2^\pi) = \sqrt{\frac{n_1n_2}{n_1+n_2}} (\hat S_1^\pi - \hat S , \hat S_2^\pi - \hat S)$$
		as a random element of $(D[0,\tau_0])^2 $, where $\tau_0 = \max(\tau_{0,1}, \tau_{0,2})$ and $\hat S$ denotes the Kaplan-Meier estimator based on the pooled sample.
		We refer to Lemma~2 in the online supplementary material to \cite{dobler18} for the result that, as $\min(n_1, n_2)\to\infty$,
		the conditional distribution of $(W_1^\pi, W_2^\pi)$ converges weakly on $(D[0,\tau_0])^2$ in probability to the distribution of
		the Gaussian process 
		$$ ((1-\kappa) S(\cdot) B(v(\cdot)),  -\kappa S(\cdot) B(v(\cdot))). $$
		Here, $B$ again denotes a standard Brownian motion, 
		$$S(t) = \exp\Big(- \int_0^t \frac{\kappa (1-G_1(u-)) \dd F_1(u) + (1-\kappa) (1-G_2(u-)) \dd F_2(u)}{\kappa (1-G_1(u-)) S_1(u-) + (1-\kappa) (1-G_2(u-)) S_2(u-)}\Big)$$
		is the limit of the pooled Kaplan-Meier estimator, and $$v(t) = \int_0^t \frac{\kappa (1-G_1(u-)) \dd F_1(u) + (1-\kappa) (1-G_2(u-)) \dd F_2(u)}{\{\kappa (1-G_1(u-))  S_1(u-) + (1-\kappa) (1-G_2(u-)) S_2(u-)\}^2}.$$
		
		Now, write $\widehat{MST}_{u,i}^\pi = \psi(\hat S^\pi_i)$ and $\widehat{MST}_{u} = \psi(\hat S)$ for the estimated mean survival time based on the pooled sample.
		We pointed out the Hadamard-differentiability of $\psi$ in the proof of Theorem~\ref{theo:MST_homog} above.
		Thus, Theorem~3.9.11 in \cite{VW96} applies and it follows that, conditionally on the data, 
		$$\sqrt{n_1 n_2/(n_1 + n_2)}(\widehat{MST}_{u,1}^\pi-\widehat{MST}_{u}, \widehat{MST}_{u,2}^\pi-\widehat{MST}_{u}) $$ converges weakly to the following two-dimensional Gaussian random vector, say, $(N_1, N_2)$ in probability:
		\begin{align*}
			& \Big[ S(\tau_{0})B(v(\tau_{0}))\int_0^{\tau_{0}}\frac{S(u)-S(\tau_{0})}{\{1-S(\tau_{0})\}^2}\,\dd u
			\\
			& \quad +\int_0^{\tau_{0}}\frac{S(u)B(v(u))-S(\tau_{0})B(v(\tau_{0}))}{1-S(\tau_{0})}\,\dd u \Big]
			\cdot (1-\kappa, -\kappa) \\
			& = \Big[ p B(v(\tau_{0}))\int_0^{\tau_{0}}\frac{S(u)-p}{\{1-p\}^2}\,\dd u\\
			& \quad +\int_0^{\tau_{0}}\frac{S(u)B(v(u))-pB(v(\tau_{0}))}{1-p}\,\dd u \Big]
			\cdot (1-\kappa, -\kappa),
		\end{align*}
		where $p = S(\tau_{0})$.
		Finally, we take the difference of both entries of the pair to conclude that the weak limit of the conditional distribution of $\sqrt{n_1 n_2/(n_1 + n_2)}(\widehat{MST}_{u,1}^\pi-\widehat{MST}_{u,2}^\pi) $ is normal with mean zero and variance 
		\begin{equation}
			\label{def:sigma}
			\begin{aligned}
				\sigma^{\pi2}&=\int_0^{\tau_{0}}\int_0^{\tau_{0}}\frac{S(u)S(t)}{(1-p)^2}v(u\wedge t)\,\dd u\,\dd t+\frac{p^2}{(1-p)^2}({MST}_{u}-\tau_{0})^2v(\tau_{0})\\
				&\quad+2\frac{p}{(1-p)^2}({MST}_{u}-\tau_{0})\int_0^{\tau_{0}}S(u)v(u)\,\dd u.\\
			\end{aligned}
		\end{equation}
		As is apparent from a comparison of \eqref{def:sigma-i} and \eqref{def:sigma}, both limit variances coincide up to sample size-related factors if both samples are exchangeable.
	\end{proof}

\end{appendix}

 \begin{acks}[Acknowledgments]
 The authors would like to thank Joris Mooij for insightful comments regarding the causal interpretation and the decision theoretic perspective on choosing between two treatments.
 \end{acks}

%%%%%%%%%%%%%%%%%%%%%%%%%%%%%%%%%%%%%%%%%%%%%%
%% Supplementary Material, if any, should   %%
%% be provided in {supplement} environment  %%
%% with title and short description.        %%
%%%%%%%%%%%%%%%%%%%%%%%%%%%%%%%%%%%%%%%%%%%%%%
%\begin{supplement}
%\stitle{???}
%\sdescription{???.}
%\end{supplement}

%% if your bibliography is in bibtex format, uncomment commands:
\bibliographystyle{imsart-number} % Style BST file (imsart-number.bst or imsart-nameyear.bst)
\bibliography{literature}       % Bibliography file (usually '*.bib')

%% or include bibliography directly:
% \begin{thebibliography}{}
% \bibitem{b1}
% \end{thebibliography}

	\newpage
	\setcounter{section}{0}
	\renewcommand*{\thesection}{\arabic{section}}
\renewcommand*{\thesubsection}{\arabic{section}.\arabic{subsection}}
	\setcounter{page}{1}
	\renewcommand*{\theequation}{II.\arabic{equation}}
	\renewcommand*{\thetable}{II.\arabic{table}}
 \renewcommand*{\thefigure}{II.\arabic{figure}}
 
	\begin{frontmatter}
		
		% "Title of the Paper"
		\title{A two-sample comparison of mean \\ \mbox{survival times of uncured sub-populations} -- Part II: Semiparametric analyses}
		\runtitle{Mean survival of uncured patients in two samples -- Part II}
		
		% indicate corresponding author with \corref{}
		% \author{\fnms{John} \snm{Smith}\thanksref{t1}\corref{}\ead[label=e1]{smith@foo.com}\ead[label=e2,url]{www.foo.com}}
		% \thankstext{t1}{Thanks to somebody} 
		% \address{line 1\\ line 2\\ \printead{e1}\\ \printead{e2}}
		
	\author{\fnms{Dennis} \snm{Dobler${}^*$}\ead[label=e2]{d.dobler@vu.nl}}
\address{Vrije Universiteit Amsterdam \\ The Netherlands \\  \printead{e2}}
\and
\author{\fnms{Eni} \snm{Musta${}^*$}\ead[label=e1]{e.musta@uva.nl}}
\address{University of Amsterdam \\ The Netherlands \\ \printead{e1}}

		\runauthor{D.\ Dobler and E.\ Musta}

\blfootnote{${}^*$: The authors contributed equally and are given in alphabetical order.}
  
		\begin{abstract}
	The restricted mean survival time has been recommended as a useful summary measure to compare lifetimes between two groups, avoiding the assumption of proportional hazards between the groups. In the presence of cure chances, meaning that some of the subjects will never experience the event of interest, it has been illustrated in Part I that it is important to separately compare the cure chances  and the survival of the uncured. In this study, we adjust the mean survival time of the uncured  ($MST_u$)  for potential confounders, which is crucial in observational settings. For each group, we employ the widely used logistic-Cox mixture cure model and estimate the  $MST_u$ conditionally on a given covariate value. An asymptotic and a permutation-based approach have been developed for making inference on the difference of conditional $MST_u$'s between two groups. Contrarily to available results in the literature, in the simulation study we do not observe a clear advantage of the permutation method over the asymptotic one to justify its increased computational cost.  The methods are illustrated through a practical application to breast cancer data. 
		\end{abstract}
		
		\begin{keyword}
			\kwd{asymptotic statistics}
			\kwd{cure models}
		\kwd{random permutation}
  \kwd{inference}
 \kwd{right-censoring}
 \kwd{logistic-Cox mixture model}
 \end{keyword}
		
		% history:
		%\received{\smonth{1} \syear{0000}}
		
		%\tableofcontents
		
	\end{frontmatter}
	
	\section{Introduction}
	Comparing lifetimes between two groups is a common problem in survival analysis. The survival function carries all the necessary information for making such comparison.
 In practice, however, to facilitate interpretations and decision making, it is important to have a single (or at least some few) metric summary measure(s) that quantifies the difference between the two groups. Moving away from the routine use of hazard ratios, the restricted mean survival time (RMST) has been recently advocated as a better and safer alternative, which does not rely on the proportional hazards assumption \cite{royston2011use,royston2013restricted,uno2014moving,dormuth22}. 
 In observational studies, however, it is important to also account for possible confounders.
	One can then adjust for imbalances in the baseline covariates between the two groups by using regression-based methods to estimate the RMST; see \cite{conner2019adjusted,conner2021estimation,ambrogi2022analyzing} and references therein.

	In the presence of a cure fraction, some of the subjects do not experience the event of interest and are referred to as `cured'. In this context,   it was illustrated in Part I that it is more appropriate and informative to separately compare the cure fraction and the survival times of the uncured subpopulations, for example by means of the expected survival times, $MST_u$.
	Despite the clear advantage of using a nonparametric, model-free approach, in the presence of covariate information semiparametric models are usually preferred in terms of interpretability and  balance between complexity and flexibility. Several semiparametric mixture cure models have been introduced in the literature to account for the possibility of cure; see, e.g., \cite{ST2000,peng2000,li2002semi,zhang2007new,lu2004semiparametric}. Among them, the most commonly used in practice is the logistic-Cox mixture model, which considers a logistic model for the cure probability given the covariates and a Cox proportional hazards (PH) model for the survival times of the uncured subjects. For this reason, we will employ a logistic-Cox model for both groups in this paper. Note, however, that the baseline hazards for the uncured subpopulations in the two groups are in general different, leading to non-proportional hazards. Based on the maximum likelihood estimates of the components of a logistic-Cox model (computed using the \texttt{smcure} package in R), we propose an estimator for the conditional mean survival time of the uncured given the covariates. Despite this model choice, the estimation procedure and the results could be similarly extended to other semiparametric mixture cure models. 
	
	Recently, the problem of estimating the conditional mean survival time for the uncured in a one-sample context has been considered in \cite{chen2023mean}. They assume  a semiparametric proportional model for the mean (residual) life of the uncured and  propose an estimation method via inverse-probability-of-censoring weighting and estimating equations. As a consequence, their approach requires also estimation of the censoring distribution. 
		
	In addition to the estimation of the mean survival time for the uncured, we also analyze both the asymptotic and the permutation-based approach for inference on the difference between the conditional $MST_u$ of the two groups, for a fixed covariate value.  In the nonparametric setting, it has been observed that the permutation approach improves upon the asymptotic method for small sample sizes, while maintaining good behavior asymptotically \cite{horiguchi2020,wolski2020,ditzhaus2023}. 
 To the best of our knowledge, the permutation approach has not been used before for semiparametric models in a similar context of maximum likelihood-based estimators that we will pursue. One notable permutation-based inference approach in the survival literature concerns the weighted logrank test \cite{brendel2014weighted}; there, the semiparametric model arises from the form of the null hypothesis which is a cone or subspace of hazard derivatives.
 
 Given also the complexity of our model, several challenges arise from both the computational and theoretical point of view. In order to obtain results on the asymptotic validity of the permutation approach, we first derive a general Donsker-type theorem for permutation based Z-estimators.  Secondly, when fitting the model to the permuted sample, there is an issue of model mispecification. This leads to the convergence of the estimators to a minimizer of the Kullback-Leibler divergence. Thirdly, since the variance estimators need to be computed via a bootstrap procedure, combination of bootstrap and permutation becomes computationally expensive. Hence, it is of interest to investigate whether the gain in accuracy is sufficient to compensate for the increased computation cost compared to the asymptotic approach.
	
	The article is organized as follows.
	The model and notation are introduced in Section~\ref{sec:model_II}.
In Section~\ref{sec:stat_II}, we propose a semi-parametric estimator for the difference in conditional mean survival time of the uncured and derive its asymptotic distribution. Additionally, we also introduce a random permutation approach   for inference and justify its asymptotic validity. The behavior of the asymptotic and permutation based methods for finite sample sizes is evaluated through a simulation study in 
	Section~\ref{sec:simulations_II}. Practical application of the methods is illustrated through a study of breast cancer in Section~\ref{sec:application_II}.
Finally, we conclude  with a discussion in Section~\ref{sec:disc_II}.
	All proofs are contained in Appendix~\ref{sec:app_II}, while the general Donsker-type theorem for permutation based Z-estimators is given in  Appendix~\ref{sec:app_Z-est}. The R code with an implementation of our methods is available in the GitHub repository \url{https://github.com/eni-musta/MST_uncured}.
 Appendix~\ref{sec:app_R} contains a few lines of R code for accessing the breast cancer data analyzed in this paper.

		\section{Model and notation}
	\label{sec:model_II}
	We consider i.i.d.\ survival times $T_{11}, \dots, T_{1n_1}$ and  $T_{21}, \dots, T_{2n_2}$ from two independent groups $(i=1,2)$, each comprising a mixture of cured (immune to the event of interest) and uncured subjects.
 Note that the distributions of the survival times are allowed to differ between groups. For mathematical convenience, the event time of the cured subjects is set to $\infty$, signifying that the event never actually occurs.  Conversely, a finite survival time implies that subjects are susceptible and will experience the event at some point. However, because of censoring, the cure status is only partially known. Specifically, instead of the actual survival times, we observe the follow-up times $Y_{i1},\dots,Y_{in_i}$ and the censoring indicators $\Delta_{i1},\dots,\Delta_{in_i}$, where $Y_{ij}=\min\{T_{ij},C_{ij}\}$, $\Delta_{ij}=\1_{\{T_{ij}\leq C_{ij}\}}$ and $C_{ij}$ are the censoring times.
		Assume that, for each individual in both groups, we observe two covariate vectors
	$X_{ij}\in\R^p$ and $Z_{ij}\in\R^q$, $i=1,2$, $j=1,\dots, n_i$, representing the variables that affect the probability of being susceptible (incidence) and the survival of the uncured (latency). In this way, we allow for these two components of the model to be affected by different variables. However, we do not exclude situations in which the two vectors $X$ and $Z$ are exactly the same or share some components. 
	
 Using the framework of mixture cure models, the relations in \eqref{eqn:distribution_survival} of Part I now hold conditionally on the covariates. In particular, we have that the survival function of $T_{i1}$ given $X_{i1}$ and  $Z_{i1}$ is given by 
	\begin{equation}
		\label{eqn:surv_cov}
		S_i(t|x,z)=\p(T_{i1}>t|{X_{i1}=x, Z_{i1}=z})=p_i(x)+(1-p_i(x))S_{u,i}(t|z),
	\end{equation}
	where $S_{u,i}(t|z)=\p(T_{i1}>t | Z_{i1}=z, T_{i1}<\infty)$ is the conditional survival function of the susceptibles and $p_i(x)=\p(T_{i1}=\infty | X_{i1}=x)$ denotes the conditional cure probability in group $i$. Instead of independent censoring, now we assume that censoring is independent of the survival times conditionally on the covariates: $T\perp C\mid (X,Z)$.
	
	Among various modeling approaches for the incidence and the latency, the most common choice in practice is a parametric model, such as logistic regression, for the incidence and a semiparametric model, such as Cox proportional hazards, for the latency   \cite{legrand2019cure,yilmaz2013insights,stringer2016cure,wycinka2017}. The popularity of such choice  is primarily due to simplicity and interpretability, particularly when dealing with multiple covariates. We focus on this type of models and assume that 
	\[
	1-p_i(x)=\phi(\gamma_i^Tx),
	\]
	where $\phi: \R\to[0,1]$ is a known function, $\gamma_i\in \R^{p+1}$ and  $\gamma^T_i$ denotes the transpose of the vector $\gamma_i$. Here, the first component of $x$ is taken to be equal to one and the first component of $\gamma_i$ corresponds to the intercept. In particular, for the logistic model, we have 
	\begin{equation}
		\label{eqn:logistic}
		\phi(u)=\frac{e^u}{1+e^{u}}.
	\end{equation}
	One can in principle allow also for a different function $\phi$ in the two groups but for simplicity we assume that to be the same. 	
	For the latency, we assume a semiparametric model $S_{u,i}(t|z) =S_{u,i}(t|z;\beta_i,\Lambda_i) $ depending on
	a finite-dimensional parameter $\beta_i\in \mathcal \R^q$, and a function $\Lambda_i$. For example, for the Cox proportional hazards  model, we have
	\begin{equation} 
		\label{eqn:SuCox}
		S_{u,i}(t|z)
		= \exp\{-\Lambda_i(t)\exp(\beta^T_iz)\},
	\end{equation}
	where  $\Lambda_i$ is the baseline cumulative hazard in group $i$.
	
 	One challenge with mixture cure models is model identifiability, 
	i.e., ensuring that different parameter values lead to different distributions of the observed variables. General identifiability conditions for semiparametric mixture cure models were derived by \cite{Parsa}. In the particular case of the logistic-Cox model the conditions are:
	\begin{itemize}
		\item[(I1)]  for all $x$, $ 0 <\phi(\gamma_i^Tx) < 1$,
		\item[(I2)] the function $S_{u,i}$ has support $[0; \tau_{0,i}]$ for some $\tau_{0,i}<\infty$,
		\item[(I3)] $ P(C_{i1} >\tau_{0,i}|X_{i1};Z_{i1}) > 0$ for almost all $X_{i1}$ and $Z_{i,1}$,
		\item[(I4)]  the matrices $Var(X_{i1})$ and $Var(Z_{i1})$ are positive definite,
	\end{itemize}
	Condition I3 corresponds again to the assumption of sufficient follow-up. In practice, this can be evaluated based on the plateau of the Kaplan-Meier estimator and the expert (medical) knowledge. 	
	
	In the presence of covariate information, we are now interested in the difference of mean survival times of the uncured individuals among the two groups conditional on the covatiates: 
	\[
	MST_{u,1,z}-MST_{u,2,z}=\E[T_{11} \mid T_{11} < \infty,Z_{11}=z]-\E[T_{21} \mid T_{21} < \infty,Z_{21}=z].
	\] 
	Note that we use only the covariate $Z$ because that affects the survival of the uncured individuals. In combination with the conditional cure probabilities $p_i(x)$, such conditional mean survival times provide useful
summaries of the conditional survival curves.

	\section{Estimation, asymptotics, and random permutation}
 \label{sec:stat_II}
 The conditional mean survival time can be written as
	\[
	MST_{u,i,z}= \int_0^{\tau_{0,i}}S_{u,i}(u|z)\,\dd u\qquad i=1,2.
	\]
	This leads to the following estimator 
	\[
	\widehat{MST}_{u,i,z}= \int_0^{Y_{i,(m_i)}}\hat{S}_{u,i}(u|z)\,\dd u,
	\]
	where $\hat{S}_{u,i}(\cdot|z)$ is an estimate of the conditional survival function for the uncured and $ Y_{i,(m_i)}$ is the largest observed event time in group $i$. 
	
	Next we focus on the logistic-Cox mixture model, given by \eqref{eqn:logistic}-\eqref{eqn:SuCox}, and consider the plug-in estimate 
	\[
	\hat{S}_{u,i}(t|z)=\exp\left(-\hat\Lambda_i(t)e^{\hat\beta^T_iz}\right),
	\]
	where $\hat\Lambda_i$, $\hat{\beta}_i$ are the maximum likelihood estimates of $\Lambda_i$ and $\beta_i$ respectively. Maximum likelihood estimation in the logistic-Cox model was initially proposed by \cite{ST2000,peng2000} and is carried out via the EM algorithm. The procedure is implemented in the R package \texttt{smcure} \cite{cai_smcure}. In practice, the survival $\hat{S}_{u,i}(t|z)$ is forced to be equal to zero beyond the last event $Y_{i,(m_i)}$, meaning that the observations in the plateau are considered as cured. This is known as the zero-tail constraint as suggested in \cite{ST2000,taylor95} and 
 is reasonable under the assumption of sufficient follow-up: $\tau_{0,i}<\tau_i$, which follows from (I3). 
	
	The asymptotic properties of the maximum likelihood estimates $\hat\Lambda_i$, $\hat{\beta}_i$, $\hat\gamma_i$ were derived in \cite{Lu2008} under the following assumptions:
	\begin{itemize}
		\item[(A1)] The function $\Lambda_i(t)$ is strictly increasing, continuously differentiable on $[0,\tau_{0,i})$ and $\Lambda_i(\tau_{0,i}):=\lim_{t\to\tau_{0,i}}\Lambda_i(t)<\infty$.
		\item[(A2)] $\gamma_i,\beta_i$ lie in the interiors of compact sets and the covariate vectors
		$Z_{ij}$ and $X_{ij}$ have compact support: there exist $m_i >0$ such that:\\ 
  $\p (\Vert Z_{ij}\Vert < m_i \text{ and } \Vert X_{ij}\Vert <
		m_i) = 1 $.
		\item[(A3)] There exists a constant $\epsilon>0$ such that
		$\p(T_{i1} =\tau_{0,i}\mid T_{i1}<\infty,Z_{i1})>\epsilon$ with probability one. 
		\item[(A4)] $\p(Y_{i1}\geq t\mid Z_{i1},X_{i1})$ is continuous in $t \leq \tau_{0,i}$.
	\end{itemize} 
	Assumptions (A1),(A3) are formulated slightly in a different way in \cite{Lu2008} but, given the identifiability constraints (I1)-(I4), they reduce to the ones stated above. 
	In particular, assuming that the survival distribution for the uncured has a positive mass at the end point of the support (A3) is a technical condition needed to guaranty that $\Lambda_i$ stays bounded on $[0,\tau_{0,i}]$ while ensuring the identifiability of the model. In the Cox model without cure fraction, one does not encounter this problem because the support of the event times is larger than the follow-up of the study. Even though (A3) might seem not realistic, one can think of such assumption being satisfied with a very small $\epsilon$. In such case it is unlikely to observe events at $\tau_{0,i}$ as we see in real-life scenarios. If instead of the maximum likelihood estimation, one considers estimation via presmoothing as proposed in \cite{musta2020presmoothing}, this condition can be avoided at the price of additional technicalities. This is because the conditional probability of $\{T=\infty\}$ is identified beforehand by means of a nonparametric smooth estimator. As a result, when estimating $\Lambda_i$ in the second step, one could restrict to a smaller interval $[0,\tau^*]\subset[0,\tau_{0,i}]$; see the discussion in Section 5.1 of \cite{musta2020presmoothing}.

	Let $\hat{m}_z=	\widehat{MST}_{u,1,z}-\widehat{MST}_{u,2,z}$ be an estimator of $m_z=MST_{u,1,z}-MST_{u,2,z}$. 
	Using the large sample properties of the estimators $\hat\Lambda_i$ and $\hat\beta_i$ from \cite{Lu2008}, we first derive the limit distribution of the process $\sqrt{n_i}\{\hat{S}_{u,i}(\cdot|z)-S_{u,i}(\cdot|z)\}$ and then obtain the following result for the conditional mean survival time. 
	\begin{thm}
		\label{theo:MST_cov}
		Assume that the identifiability conditions (I1)-(I4) and the assumptions (A1)-(A4) are satisfied. Then, for any $z\in\mathcal{Z}$, the variable $\sqrt{n_i}(\hat{E}_{i,z}-E_{i,z})$ is asymptotically normally distributed, i.e., $$\sqrt{n_i}(\widehat{MST}_{u,i,z}-MST_{u,i,z})\xrightarrow{d}N\sim\mathcal{N}(0,\sigma_{i,z}^2)$$ as $n_i\to\infty$. The limit variance $\sigma_{i,z}^2$ is defined in \eqref{eqn:sigma_cov} in the Appendix~\ref{sec:app_II}.
	\end{thm}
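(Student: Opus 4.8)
The plan is to reduce the statement to a functional delta-method argument layered on top of the known large-sample theory for the maximum likelihood estimators of the logistic-Cox mixture model. The key observation is that $\widehat{MST}_{u,i,z}$ depends on the data only through the latency estimators $(\hat\beta_i,\hat\Lambda_i)$; the incidence estimator $\hat\gamma_i$ does not enter the integrand $\hat S_{u,i}(\cdot\,|z)=\exp(-\hat\Lambda_i(\cdot)e^{\hat\beta_i^Tz})$, although it is of course coupled with $(\hat\beta_i,\hat\Lambda_i)$ through the joint likelihood. I would therefore start from the results of \cite{Lu2008}: under (I1)--(I4) and (A1)--(A4), $\sqrt{n_i}(\hat\beta_i-\beta_i)$ converges to a mean-zero Gaussian vector in $\R^q$ and $\sqrt{n_i}(\hat\Lambda_i-\Lambda_i)$ converges to a tight mean-zero Gaussian process in $\ell^\infty[0,\tau_{0,i}]$, jointly, with covariance determined by the efficient information of the model. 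Denote this joint Gaussian limit by $(G_\beta,G_\Lambda)$.

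Second, I would lift this to the process $\sqrt{n_i}\{\hat S_{u,i}(\cdot\,|z)-S_{u,i}(\cdot\,|z)\}$. For fixed $z$, consider the map
\[
\Phi_z:\R^q\times\ell^\infty[0,\tau_{0,i}]\to\ell^\infty[0,\tau_{0,i}],\qquad \Phi_z(\beta,\Lambda)(t)=\exp\{-\Lambda(t)e^{\beta^Tz}\}.
\]
Since $\Lambda_i$ is bounded on $[0,\tau_{0,i}]$ by (A1) and the exponential is smooth, $\Phi_z$ is Hadamard-differentiable at $(\beta_i,\Lambda_i)$ tangentially to $\R^q\times C[0,\tau_{0,i}]$, with derivative
\[
\dd\Phi_z\cdot(h_\beta,h_\Lambda)(t)=-S_{u,i}(t|z)\,e^{\beta_i^Tz}\,\bigl\{h_\Lambda(t)+\Lambda_i(t)\,(z^Th_\beta)\bigr\}.
\]
Theorem 3.9.4 in \cite{VW96} then gives weak convergence of $\sqrt{n_i}\{\hat S_{u,i}(\cdot\,|z)-S_{u,i}(\cdot\,|z)\}$ in $\ell^\infty[0,\tau_{0,i}]$ to the Gaussian process $-S_{u,i}(\cdot\,|z)e^{\beta_i^Tz}\{G_\Lambda(\cdot)+\Lambda_i(\cdot)(z^TG_\beta)\}$.

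Third, I would pass to the integral. The integration functional $\theta\mapsto\int_0^{\tau_{0,i}}\theta(u)\,\dd u$ is linear and continuous, hence Hadamard-differentiable, so one more application of the delta method identifies the limit of $\sqrt{n_i}\int_0^{\tau_{0,i}}\{\hat S_{u,i}(u|z)-S_{u,i}(u|z)\}\,\dd u$ as the scalar mean-zero Gaussian
\[
N=-e^{\beta_i^Tz}\int_0^{\tau_{0,i}}S_{u,i}(u|z)\bigl\{G_\Lambda(u)+\Lambda_i(u)(z^TG_\beta)\bigr\}\,\dd u;
\]
inserting the covariance kernel of $(G_\beta,G_\Lambda)$ from \cite{Lu2008} yields exactly the variance $\sigma_{i,z}^2$ in \eqref{eqn:sigma_cov}. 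It then remains to replace the deterministic upper limit $\tau_{0,i}$ by the random $Y_{i,(m_i)}$ in $\widehat{MST}_{u,i,z}$. On $[0,Y_{i,(m_i)}]$ the zero-tail constraint leaves $\hat S_{u,i}$ unchanged, so $\widehat{MST}_{u,i,z}=\int_0^{\tau_{0,i}}\hat S_{u,i}(u|z)\,\dd u-\int_{Y_{i,(m_i)}}^{\tau_{0,i}}\hat S_{u,i}(u|z)\,\dd u$, and the positive-mass condition (A3) together with sufficient follow-up (I3) forces $Y_{i,(m_i)}=\tau_{0,i}$ with probability tending to one, so the residual integral is exactly zero asymptotically and contributes nothing on the $\sqrt{n_i}$-scale.

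I expect the behaviour at the endpoint $\tau_{0,i}$ to be the main obstacle. One must verify that the Gaussian limit of $\hat\Lambda_i$ borrowed from \cite{Lu2008} is genuinely tight up to and including $\tau_{0,i}$, which is precisely where the positive-mass assumption (A3) and the compact-support conditions (A2), (I2) are used to keep $\hat\Lambda_i(\tau_{0,i})$ bounded; and one must confirm that the jump of $S_{u,i}(\cdot\,|z)$ down to zero at $\tau_{0,i}$ and the random integration limit together leave no surviving boundary term. Once the joint convergence of $(\hat\beta_i,\hat\Lambda_i)$ and this endpoint control are in hand, the Hadamard differentiability, the two delta-method applications, and the variance identification are routine.
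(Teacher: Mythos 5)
Your proposal is correct and follows essentially the same route as the paper's proof: invoke the joint Gaussian limit of $(\hat\beta_i,\hat\Lambda_i)$ from \cite{Lu2008}, linearize $\hat S_{u,i}(\cdot|z)$ in $(\beta,\Lambda)$ (the paper does this via explicit Taylor expansions and a choice of index functions $h_{t,z}$ in Lu's indexed process, which is the same computation as your Hadamard derivative of $\Phi_z$), apply the delta method to the integration functional, and kill the boundary term $\sqrt{n_i}\{\tau_{0,i}-Y_{i,(m_i)}\}\hat S_{u,i}(\tau_{0,i})$ using the atom at $\tau_{0,i}$ from (A3) together with (I1) and (I3). Your observation that $Y_{i,(m_i)}=\tau_{0,i}$ with probability tending to one is exactly what the paper's geometric bound $[1-K\epsilon]^{n_i}\to 0$ delivers.
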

 From the independence assumption between the two groups and Theorem~\ref{theo:MST_cov},
we obtain the following result for which we define $a_n=\sqrt{n_1n_2/(n_1+n_2)}$.
	\begin{cor}
		\label{cor:MST_cov}
		Assume that $n_1/(n_1+n_2)\to\kappa\in(0,1)$ as $\min(n_1,n_2) \to \infty$. Then $a_n(\hat{m}_z-m_z)$ is asymptotically normally distributed with mean zero and variance $$\sigma^2_z=(1-\kappa)\sigma_{1,z}^2+\kappa\sigma_{2,z}^2.$$
	\end{cor}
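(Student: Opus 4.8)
The plan is to reduce the claim to the two marginal central limit theorems established in Theorem~\ref{theo:MST_cov} together with the independence of the two samples. First I would write the centered, normalized difference as
\[
a_n(\hat m_z - m_z) = \frac{a_n}{\sqrt{n_1}}\,\sqrt{n_1}(\widehat{MST}_{u,1,z} - MST_{u,1,z}) - \frac{a_n}{\sqrt{n_2}}\,\sqrt{n_2}(\widehat{MST}_{u,2,z} - MST_{u,2,z}),
\]
and record the elementary limits of the deterministic prefactors: since $n_1/(n_1+n_2) \to \kappa$ we have $a_n/\sqrt{n_1} = \sqrt{n_2/(n_1+n_2)} \to \sqrt{1-\kappa}$ and $a_n/\sqrt{n_2} = \sqrt{n_1/(n_1+n_2)} \to \sqrt{\kappa}$.

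Next I would set $U_n = \sqrt{n_1}(\widehat{MST}_{u,1,z} - MST_{u,1,z})$ and $V_n = \sqrt{n_2}(\widehat{MST}_{u,2,z} - MST_{u,2,z})$. By Theorem~\ref{theo:MST_cov}, $U_n \xrightarrow{d} N_1 \sim \mathcal N(0,\sigma_{1,z}^2)$ and $V_n \xrightarrow{d} N_2 \sim \mathcal N(0,\sigma_{2,z}^2)$. Because $U_n$ and $V_n$ are measurable functions of the two independent groups, they are independent for every $n$; hence the bivariate vector $(U_n,V_n)$ converges in distribution to $(N_1,N_2)$ with $N_1$ and $N_2$ independent. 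Combining this joint convergence with the convergence of the scalar prefactors via Slutsky's theorem yields
\[
a_n(\hat m_z - m_z) \xrightarrow{d} \sqrt{1-\kappa}\,N_1 - \sqrt{\kappa}\,N_2.
\]
The limit is a linear combination of independent centered Gaussians and is therefore itself centered Gaussian with variance $(1-\kappa)\sigma_{1,z}^2 + \kappa\,\sigma_{2,z}^2 = \sigma_z^2$, which is the assertion.

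The argument is essentially routine and mirrors the proof of Corollary~\ref{cor:MST_homog} in Part~I. The only point that genuinely requires care is the passage from the two marginal limit laws to the joint convergence of $(U_n,V_n)$; this is not automatic from convergence in distribution of the marginals alone, but is justified here precisely by the independence of the two treatment groups, which makes the joint law the product of the marginal laws at every sample size. Tracking the deterministic normalizing constants $a_n/\sqrt{n_i}$ correctly is the only other bookkeeping required, so I do not anticipate a substantive obstacle beyond invoking these two standard facts.
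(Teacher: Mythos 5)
Your proof is correct and follows exactly the route the paper indicates (the paper states the corollary as an immediate consequence of the independence of the two groups and Theorem~\ref{theo:MST_cov}, without writing out a separate proof). The decomposition with prefactors $a_n/\sqrt{n_i}$, the use of independence to upgrade the two marginal limits to joint convergence, and Slutsky's theorem are precisely the intended argument, and your remark that joint convergence is not automatic from the marginals alone but is supplied by independence is the right point of care.
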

	We restrict for simplicity to the logistic-Cox model and the maximum likelihood estimation method but the previous results can be generalized to other estimation methods or other semiparametric mixture cure models. For example, if the presmoothing approach introduced in \cite{musta2020presmoothing} is used instead of the MLE, then the asymptotic properties could be derived in the same way  using Theorem 4 in \cite{musta2020presmoothing}. We also note that because of their complicated expressions, the variances of the estimators in the semiparametric mixture cure model are estimated via a bootstrap procedure \cite{cai_smcure}. As a result, we will also use the bootstrap to estimate~$\sigma_z.$
\\[0.2cm]

 As in Part I, we would like to investigate whether the asymptotic inference can be made more reliable by means of a permutation approach. Again $\pi=(\pi_1, \dots, \pi_{n_1+n_2})$ denotes any permutation of $(1,2, \dots, n_1+n_2)$.
{Write 
  $ (Y_j, \Delta_j,X_j,Z_j), j=1,\dots, n_1+n_2$ for the pooled sample which consists of the data points of the first group $(j \leq n_1)$ and those of the second group $(j> n_1)$.}	
 Applying $\pi$ to the pooled sample leads to the permuted samples $$(Y_{\pi_1}, \Delta_{\pi_1},X_{\pi_1},Z_{\pi_1}), \dots,  
   (Y_{\pi_{n_1}}, \Delta_{\pi_{n_1}},X_{\pi_{n_1}},Z_{\pi_{n_1}}) \quad \text{and}$$ 
   $$(Y_{\pi_{n_1+1}}, \Delta_{\pi_{n_1+1}},X_{\pi_{n_1+1}},Z_{\pi_{n_1+1}}), \dots, (Y_{\pi_{n_1+n_2}}, \Delta_{\pi_{n_1+n_2}},X_{\pi_{n_1+n_2}},Z_{\pi_{n_1+n_2}}).$$
		In the special case of exchangeability, for any $z$, 
	$\hat m_z $ would have the same distribution as $\hat m^{\pi}_z = \widehat{MST}_{u,1,z}^\pi - \widehat{MST}_{u,2,z}^\pi$. Here, $\widehat{MST}_{u,i,z}^\pi$ are the estimators of the conditional mean survival times, just based on the $i$-th permuted sample.
	
Since the samples are in general not exchangeable and, as a consequence, the asymptotic variances of $\hat m_z$ and $\hat m^{\pi}_z$ cannot be assumed equal, we use their studentized version.
	We will thus focus on $\hat m^\pi_z/\hat{\sigma}^{\pi}_z$ as the permutation version of $\hat m_z/\hat \sigma_z$, where
	$\hat{\sigma}_z^{\pi2}$ is the estimated variance of  $\hat m^\pi_z$, estimated via bootstrap. 
	Because it is computationally infeasible to realize $\hat m^\pi_z/\hat{\sigma}^{\pi}_z$ for all $(n_1+n_2)!$ permutations, we will realize a relatively large number $B$ of random permutations $\pi$ and approximate the conditional distribution by the collection of the realized $\hat m^\pi_z/\hat{\sigma}^{\pi}_z$ of size $B$.
		In the following, we will discuss the asymptotic behaviour of $\hat m^\pi_z/\hat{\sigma}^{\pi}_z$ to justify the validity of the resulting inference procedures.

 One challenge that arises in this setting is that, since we are assuming a semiparametric model, the permuted samples will in general not follow the same model. Hence, when we fit the logistic-Cox model to obtain the estimates in the permuted samples, the model is misspecified.  Hence, we first show in a series of lemmas in  Appendix~\ref{sec:app_II} that the maximum likelihood estimators converge to the parameters of a logistic-Cox likelihood that minimize the Kullback-Leibler divergence from the true distribution of the pooled data. We can indeed argue that such minimizer exists and we assume that it is unique. In case of non-uniqueness, we expect that the results can be extended and, depending on the starting point of the algorithm, the estimates would converge to one of such minimizers. However, such extension is beyond the scope of the current paper. In practice, we observed that the EM algorithm converges and the limit was stable with respect to the initial point, which might indicate that the minimizer was indeed unique.

 Secondly, to obtain the asymptotic distribution of the permuted estimators, we first obtain a general Donsker-type theorem for permutation based
Z-estimators (see Appendix~\ref{sec:app_Z-est}).
That result holds in a great generality, so it would also apply to countless other two sample problems. Thus, it is of interest of its own. But let us first return to the main result about the permuted estimators in the present context:
	\begin{thm}
		\label{thm:perm_cov}
		Assume that the identifiability conditions (I1)-(I4) and (A1)-(A4) hold. Assume also that the minimizer of the KL divergence definied in \eqref{eqn:bar_parameters} in Appendix~\ref{sec:app_II} is unique. Then, for any $z\in\mathcal{Z}$, as
$\min(n_1, n_2) \to\infty$ with $n_1/(n_1 + n_2) \to\kappa\in(0, 1)$, the conditional distribution of 
$a_n\hat m_z^\pi$ given the data converges weekly in probability to the zero-mean normal distribution with variance $\sigma^{\pi2}_z$ given in \eqref{eqn:sigma_cov2} in Appendix~\ref{sec:app_II}.
	\end{thm}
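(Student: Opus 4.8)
The plan is to mirror the nonparametric permutation argument of Part~I (proof of Theorem~\ref{theo:MST_homog_perm}), replacing the Kaplan--Meier functional delta method by a linearization of the permuted maximum likelihood estimators. Write $\hat\theta_i^\pi = (\hat\gamma_i^\pi, \hat\beta_i^\pi, \hat\Lambda_i^\pi)$ for the logistic-Cox MLE based on the $i$-th permuted sample, $\hat\theta$ for the MLE based on the pooled sample, and $\bar\theta = (\bar\gamma, \bar\beta, \bar\Lambda)$ for the assumed unique minimizer of the Kullback--Leibler divergence in \eqref{eqn:bar_parameters}. The first step, established in the lemmas of Appendix~\ref{sec:app_II}, is that, conditionally on the data and in probability, both $\hat\theta_i^\pi$ and $\hat\theta$ are consistent for $\bar\theta$. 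Model misspecification enters precisely here: the permuted samples do not follow a single logistic-Cox model, so the permuted estimators target the pseudo-true value $\bar\theta$ rather than a genuine parameter.

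Next I would apply the general Donsker-type theorem for permutation-based Z-estimators from Appendix~\ref{sec:app_Z-est}. Because each $\hat\theta_i^\pi$ solves the (misspecified) likelihood score equations on the $i$-th permuted sample, that result yields the joint conditional weak convergence in probability of $a_n(\hat\theta_1^\pi - \hat\theta, \hat\theta_2^\pi - \hat\theta)$ to a centered Gaussian limit. As in Part~I, the permutation couples the two groups: the two marginal limits are driven by the same Gaussian process and carry the opposite-sign weights $(1-\kappa)$ and $-\kappa$. The theorem supplies the linearization of $a_n(\hat\theta_i^\pi - \hat\theta)$ as the inverse of the (misspecified) information operator applied to the permuted empirical score process evaluated near $\bar\theta$.

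The remaining steps are functional calculus. The map $(\beta, \Lambda) \mapsto S_u(\cdot \mid z) = \exp(-\Lambda(\cdot)\, e^{\beta^T z})$ is Hadamard differentiable, as is the integral functional $\theta \mapsto \int_0^{\tau_{0}} S_u(u \mid z)\,\dd u$ already used to prove Theorem~\ref{theo:MST_cov}; under sufficient follow-up the random upper limit $Y_{i,(m_i)}^\pi$ converges to the support endpoint, exactly as in Part~I. Applying the conditional functional delta method (Theorem~3.9.11 in \cite{VW96}) to this composition transfers the Gaussian limit to $a_n(\widehat{MST}_{u,1,z}^\pi - \widehat{MST}_{u,z}, \widehat{MST}_{u,2,z}^\pi - \widehat{MST}_{u,z})$, where $\widehat{MST}_{u,z}$ denotes the pooled-sample estimate. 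Forming the difference $\hat m_z^\pi$ cancels the common centering, and the $(1-\kappa, -\kappa)$ structure produces a zero-mean normal limit; its variance is read off as the covariance combination of the two coordinates and identified with $\sigma^{\pi2}_z$ in \eqref{eqn:sigma_cov2}.

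The main obstacle is the interplay of misspecification with the infinite-dimensional nuisance parameter $\Lambda$ inside a permutation central limit theorem. Classical Z-estimator asymptotics such as \cite{Lu2008} presume both a correctly specified model and i.i.d.\ sampling, whereas here the permuted data are sampled without replacement from the pooled sample and satisfy only the pseudo-true score equations at $\bar\theta$. Proving the permutation Donsker theorem of Appendix~\ref{sec:app_Z-est} therefore requires a conditional central limit theorem for permutation empirical processes together with uniform control of the score and information operators in a neighborhood of $\bar\theta$, so that the linearization stays valid along the permuted paths. Once this machinery is available, the delta-method steps reduce to the computations already carried out for Theorem~\ref{theo:MST_cov}.
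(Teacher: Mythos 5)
Your proposal follows essentially the same route as the paper's proof: consistency of the pooled and permuted estimators for the unique KL minimizer (Lemmas 2--4 of Appendix A), the general permutation $Z$-estimator Donsker theorem of Appendix B applied to the score equations centred at the pooled estimator, and then the functional delta method (linearization of $(\beta,\Lambda)\mapsto S_u(\cdot\mid z)$, Hadamard differentiability of the integral functional, negligibility of the random upper integration limit) with the difference of the two perfectly negatively correlated coordinates yielding the zero-mean normal limit with variance \eqref{eqn:sigma_cov2}. The only cosmetic difference is that the paper carries out the linearization of the conditional survival function via explicit Taylor expansions with suitably chosen index functions $h_{t,z}$ rather than phrasing it as Hadamard differentiability, but the content is identical.
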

	\begin{cor}
	\label{cor:MST_cov_perm}
	Under the assumptions of the previous theorem,  for any $z\in\mathcal{Z}$, as $\min(n_1, n_2)\to\infty$ with $0 < \liminf n_1/(n_1+n_2) \leq \limsup n_1/(n_1+n_2) < 1$, the (conditional) distributions of 
  $a_n\hat m_z^\pi/\hat \sigma^{\pi}_z$ (given the data) and 
  $a_n(\hat m_z - m_z)/\hat \sigma_z$ converge weakly (in probability) to the same limit distribution which is standard normal.
	\end{cor}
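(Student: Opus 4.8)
The plan is to derive the corollary as a studentized consequence of the two limit theorems already established: Corollary~\ref{cor:MST_cov} for the unconditional statistic $a_n(\hat m_z - m_z)$ and Theorem~\ref{thm:perm_cov} for the conditional distribution of the permuted statistic $a_n \hat m_z^\pi$. Both numerators are already shown to be asymptotically centered normal, with variances $\sigma_z^2$ and $\sigma_z^{\pi2}$ respectively, so the only additional ingredient needed is the consistency of the corresponding variance estimators. Two applications of Slutsky's theorem---one in the ordinary sense and one in a conditional sense---then complete the argument and identify both limits as the standard normal distribution.

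First I would handle the unconditional statistic. By Corollary~\ref{cor:MST_cov}, $a_n(\hat m_z - m_z) \xrightarrow{d} \mathcal{N}(0,\sigma_z^2)$, so it remains to show $\hat\sigma_z^2 \xrightarrow{\p} \sigma_z^2$. Because the bootstrap is carried out separately within each group and the two groups are independent, this reduces to the group-wise statements $\hat\sigma_{i,z}^2 \xrightarrow{\p} \sigma_{i,z}^2$, $i=1,2$. These follow from the validity of the bootstrap for the maximum likelihood estimators $(\hat\beta_i,\hat\Lambda_i)$ in the correctly specified logistic-Cox model, combined with the Hadamard-differentiability of the map $(\beta_i,\Lambda_i)\mapsto \widehat{MST}_{u,i,z}$ underlying Theorem~\ref{theo:MST_cov}: the bootstrapped plug-in estimator inherits the same asymptotic variance, whence the consistency of its bootstrap variance estimator. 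Slutsky's theorem then gives $a_n(\hat m_z - m_z)/\hat\sigma_z \xrightarrow{d} \mathcal{N}(0,1)$.

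Next I would handle the conditional statistic. Theorem~\ref{thm:perm_cov} already provides that, conditionally on the data, $a_n \hat m_z^\pi$ converges weakly in probability to $\mathcal{N}(0,\sigma_z^{\pi2})$; the remaining step is to show that the permutation variance estimator $\hat\sigma_z^{\pi2}$, again obtained by a bootstrap but now applied to the permuted samples, satisfies $\hat\sigma_z^{\pi2} \xrightarrow{\p} \sigma_z^{\pi2}$. Using the series of lemmas in Appendix~\ref{sec:app_II} that identify the permuted maximum likelihood estimators with the Kullback--Leibler minimizer defined in \eqref{eqn:bar_parameters}, I would argue that the bootstrap distribution concentrates around this pseudo-true parameter and, through the same Hadamard-differentiability argument, consistently recovers the conditional asymptotic variance $\sigma_z^{\pi2}$. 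A conditional version of Slutsky's lemma then yields that $a_n \hat m_z^\pi/\hat\sigma_z^\pi$ converges, conditionally on the data and in probability, to $\mathcal{N}(0,1)$. Since both studentized quantities share the standard normal as their weak limit, the two limits coincide, which is the assertion of the corollary. The relaxation from a convergent ratio $n_1/(n_1+n_2)\to\kappa$ to the two-sided bound $0<\liminf n_1/(n_1+n_2)\le \limsup n_1/(n_1+n_2)<1$ is handled by a subsequence argument, using that the studentized limit is standard normal regardless of the value of $\kappa$.

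The step I expect to be the main obstacle is the consistency of the permutation variance estimator $\hat\sigma_z^{\pi2}$ under misspecification: one must verify that the bootstrap applied to a misspecified logistic-Cox fit of the permuted pooled sample still recovers the correct sandwich-type variance of the limiting Kullback--Leibler-projected estimator, and that this holds uniformly enough for the conditional Slutsky step to go through in probability. By contrast, the correctly specified case for $\hat\sigma_z^2$ is comparatively routine once the bootstrap validity for $(\hat\beta_i,\hat\Lambda_i)$ is in hand.
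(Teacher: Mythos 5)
Your proposal matches the paper's (implicit) route exactly: the paper gives no written proof of this corollary, intending it to follow from Theorem~\ref{thm:perm_cov} and Corollary~\ref{cor:MST_cov} together with consistency of the bootstrap-based variance estimators and a (conditional) Slutsky argument, which is precisely what you lay out. The one delicate point you correctly flag---consistency of $\hat\sigma_z^{\pi2}$ when the bootstrap is applied to a misspecified logistic-Cox fit of the permuted samples---is also left unaddressed in the paper itself, so your sketch is, if anything, more explicit about where the remaining work lies.
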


	\section{Simulation study}
 \label{sec:simulations_II}
	In this section, we investigate the practical performance of the permutation approach and of the asymptotic method when comparing mean survival times for the uncured sub-population in a two-sample problem. We consider two samples of size 200 and 100, respectively, from logistic-Cox mixture cure models. Note that, in practice, semiparametric cure models are usually not used for sample sizes much smaller than these because of their complexity (more parameters need to be estimated compared to standard Cox model for example) and the need to observe a long plateau with a considerable amount of censored observations (as a confirmation of the sufficiently long follow-up assumption). Since the permutation approach is computationally intensive and asymptotically we expect the behavior of the two methods to be more similar, we also did not consider larger sample sizes. Instead, we focus on three different scenarios as described below by varying the distributions of the uncured subjects, the cure proportions and the censoring rates. For simplicity, we also consider the same covariates in the incidence and latency components, i.e., $X=Z$. 

 \section*{Simulation settings}
 
	\textit{Setting 1.} 
	Both samples are generated from the logistic-Cox mixture cure model with Weibull baseline distribution with shape parameter $0.75$ and scale parameters $1.5$ and $2$, respectively.  The survival times of the uncured subjects are truncated at $\tau_{0,i}$, $i=1,2$ equal to the $99\%$ quantile of the corresponding baseline Weibull distribution in order to have finite supports. 
	We consider two independent covariates $Z_1$ and $Z_2$, which affect both the cure probability and the survival of the uncured. In the first sample,  $Z_1\sim N(0,1)$, $Z_2\sim Bern(0.4)$ while in the second sample $Z_1\sim N(1,1)$, $Z_2\sim Bernoulli(0.6)$. The regression coefficients are $\gamma_1=(0,0.5,0.8)$, $\beta_1=(0.3,0.5)$, $\gamma_2=(0.1,1,0.6)$, $\beta_2=(0.3+\log(0.75),0.5)$. This corresponds to having around $43\%$ and $24\%$ cured subjects in each sample.  The censoring times are generated independently of the other variables from exponential distributions with parameters $0.4$ and $0.2$, respectively. They are truncated at $\tau_i=\tau_{0,i}+2$, $i=1,2$ to reflect the limited length of studies in practice. The censoring rate in sample 1 is $52\%$, while in sample 2 it is $28\%$. In both cases, we have around $15\%$ of the observations in the plateau. In this setting, the covariate distributions, the cure and censoring rates, and the survival distributions of the uncured are different among the two samples. However, depending on the covariate values, the conditional mean survival times of the uncured can be the same, i.e., $m_z=0$. We consider a range of possible covariates, see Table~\ref{tab:z}, including some extreme and unlikely values in order to get more different mean survival times between the two groups. \\[0.2cm]
	\textit{Setting 2.} 	Both samples are generated from logistic-Cox mixture cure models with Gompertz baseline distribution with shape parameter $1$ and rate parameters $0.1$, $0.3$, respectively. The survival times of the uncured subjects are truncated at $\tau_{0,i}$, $i=1,2$, equal to the $99\%$ quantiles of the corresponding baseline distributions in order to have finite supports. We consider two independent covariates, $Z_1\sim N(0,1)$ and $Z_2\sim Unif(-1,1)$ with the same distribution in both samples. The regression coefficients are $\gamma_1=\gamma_2=(0.8,-1,1)$, $\beta_1=(-0.6,0.5)$, $\gamma_2=(0.1,1,0.6)$, $\beta_2=(-0.05,0.4)$. This corresponds to having around $35\%$ cured subjects in each sample. The censoring times are generated independently of the other variables from exponential distributions with parameters, $0.1$ and $0.2$, respectively. They are truncated at $\tau_i=\tau_{0,i}+2$, $i=1,2$, to reflect the limited length of studies in practice. The censoring rate in sample 1 is $46\%$, while in sample 2 it is $48\%$. In both cases, we have around $20\%$ of the observations in the plateau. In this setting, the covariate distributions, the cure rates, and the censoring rates are the same for both samples. The survival distributions of the uncured are different but again, for certain values of the covariates, the conditional mean survival times of the uncured are the same. 	We consider different covariate values as in Table~\ref{tab:z}. In particular, $z_8$ is a very extreme and unlikely value but it was considered in order to have a case with larger negative value for $m_z$.\\[0.2cm]
	\textit{Setting 3.} 	
	Both samples  are generated from the same distribution as for sample~1 in Setting~1. This means that the two samples are exchangeable and for any covariate value we have $m_z=0$. 
	\begin{table}[h]
		\begin{tabular}{c|ccccccc}
			\hline\noalign{\smallskip}
			Setting 1 & $z_1=(0,1)$& & $z_2=(-1,0)$ && $z_3=(1,0)$ && $z_4=(1,1)$ \\
			& $m_{1}=0.11$&& $m_{2}=0.5$&& $m_{3}=0$&& $m_4=0$\\
			\cline{2-8}\noalign{\smallskip}
			& $z_5=(2,1)$ &&$z_6=(4,0)$ && $z_7=(-4,0)$ & & $z_8=(-3,1)$\\
			& $m_5=-0.07$ &&$m_6=-0.3$&& $m_7=1.68$&&$m_8=0.83$\\
			\hline\noalign{\smallskip}
			Setting 2 & $z_1=(-2,0)$&& $z_2=(-1.85,0.8)$&& $z_3=(-2.16,-0.8)$&& $z_4=(0,0)$\\
			&$m_{1}=0$&& $m_{2}=0$&&$m_{3}=0$&& $m_4=0.79$\\	
			\cline{2-8} \noalign{\smallskip}
			& $z_5=(1,0.5)$&& $z_6=(-1,-0.5)$&& $z_7=(-3,0.5)$&& $z_8=(-6,0)$\\
			& $m_5=1.16$& &$m_6=0.43$& &$m_7=-0.31$&& $m_8=-0.82$\\
			\cline{2-8}\noalign{\smallskip}
			& $z_9=(2,0)$&& $z_{10}=(-1.5,0)$&& $z_{11}=(-2.5,0)$&&\\
			&$m_{9}=1.65$&&$m_{10}=0.18$&& $m_{11}=-0.16$&&\\
			\hline	
		\end{tabular}
		\caption{\label{tab:z}Covariate values $z$ and corresponding difference in conditional mean survival times for the uncured $m_z$ for Settings 1 and 2.}
	\end{table}
	
\section*{Simulation results}
	For each setting and covariate value, we construct $1-\alpha=95\%$ confidence intervals for $m_z$ based on the asymptotic and the permutation approach 
	\[
	I_z^*=[\hat{m}_z\mp q_{1-\alpha/2}\hat\sigma_{z}/a_n],\qquad I_z^\pi=[\hat{m}_z-q^\pi_{1-\alpha/2},\hat{m}_z- q^\pi_{\alpha/2}\hat\sigma_{z}]
	\]
	where $\hat{m}_z$ is computed as in Section \ref{sec:stat_II}, $\hat\sigma_{z}^2$ is the variance of $a_n\hat{m}_z$ estimated via the bootstrap,  $q_{1-\alpha/2}$ denotes the quantile of the standard normal distribution and $q^\pi_{1-\alpha/2}$ is the quantile of the distribution of $\hat{m}_z^\pi/\hat\sigma_z^\pi$.
 Due to the computational cost, 
 we use 100 bootstrap samples 
 combined with 
 $500$ random permutation samples. This procedure was repeated 1,000 times. The lengths and coverage probabilities of the confidence intervals are given in Tables \ref{tab:conf_int}-\ref{tab:conf_int3}. The coverage rates
closest to 95\% among both types of confidence intervals is printed in bold-type. For the exchangeable Setting~3, since the permutation confidence intervals are exact, we only provide the results of the asymptotic approach. 
	\begin{table}[h]
		\centering
		\begin{tabular}{c|cc|cc|cc|cc}
			&\multicolumn{2}{c|}{$z_1$}& \multicolumn{2}{c|}{$z_2$}& \multicolumn{2}{c}{$z_3$}&\multicolumn{2}{c}{$z_4$}\\
			&M1 & M2 &M1 & M2&M1 & M2  &M1 & M2 \\
   \hline
   \hline
			L &0.498&0.476&1.294&1.385&0.703&0.716&0.372&0.353\\
			CP &96.5&\textbf{95.3}&\textbf{90.6}&89.8&94.5&\textbf{95.5}&97.4&\textbf{96.3}\\
			&&&&&&&&\\
			&\multicolumn{2}{c|}{$z_5$}& \multicolumn{2}{c|}{$z_6$}& \multicolumn{2}{c|}{$z_7$}&\multicolumn{2}{c}{$z_8$}\\
			&M1 & M2  &M1 & M2 &M1 & M2 &M1 & M2 \\
     \hline
   \hline
			L &0.429&0.393&1.168&1.012&2.606&3.119&1.846&1.925\\
			CP &98.3&\textbf{96.4}&\textbf{95.6}&91.4&68.6&\textbf{72.5}&\textbf{85.9}&85.6\\
		\end{tabular}
		\caption{	\label{tab:conf_int}Coverage probabilities (CP) in $\%$ and length (L) of 95\% confidence intervals using asymptotic approach (M1) and the permutation approach (M2)  for different choices of $z$ in Setting 1.}
	\end{table}
	\begin{table}[h]
		\centering
		\begin{tabular}{c|cc|cc|cc|cc}
			&\multicolumn{2}{c|}{$z_1$}& \multicolumn{2}{c|}{$z_2$}& \multicolumn{2}{c}{$z_3$}&\multicolumn{2}{c}{$z_4$}\\
			&M1 & M2 &M1 & M2&M1 & M2  &M1 & M2 \\
     \hline
   \hline
			L &0.970&0.952&1.003&0.985&1.308&1.291&0.641&0.64\\
			CP &\textbf{93.9}&93.3&\textbf{92.9}&92.8&\textbf{94.1}&93.4&\textbf{97.0}&\textbf{97.0}\\
			&&&&&&&&\\
			&\multicolumn{2}{c|}{$z_5$}& \multicolumn{2}{c|}{$z_6$}& \multicolumn{2}{c|}{$z_7$}&\multicolumn{2}{c}{$z_8$}\\
			&M1 & M2  &M1 & M2 &M1 & M2 &M1 & M2 \\
     \hline
   \hline
			L &1.038&1.060&0.851&0.845&1.311&1.263&2.097&1.669\\
			CP &\textbf{97.1}&97.3&\textbf{94.3}&94.2&\textbf{92.7}&91.9&\textbf{89.9}&81.5\\
			&&&&&&&&\\
			&\multicolumn{2}{c|}{$z_9$}& \multicolumn{2}{c|}{$z_{10}$}& \multicolumn{2}{c|}{$z_{11}$}&\multicolumn{2}{c}{}\\
			&M1 & M2  &M1 & M2 &M1 & M2 & & \\
     \hline
   \hline
			L &1.526&1.571&0.785&0.775&1.158&1.130&&\\
			CP &98.1&\textbf{97.5}&\textbf{94.0}&93.3&\textbf{93.2}&93.1&&\\
		\end{tabular}
		\caption{	\label{tab:conf_int2}Coverage probabilities (CP) in $\%$ and length (L) of 95\% confidence intervals using the asymptotic approach (M1) and the permutation approach (M2)   for different choices of $z$ in Setting 2.}
	\end{table}
	\begin{table}[h]
		\centering
		\begin{tabular}{c|c|c|c|c|c|c|c|c}
			&	$z_1$& $z_2$&$z_3$	&{$z_4$}& {$z_5$}&
			{$z_6$}& {$z_7$}& $z_8$\\
     \hline
   \hline
			L &0.707&0.795&0.974&0.587&0.680&1.338&3.512&2.726\\
			CP &94.7&91.0&95.3&97.0&98.6&99.0&83.5&90.1\\
		\end{tabular}
		\caption{	\label{tab:conf_int3}Coverage probabilities (CP) in $\%$ and length (L) of 95\% confidence intervals using the asymptotic method for different choices of $z$ in Setting 3.}
	\end{table}
	
	We observe that the coverage of both confidence intervals is very low for some covariate values. That happens mainly when $m_z$ is large in absolute value (either  positive or negative depending on the setting). This seems to be because, for certain covariate values, the errors that we make in the estimation of the coefficients and baseline survival get amplified when computing the survival function conditional on $z$, resulting in a biased estimate for $m_z$. Much larger sample sizes would be needed to get a good estimate of $m_z$ for such covariates~$z$. In the other cases, the coverage probabilities are close to the nominal value and the two methods are comparable. For some $z$, the permutation approach does slightly better than the asymptotic one, but vice versa for other choices of~$z$.
	
	Next, we consider testing the hypothesis $H_0: m_z=0$ against $H_1: m_z\neq 0$ at level $\alpha=5\%$. Again 
	the variance of $\hat{m}_z$ is estimated via the bootstrap with 100 bootstrap samples and the quantiles of $\hat{m}_z/\hat\sigma_z$ are estimated via $500$  permutation samples. The procedure was repeated 1,000 times. 
	In  Tables \ref{tab:test}-\ref{tab:test3}, we report the percentages of the cases in which $H_0$ was  rejected.
	
	\begin{table}[h]
		\centering
		\begin{tabular}{c|cc|cc|cc|cc}
			&\multicolumn{2}{c|}{$z_1$}& \multicolumn{2}{c|}{$z_2$}&
			\multicolumn{2}{c|}{$z_3$}& \multicolumn{2}{c}{$z_4$}\\
			&M1 & M2  &M1 & M2  &M1 & M2 &M1 & M2 \\
     \hline
   \hline
			Rejection rate &14.1&11.8&22.9&15.2&5.5&4.5&2.6&3.7\\
			&&&&&&&&\\
			&\multicolumn{2}{c|}{$z_5$}& \multicolumn{2}{c|}{$z_6$}&
			\multicolumn{2}{c|}{$z_7$}& \multicolumn{2}{c}{$z_8$}\\
			&M1 & M2  &M1 & M2  &M1 & M2 &M1 & M2 \\
     \hline
   \hline
			Rejection rate &6.9&13.6&8.5&17.8&33.6&19.5&26.7&22.1\\
		\end{tabular}
		\caption{	\label{tab:test}Rejection rates of $H_0$ in $\%$ for the asymptotic approach (M1) and the permutation approach (M2) and different choices of $z$ in Setting 1.
 {$H_0$ is true only for $z_3$ and $z_4$.}}
	\end{table}
	
	\begin{table}[h]
		\centering
		\begin{tabular}{c|cc|cc|cc|cc}
			&\multicolumn{2}{c}{$z_1$}& \multicolumn{2}{c}{$z_2$}&
			\multicolumn{2}{c|}{$z_3$}& \multicolumn{2}{c}{$z_4$}\\
			&M1 & M2  &M1 & M2  &M1 & M2 &M1 & M2 \\
     \hline
   \hline
			Rejection rate &6.1&6.7&7.1&7.2&5.1&6.6&99.9&100.0\\
			&&&&&&&&\\
			&\multicolumn{2}{c|}{$z_5$}& \multicolumn{2}{c|}{$z_6$}&
			\multicolumn{2}{c|}{$z_7$}& \multicolumn{2}{c}{$z_8$}\\
			&M1 & M2  &M1 & M2  &M1 & M2 &M1 & M2 \\
     \hline
   \hline
			Rejection rate &99.5&99.3&48.3&49.3&15.4&19.9&30.6&52.7\\
			&&&&&&&&\\
			&\multicolumn{2}{c|}{$z_9$}& \multicolumn{2}{c|}{$z_{10}$}& \multicolumn{2}{c|}{$z_{11}$}&\multicolumn{2}{c}{}\\
			&M1 & M2  &M1 & M2 &M1 & M2 & & \\
     \hline
   \hline
			Rejection rate &99.2&99.0&14.1&15.1&9.7&11.8&&\\
		\end{tabular}
		\caption{\label{tab:test2}Rejection rates of $H_0$ in $\%$ for the asymptotic approach (M1) and the permutation approach (M2)  and different choices of $z$ in Setting 2.
   {$H_0$ is true only for $z_1$, $z_2$, and $z_3$.}}
	\end{table}
	\begin{table}[h]
		\centering
		\begin{tabular}{c|c|c|c|c|c|c|c|c}
			&	$z_1$& $z_2$&$z_3$	&{$z_4$}& {$z_5$}&
			{$z_6$}& {$z_7$}& $z_8$\\
     \hline
   \hline
			Rejection rate &5.3&9.0&4.7&3.0&1.4&1.0&16.5&9.9\\
		\end{tabular}
		\caption{	\label{tab:test3}Rejection rates of $H_0$ in $\%$ for the asymptotic method and different choices of $z$ in Setting~3.
   {$H_0$ is true for all choices of $z$.}}
	\end{table}
	
	In Settings 1 and 2, the levels of the test seem to be close to the nominal level and the power is larger when $|m_z|$ is larger, even though it does not only depend on $|m_z|$ but also on the sign of $m_z$ (deviations in conditional mean survival times might be easier to detect in one direction compared to the other).  
	In Setting~1, the asymptotic method has more power when $m>0$ ($m_1, m_2,m_7,m_8$),  while the permutation approach has more power when $m<0$ ($m_5, m_6$). 	In Setting 2, we observe that the results for both methods are comparable when $m>0$ but the permutation approach  has  more power when $m<0$.  For the exchangeable setting, the level of the asymptotic test is larger than the nominal value for some of the covariate values (the ones for which the coverage probabilities were anti-conservative; see above).

	Overall, we conclude that, unless the two samples are exchangeable, there is no clear advantage of using the permutation approach to justify its much higher computational cost. This is different from what is observed previously in the literature and it might be related to the fact that the logistic-Cox model is misspecified in the permutation samples. Computationally, the EM algorithm still converges and is stable with respect to the initial estimates. This suggest that the problem should not be about the existence of a unique maximizer of the likelihood for the misspecified model.

	\section{Application}
	\label{sec:application_II}
	In this section, we illustrate the methods developed in this paper by analyzing a data set about breast cancer.
    In Appendix~\ref{sec:app_R}, we provide the R code for accessing this freely available data set.
	The data come from an observational study that included 286 lymph-node-negative breast cancer patients collected between 1980 and 1995. 
	Thereof, 209 patients were oestrogen-receptor-positive (ER+) and 77 were ER-negative (ER-).
	These two will later form the subgroups to be analyzed in a two-sample inference problem.
	As additional covariates, we consider the patients' age (ranging from 26 to 83 with a median of 52 years) and a tumour size score which is an integer number between 1 and 4.
	We refer to \cite{wang05} for a more complete description of the study and other specifics of the dataset. 
	Additionally, \cite{amico2019} compared this dataset in the light of two competing models and corresponding statistical methods: a Cox-logistic cure model versus a Single-Index/Cox model.
	
	We, on the other hand, do not model the ER-status semiparametrically but nonparametrically by means of two subgroups.
	Our aim is to conduct a regression analysis to investigate differences in disease progression expectations for ER+/- patients while taking the covariates tumour size (ordinal) and age into account.
	The outcomes of this two-sample analysis could be used to justify why the two groups should not be pooled, and how or how not to model the ER-status semiparametrically.
	These questions are relevant if one wishes to make predictions, e.g., for the remaining expected lifetime of a patient.
	
	From a technical point of view, we consider the composite endpoint of relapse-free survival (measured in months), which here means that deaths and the occurrence of distant metastases are combined into one event of interest.
	We excluded those 8 patients from our analysis who had a tumour size  exceeding 2.
	This results in two samples of sizes $n_1=203$ (ER+) and $n_2= 75$ (ER-).
	
	Let us briefly summarize the data: in the ER+ subgroup, about 55\% have a tumour size score of 1, as opposed to 47\% in the ER- subgroup.
	The age distributions in the four subgroups (ER+/-, tumour size score 1/2) are generally similar (rather symmetric, no outliers), although the patients with ER- and smaller tumour sizes exhibit a smaller dispersion in age; see Figure~\ref{fig:box}.
	For both groups, the latest uncensored events were observed after 80 and 48 months, respectively.
	The censoring rates amount to 62\% and 64\%, respectively, and the majority of censorings occurred in the plateau.
	Thus, there is sufficient follow-up.
	
	\begin{figure}[ht]
		\centering
		\includegraphics[width=0.8\textwidth]{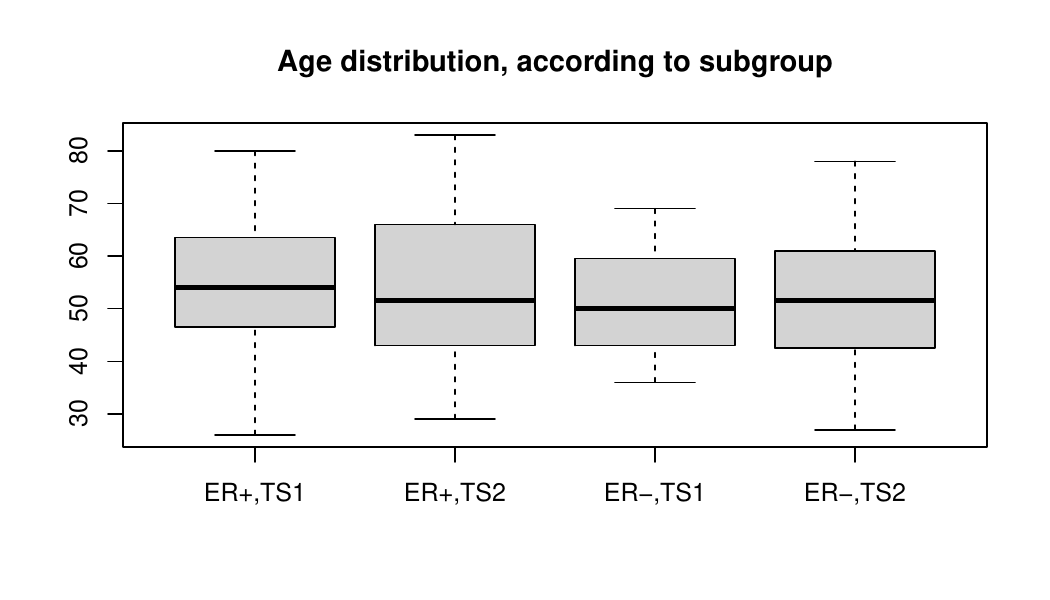}
		\caption{Boxplots summarizing the age distributions in the four relevant subgroups. TS is short for tumour size.}
		\label{fig:box}
	\end{figure}
	
	Figure~\ref{fig:kmes} shows the nonparametric Kaplan-Meier estimates for relapse-free survival for the subsamples of ER+ and ER- patients. 
	These two curves are crossing twice: once, but insignificantly, soon after the time origin, and once again after the last observed event in the ER- subgroup.
	These crossings underline that the classical proportional hazards model \citep{cox1972} might not be appropriate for a combined modeling of all these data within a single, extended Cox-logistic cure model:
	such a model would contradict crossing Kaplan-Meier curves as seen in Figure~\ref{fig:kmes} (after rescaling both curves to exhibit the same cure rate). 
	
	\begin{figure}[ht]
		\centering
		\includegraphics[width=0.8\textwidth]{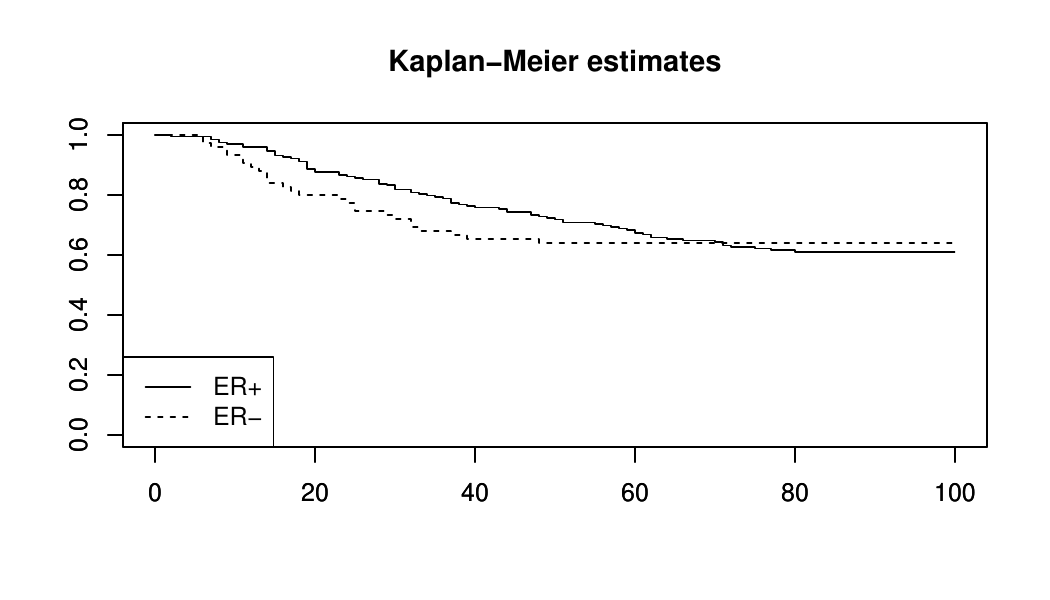}
		\caption{Nonparametric Kaplan-Meier curves for the ER+ and ER- subgroups (time in months).}
		\label{fig:kmes}
	\end{figure}
	
	We begin our inferential data analysis by observing a non-significant difference in cure rates between both groups (ER+/-): the nonparametric test of \cite{klein2007analyzing} resulted in a $p$-value of $p=0.659$.
	This motivated us to check for differences in the residual mean survival times for uncured patients.
	
	Let us now turn to the hypothesis tests for equal differences in mean survival times, for $H_0: m=0 $ versus $H_a: m\neq 0$, at level $\alpha$ and $(1-\alpha)$-confidence intervals for the difference $m$.
	Throughout, we have chosen $\alpha=5\%$, 500 permutations, and 100 bootstrap iterations.
	Starting with our nonparametric test of equal residual mean survival times for the uncured, we first of all obtained a point estimate of $\hat m=16$.
	This resulted in $p$-values of $p=0$ for both the asymptotic and the permutation approach.
	The corresponding 95\%-confidence intervals are $[10,22]$ 
	and $[9,22]$,  
	respectively.

	Our present approach is to compare the outcomes of two independently fitted Cox-logistic models for both sample groups ER+/-.
	We have first checked the proportional hazards assumption for the latency parts of the model by means of the test proposed in \cite{peng2017residual}.
	The resulting $p$-values are 1 and 0.96 respectively for the covariates age and tumour size in sample 1; 1 and 0.81 respectively for the covariates age and tumour size in sample 2.
	Thus, there is no reason to reject the proportional hazards assumption.
	We would also like to point out that we do not rely on the proportional hazards assumption for ER status, which is nonparametrically modeled in terms of two separate subgroups.

	Thus continuing with the two independently fitted Cox-logistic models, Table~\ref{tab:param} contains all point estimates of the parametric model components.
    Note that none of the covariates was found to have a significant influence on any of the two models.
	From the point estimates, we also see that for ER+ patients age seems protective for both, incidence ($\gamma_1 <0$) and latency ($\beta_1<0$), and  generally harmful for ER- ($\gamma_1, \beta_1> 0$).
	Also, a bigger tumour size is generally harmful for ER+ patients ($\gamma_2, \beta_2>0$) but, for ER- patients, it has a protective influence on incidence ($\gamma_2 <0$) although a harmful influence on the latency ($\beta_2>0$). 
    Testing whether there is a significant difference of any of these parameters from 0 could be easily achieved by studentizing the parameter point estimates and comparing the results with quantiles from the standard normal distribution or those from a corresponding permutation version of the studentized parameter estimates. Since this was not our main focus, these results are not shown.
	
	\begin{table}[ht]
		\centering
		\begin{tabular}{c|ccccc}
			ER & $\gamma_0$ & $\gamma_1$ & $\gamma_2$ & $\beta_1$ & $\beta_2$ \\ \hline
			+ & $-0.838(0.438)$ & $-0.018(0.013)$ & $0.272(0.296)$ & $-0.014(0.011)$ & $0.482(0.260)$ \\
            & $p=0.056$ & $p=0.158$ & $p=0.359$ & $p=0.206$ & $p=0.063$ \\\hline
			- & $0.303(0.754)$ & $0.011(0.021)$ & $-0.568(0.505)$ & $0.010(0.026)$ & $0.437(0.505)$ \\
            & $p=0.688$ & $p=0.591$ & $p=0.261$ & $p=0.708$ & $p=0.387$ 
		\end{tabular}
		\caption{Point estimates, standard errors, and $p$-values of the parametric model components rounded to three decimal places; values obtained from the \texttt{smcure} package in R.}
		\label{tab:param}
	\end{table}

	Just like for the nonparametric test, the results from the semiparametric model were unambiguous (see Table~\ref{tab:breast_p_CI}): 
	for all considered covariate combinations, uncured patients with ER+ have, at level $\alpha=5\%$, a significantly larger expected mean survival time than uncured patients with ER-.
	This difference seems to grow with progressing age, especially for patients with tumour size 1.
	This is in line with the parameter estimates related to age and latency.
	The influence of the tumour size on the mean survival difference is not that obvious.
	All in all, both the asymptotic and the permutation-based method resulted in very similar outcomes.

	\begin{table}[ht]
		\centering
		\begin{tabular}{ll|c|c|cc}
			&&  point estimate & hypothesis test & \multicolumn{2}{|c}{confidence interval} \\
			$X=Z$ & method & $m_z$ & $p$-value & lower  & upper   \\ \hline 
			TS1, $-10$y & asymptotic & 15 & $0.019$ & $2$ & $27$ \\
			& permutation & 15 & $0.020$ & $2$ & $27$ \\ \hline
			TS1, mean age & asymptotic& 18 & $<0.001$ & $9$ & $27$ \\
			& permutation & 18 & $<0.001$ & $9$ & $27$ \\ \hline
			TS1, $+10$y & asymptotic & 22 & $<0.001$ & $13$ & $31$ \\
			& permutation & 22 &  $<0.001$ & $13$ & $31$  \\ \hline\hline
			TS2, $-10$y & asymptotic & 11 & $0.014$ & 2 & 19 \\
			& permutation & 11 & $0.032$ &  $0$ & $19$ \\ \hline
			TS2, mean age & asymptotic & 14 &  $0.001$ & 6 & 23 \\
			& permutation & 14 & $0.008$ & $4$ & $22$ \\ \hline
			TS2, $+10$y & asymptotic& 17 & $0.002$ & $6$ & $28$ \\
			& permutation & 17 & $0.012$ & $5$ & $28$ 
		\end{tabular}
		\caption{$p$-values for two-sided testing of equal mean survival times, $H_0: m_z = 0$, and corresponding 95\%-confidence intervals for the differences (rounded to full months), for different covariate combinations. TS is short for tumour size and $\pm10$y refers to the mean age 53.9 years plus or minus ten years.}
		\label{tab:breast_p_CI}
	\end{table}
	
	\section{Discussion and future research}
 \label{sec:disc_II}
		In this article, we focused on the comparison of survival times for the uncured subpopulations in a two-sample problem by means of the mean survival time adjusted for adjusted for potential confounders. We proposed a semiparametric estimation method for the $MST_u$ conditionally on a covariate value and developed  an asymptotic and a permutation-based approach for inference. We encountered several theoretical and computational challenges related to the permutation approach in a semi-parametric model. Moreover, based on our simulation study, we did not observe a clear advantage of the permutation method over the asymptotic one, contrarily to existing findings in the literature for the nonparametric setting. 
	
	We employed a logistic-Cox mixture cure model because it is the most widely used semiparametric cure model. However, the logistic model for the incidence could be replaced by any other parametric model that might be more suitable in specific applications. Also, the Cox model for the latency component could for example  be replaced by the accelerated failure time model. In such cases, both the estimators and the theory would need some adjustments but we expect that the same challenges would arise and  the practical performance would be similar to the current model. Another possibility would be to allow for covariates without imposing a specific model on the latency. For example, \cite{patilea2020general} focused on estimating the cure rate, while leaving the distribution of the uncured unspecified, and they obtained a nonparametric estimator for the conditional survival function of the uncured. Another nonparametric estimator was proposed in \cite{lopez2017nonparametric}, relying on the Beran estimator for the conditional survival function. It would be of interest to further extend our  method to these nonparametric settings. 

    One important extension of the present methodology would include a rectification for causal interpretations, as motivated in the discussion section of Part~I.
    For example, the combination of an inverse-probability-of-treatment weighted version of the $MST_{u,i,z}$'s with the cure rates would allow for causal insights and a more complete picture in comparisons of two groups with cure fractions.
    If the data did not originate from randomized clinical trials in the first place, also the estimator for the cure rate would require an adjustment for enabling causal interpretations.
    These considerations are clearly beyond the scope of the present paper and are thus left for future research.
 
	Apart from the permutation approach for inference, one could also consider a pooled bootstrap approach; cf.\ Section~3.7.2 of \cite{VW96}. 	
	However, even with the pooled bootstrap we would encounter the same theoretical and computational problems and we expect that the practical behavior would be similar  to the permutation approach. 
    Also, one would lose the benefit of the finite sample exactness of the permutation-based inference procedures under exchangeability.
	
	Finally, to avoid the model misspecification issue for the permutation approach in the  semiparametric setting, one could estimate a mixed model instead of a logistic-Cox model in the permutation samples, i.e., fitting the correct model of the pooled data, which is a linear combination of two logistic-Cox models. Afterwards, we could estimate the MST as usual. Then one would still need to develop the asymptotic results theoretically for the resulting estimator; these asymptotics will not be the same as in the standard logistic-Cox. In practice, we do not expect this approach to behave better since more parameters need to be estimated for each permuted sample.
	
\begin{appendix}

\setcounter{section}{0}

	\section{Proofs}
 \label{sec:app_II}

		\begin{proof}[Proof of Theorem~\ref{theo:MST_cov}]
		Let $\theta_i=(\gamma_i,\beta_i)$ denote the vector of parameters of the semiparametric mixture cure model. Consider the space $\mathcal{H}_m=\{h=(h_1,h_2)\in BV[0,\tau_{0,i}]\times\R^{p+q}: \Vert h_1\Vert_v+\Vert h_2\Vert_1\leq m\}$, where $m<\infty$ and $\Vert h_1\Vert_v$ is the absolute value of $h_1(0)$ plus the total variation of $h_1$ on the interval $[0, \tau_{0,i} ]$. From Theorem 3 in \cite{Lu2008} we have that the process 
		\begin{equation}
			\label{eqn:process_h}
			\langle n_i^{1/2}(\hat{\Lambda}_i-{\Lambda}_i),n_i^{1/2}(\hat{\theta}_i-{\theta}_i)\rangle(h)=n_i^{1/2}\int_0^{\tau_{0,i}}h_1(s)\,\dd (\hat{\Lambda}_i-{\Lambda}_i)(s)+n_i^{1/2}h^T_2(\hat{\theta}_i-{\theta}_i)
		\end{equation}
		indexed by $h\in\mathcal{H}_m$ converges weakly in $l^\infty(\mathcal{H}_m)$ to a tight Gaussian process $G_i$ in   $l^\infty(\mathcal{H}_m)$ with mean zero and covariance process
		\[
		Cov(G_i(h),G_i(h^*))=\int_0^{\tau_{0,i}}h_1(s)\sigma_{(1),i}^{-1}(h^*)(s)\dd\Lambda_i(s)+h^T_2\sigma_{(2),i}^{-1}(h^*)
		\]
		where
		\begin{equation}
			\label{eqn:sigma_1}
			\begin{split}
				&	\sigma_{(1),i}(h)(s)=\E\left[\1_{\{Y_{i1}\geq s\}}V_i(s;\theta_i,\Lambda_i)(h)g_i(s;\theta_i,\Lambda_i)e^{\beta^T_iZ_{i1}}\right]\\
				&-\E\left[\int_s^{\tau_{0,i}}\1_{\{Y_{i1}\geq u\}}V_i(s;\theta_i,\Lambda_i)(h)g_i(u;\theta_i,\Lambda_i)\{1-g_i(u;\theta_i,\Lambda_i)\}e^{2\beta^T_iZ_{i1}}\dd\Lambda_i(u)\right],
			\end{split}
		\end{equation}
		\begin{equation}
			\label{eqn:sigma_2}
			\sigma_{(2),i}(h)=\E\left[\int_0^{\tau_{0,i}}\1_{\{Y_{i1}\geq s\}}W_i(s;\theta_i,\Lambda_i)V_i(s;\theta_i,\Lambda_i)(h)g_i(s;\theta_i,\Lambda_i)e^{\beta^T_iZ_{i1}}\dd\Lambda_i(s)\right]
		\end{equation}
		and
		\begin{equation*}
			\label{def:g}
			g_i(s;\theta,\Lambda)=\frac{\phi(\gamma^TX_{i1})\exp\left(-\Lambda(s) \exp\left(\beta^T Z_{i1}\right)\right)}{1-\phi(\gamma^TX_{i1})+\phi(\gamma^TX_{i1})\exp\left(-\Lambda(s) \exp\left(\beta^T Z_{i1}\right)\right)},
		\end{equation*}
		\[
		V_i(s;\theta_i,\Lambda_i)(h)=h_1(s)-\left\{1-g_i(s;\theta_i,\Lambda_i)\right\}e^{\beta^T_iZ_{i1}}\int_0^sh_1(u)\dd\Lambda_i(u)+h^T_2W_i(s;\theta_i,\Lambda_i)
		\]
		\[
		W_i(s;\theta_i,\Lambda_i)=\left(\left\{1-g_i(s;\theta_i,\Lambda_i)\right\}X^T_{i1},\left[1-\left\{1-g_i(s;\theta_i,\Lambda_i)\right\}e^{\beta^T_iZ_{i1}}\Lambda_i(s)\right]Z^T_{i1}\right)^T.
		\]
		Using this result we first obtain weak convergence of the process $\sqrt{n}\{\hat{S}_{u,i}(t|z)-S_{u,i}(t|z)\}$ for fixed $z\in\mathcal{Z}$. By definition and a series of Taylor expansions we have
		\[
		\begin{aligned}
			&\sqrt{n_i}\{\hat{S}_{u,i}(t|z)-S_{u,i}(t|z)\\
			&=\sqrt{n_i}\left\{\exp\left(-\hat{\Lambda}_i(t)e^{\hat\beta^T_iz}\right)-\exp\left(-{\Lambda}_i(t)e^{\beta^T_iz}\right)\right\}\\
			&=-\sqrt{n_i}\left\{\hat{\Lambda}_i(t)e^{\hat\beta^T_iz}-{\Lambda}_i(t)e^{\beta^T_iz}\right\}\exp\left(-{\Lambda}_i(t)e^{\beta^T_iz}\right)+R_1\\
			&=-\sqrt{n_i}\left\{\hat{\Lambda}_i(t)-{\Lambda}_i(t)\right\}e^{\beta^T_iz} \exp\left(-{\Lambda}_i(t)e^{\beta^T_iz}\right)\\
			&\quad- \sqrt{n_i}\left\{e^{\hat\beta^T_iz}-e^{\beta^T_iz}\right\}\Lambda_i(t)\exp\left(-{\Lambda}_i(t)e^{\beta^T_iz}\right)+R_1+R_2\\
			&=-\sqrt{n_i}\left\{\hat{\Lambda}_i(t)-{\Lambda}_i(t)\right\}e^{\beta^T_iz} \exp\left(-{\Lambda}_i(t)e^{\beta^T_iz}\right)\\
			&\quad- \sqrt{n_i}\left\{{\hat\beta_i}-{\beta_i}\right\}^Tze^{\beta^T_iz}\Lambda_i(t)\exp\left(-{\Lambda}_i(t)e^{\beta^T_iz}\right)+R_1+R_2+R_3\\
		\end{aligned}
		\]
		where the remainder terms $R_1,R_2,R_3$ converge uniformly to zero in probability because of the boundedness of  $\Lambda_i$, $\beta_i$ and Theorems 2-3 in \cite{Lu2008}. Note that, as in assumption (A1), we denote by $S_{u,i}(\tau_{0,i}|z)$ the left limit $S_{u,i}(\tau_{0,i}-|z)$ so that $S_{u,i}(\cdot|z)$ is a continuous function, bounded from below away from zero. This modification does not influence the value of $MST_{u,i,z}$.  
		
		\noindent Consider the functions $h\in\mathcal{H}_m$ of the form
		\begin{equation}
			\label{eqn:h}
			\begin{aligned}
				h_{t,z}&=(h_{1;t,z},h_{2;t,z})\\
				&=\left(\1_{[0,t]}(\cdot)e^{\beta^T_iz} \exp\left(-{\Lambda}_i(t)e^{\beta^T_iz}\right),\left(\mathbf{0}_p,ze^{\beta^T_iz}\Lambda_i(t)\exp\left(-{\Lambda}_i(t)e^{\beta^T_iz}\right)\right)\right),
			\end{aligned}
		\end{equation}
		where $\mathbf{0}_p$ denotes a zero vector in $\R^p$ since we are not interested in the $\gamma$ component. For an appropriate choice of $m$ and any $t\in[0,\tau_{0,i}]$, such functions belong to $\mathcal{H}_m$ because of assumptions (A1)-(A2). For these functions we have
		\[
		\sqrt{n_i}\{\hat{S}_{u,i}(t|z)-S_{u,i}(t|z)\}=\langle n_i^{1/2}(\hat{\Lambda}_i-{\Lambda}_i),n^{1/2}(\hat{\theta}_i-{\theta}_i)\rangle(h_{t,z})+o_P(1)
		\]
		from which we conclude that the process $\sqrt{n}\{\hat{S}_{u,i}(t|z)-S_{u,i}(t|z)$ converges weakly in $D^\infty([0,\tau_{0,i}])$ to a mean zero Gaussian process $G_{i,z}^*$ with covariance 
		\begin{equation}
			\label{eqn:rho}
			\begin{aligned}
				\rho_{i,z}(s,t)&=Cov(G_{i,z}^*(t),G_{i,z}^*(t^*))\\
				&=\int_0^{\tau_{0,i}}h_{1;t,z}(s)\sigma_{(1),i}^{-1}(h_{t^*,z})(s)\dd\Lambda_i(s)+h^T_{2;t,z}\sigma_{(2),i}^{-1}(h_{t^*,z}),
			\end{aligned}
		\end{equation}
		where $h_{t,z}$ is as in \eqref{eqn:h} and $\sigma_{(1),i},\sigma_{(2),i}$ as in \eqref{eqn:sigma_1},\eqref{eqn:sigma_2}.
		Next we use the delta method to obtain the asymptotic distribution of the conditional mean survival time. We have 
		\begin{equation}
			\label{eqn:hat_E-E}
			\begin{aligned}
				\sqrt{n_i}(\widehat{MST}_{u,i,z}-MST_{u,i,z})&=\sqrt{n_i}\int_0^{\tau_{0,i}}\{\hat{S}_{u,i}(t|z)-S_{u,i}(t|z)\}\dd u\\
				&\qquad-\sqrt{n_i}\{\tau_{0,i}-Y_{i,(m_1)}\}\hat{S}_{u,i}(\tau_{0,i}).		\end{aligned}
		\end{equation}
		The first term in the previous equation is equal to $\sqrt{n_i}\{\psi(\hat{S}_{u,i})-\psi(S_{u,i})\}$ where
		\[
		\psi:{D}[0,\tau_{0,i}]\to \R\qquad \psi(\xi)=\int_0^{\tau_{0,i}}\xi(u)\,\dd u
		\]
		The function $\psi$ is Hadamard-differentiable with derivative
		\[
		\dd\psi_\xi\cdot h=\int_0^{\tau_{0,i}}h(u)\dd u.
		\]
		By Theorem 3.9.4 in \cite{VW96} it follows that $\sqrt{n_i}\{\psi(\hat{S}_{u,i})-\psi(S_{u,i})\}$ converges weakly to 
		\[
		N=\int_0^{\tau_{0,i}}G_{i,z}^*(u)\dd u
		\]
		which is normal distributed with mean zero and variance 
		\begin{equation}
			\label{eqn:sigma_cov}
			\sigma^2_{i,z}=\int_0^{\tau_{0,i}}\int_0^{\tau_{0,i}}\rho_{i,z}(s,t)\,\dd s\, \dd t,
		\end{equation}
		where $\rho_{i,z}$ is defined in \eqref{eqn:rho}.
		Next we show that the second term on the right hand side of \eqref{eqn:hat_E-E} converges to zero in probability, from which we can conclude that the asymptotic distribution of $\sqrt{n_i}(\widehat{MST}_{u,i,z}-MST_{u,i,z})$ is determined by that of the first term. 
		Since $\hat{S}_{u,i}(\tau_{0,i})$ converges to ${S}_{u,i}(\tau_{0,i})$, it is sufficient to show that $\sqrt{n_i}\{\tau_{0,i}-Y_{i,(m_1)}\}=o_P(1)$.  For any $\delta>0$ we have
		\[
		\begin{aligned}
			&\p(\sqrt{n_i}\{\tau_{0,i}-Y_{i,(m_1)}\}>\delta)\\
			&=\p\left(Y_{i,(m_1)}<\tau_{0,i}-\frac{\delta}{\sqrt{n_i}}\right)\\
			&=\p\left(\Delta_{i1}Y_{i1}<\tau_{0,i}-\frac{\delta}{\sqrt{n_i}}\right)^n\\
			&=\left[1-\p\left(\Delta_{i1}Y_{i1}\geq\tau_{0,i}-\frac{\delta}{\sqrt{n_i}}\right)\right]^n\\
			&=\left[1-\int_{\mathcal{Z}\times\mathcal{X}}\int_{\tau_{0,i}-\frac{\delta}{\sqrt{n_i}}}^{\tau_{0,i}}\p\left(C_{i1}\geq t|x,z\right)\dd F_{T_{i1}|X_{i1},Z_{i1}}(t|x,z)\,\dd F_{(Z_{i1},X_{i1})}(z,x)\right]^{n_i}\\
			&\leq\left[1-\int_{\mathcal{Z}\times\mathcal{X}}\p\left(C_{i1}> \tau_0|x,z\right)\phi(\gamma^T_ix)\p(T_{i1}=\tau_{0,i}|T_{i1}<\infty, z)\,\dd F_{(Z_{i1},X_{i1})}(z,x)\right]^{n_i}.
		\end{aligned}
		\]
		Because of assumptions (I1),(I3) and (A3), for some $K>0$ we have
		\[
		\begin{aligned}
			&\p(\sqrt{n_i}\{\tau_{0,i}-Y_{i,(m_1)}\}>\delta)\\
			&\leq\left[1-K\epsilon\right]^{n_i}\to 0.
		\end{aligned}
		\]
		This concludes the proof.
	\end{proof}
	
	In order to prove our main Theorem~\ref{thm:perm_cov}, we first need some preliminary results, which are provided in the following lemmas. In what follows, we assume that conditions (I1)-(I4) and (A1)-(A4) are satisfied.
		
		Denote by $\hat{\Lambda}^\pi_{n_i,i}$ and $\hat{\theta}_{n_i,i}^\pi$ the estimators of $\Lambda$ and $\theta=(\gamma,\beta)$ obtained by fitting a logistic-Cox model to the i-th permuted sample, which will be denoted for notational convenience as $(\Delta_{i1}^\pi,Y_{i1}^\pi,X_{i1}^\pi,Z_{i1}^\pi),\dots,$ $(\Delta_{in_i}^\pi,Y_{in_i}^\pi,X_{in_i}^\pi,Z_{in_i}^\pi) $.
  Let $\bar{\Lambda}_{n_1+n_2}$ and  $\bar{\theta}_{n_1+n_2}$ denote the estimators of $\Lambda$ and $\theta$ based on the pooled sample ${(\Delta_{1},{Y}_{1},{X}_{1},{Z}_{1}),\dots,(\Delta_{n_1+n_2},{Y}_{n_1+n_2},{X}_{n_1+n_2},{Z}_{n_1+n_2})}.$
		Note that the true distribution of the pooled sample is $\bar{\p}=\kappa\p_1+(1-\kappa)\p_2$, where $\p_i$ denotes the distribution of the $i$th sample. In particular, $\p$ does not correspond to a logistic-Cox model. Let $Q_{\Lambda,\theta}$ be the corresponding distribution of a logistic-Cox model with parameters $(\Lambda,\theta)$ and corresponding log-likelihood 
		\[
		\begin{aligned}
			l(\delta,y,x,z;\Lambda,\theta)&=\delta\left\{\log\phi(\gamma^Tx)+\log f_u(y|z;\Lambda,\beta) \right\}\\
			&\qquad+(1-\delta)\log\left\{1-\phi(\gamma^Tx)+\phi(\gamma^Tx)S_u(y|z;\Lambda,\beta) \right\}
		\end{aligned}
		\]
		By assumption (I2), the event times on the pooled sample happen on $[0,\bar\tau_0]$, where $\bar\tau_0=\max\{\tau_{0,1},\tau_{0,2}\}$. Hence $S_u(t|z;\Lambda,\beta)=0$ for $t\geq\bar\tau_0$ which corresponds to $\Lambda$ being defined on $[0,\bar\tau_0)$. In addition, $\bar{\p}({\Delta}=1, {Y}=\bar\tau_0)>0$ by assumptions (A3) and (I3). Hence, we can restrict on distributions $Q_{\Lambda,\theta}$ that have a positive mass at $\bar\tau_0$, meaning that $\lim_{t\to\bar\tau_0}\Lambda(t)<+\infty$ and we can denote the limit by $\Lambda(\bar\tau_0)$. 
		To reflect the existence of the jump in the likelihood, for the terms with $\Delta=1$ and ${Y}=\bar\tau$ we have $f_u(\bar\tau|z;\Lambda,\beta)=S_u(\bar\tau-|z;\Lambda,\beta)=\exp(-\Lambda(\bar\tau_0)e^{\beta^Tz})$, instead of the usual expression $f_u(t|z;\Lambda,\beta)={\lambda_u(t|z;\beta)}S_u(\bar\tau-|z;\Lambda,\beta). $ Here $\lambda_u$ denotes the hazard function corresponding to $S_u$ and the baseline hazard function corresponding to $\Lambda$ will be denoted by $\lambda$. 
	Define
\begin{equation}
	\label{eqn:bar_parameters}
(\bar\Lambda,\bar\theta)=\argmax_{\Lambda,\theta}\E_{\bar\p}[l(\Delta,{Y},{X},{Z};\Lambda,\theta)]=\argmin_{\Lambda,\theta}\mathrm{KL}(\bar\p|Q_{\Lambda,\theta})
\end{equation}
where $KL(\cdot|\cdot)$ denotes the Kullback-Leibler divergence between two distributions.
\begin{lemma}
The argmax defined in \eqref{eqn:bar_parameters} exists. 
\end{lemma}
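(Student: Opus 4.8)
The plan is to establish existence by a direct compactness argument: I would exhibit a maximizing sequence, extract a subsequential limit inside a suitably compactified parameter space, and show that the objective $M(\Lambda,\theta)=\E_{\bar\p}[l(\Delta,Y,X,Z;\Lambda,\theta)]$ is upper semicontinuous along it. First, note that the supremum is finite. Writing $\bar p$ for the $\bar\p$-density, one has $M(\Lambda,\theta)=\E_{\bar\p}[\log\bar p]-\mathrm{KL}(\bar\p\mid Q_{\Lambda,\theta})\le\E_{\bar\p}[\log\bar p]<\infty$ since $\mathrm{KL}\ge 0$ and $\bar\p$ has finite differential entropy. The supremum is also $>-\infty$: because $\mathrm{supp}(\p_1),\mathrm{supp}(\p_2)\subseteq[0,\bar\tau_0]$ with $\bar\tau_0=\max\{\tau_{0,1},\tau_{0,2}\}$, one can pick a single logistic-Cox law $Q_{\Lambda_0,\theta_0}$ dominating $\bar\p$ with integrable log-ratio, so $M(\Lambda_0,\theta_0)>-\infty$. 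Fix a maximizing sequence $(\Lambda_n,\theta_n)$.

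Second, I would compactify. By (A2) the parameters $\theta_n=(\gamma_n,\beta_n)$ lie in a fixed compact set, so along a subsequence $\theta_n\to\theta^*$ in that set. For the nuisance $\Lambda_n$, the crucial observation is that a maximizing sequence cannot let $\Lambda_n(\bar\tau_0)$ diverge: by (A3) and (I3) the event $\{\Delta=1,\,Y=\bar\tau_0\}$ carries positive $\bar\p$-mass, and its log-likelihood contribution equals $-\Lambda(\bar\tau_0)e^{\beta^Tz}$, which would force $M(\Lambda_n,\theta_n)\to-\infty$ if $\Lambda_n(\bar\tau_0)\to\infty$. Hence $\sup_n\Lambda_n(\bar\tau_0)\le C<\infty$, and the $\Lambda_n$ are uniformly bounded, nondecreasing functions with $\Lambda_n(0)=0$. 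Helly's selection theorem then yields a further subsequence along which $\Lambda_n\to\Lambda^*$ pointwise at every continuity point of the nondecreasing limit $\Lambda^*$, with convergence arranged at the endpoint $\bar\tau_0$ as well and $\Lambda^*(\bar\tau_0)\le C$.

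Third --- and this is where the main obstacle lies --- I would prove $\limsup_n M(\Lambda_n,\theta_n)\le M(\Lambda^*,\theta^*)$. The parts of $l$ that depend on $\Lambda$ only through the \emph{level} $\Lambda(Y)$, namely $-\Delta\,\Lambda(Y)e^{\beta^Tz}$ and the censored term $\log\{1-\phi(\gamma^Tx)+\phi(\gamma^Tx)e^{-\Lambda(Y)e^{\beta^Tz}}\}$, are continuous and bounded in $(\Lambda(Y),\theta)$ --- the censored term lies in $[\log(1-\phi),0]$ by (I1) --- so, since $\Lambda_n(Y)\to\Lambda^*(Y)$ for $\bar\p$-almost every $Y$ (the atoms of the monotone $\Lambda^*$ being countable, and the atom of $Y$ at $\bar\tau_0$ handled separately), dominated convergence makes these contributions converge. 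The genuinely delicate term is $\E_{\bar\p}[\Delta\,\1_{\{Y<\bar\tau_0\}}\log\lambda(Y)]$, involving the baseline-hazard density $\lambda=\dd\Lambda/\dd y$, which need not converge under the weak convergence of the $\Lambda_n$.

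I would resolve this by the classical weak lower semicontinuity of convex integral functionals. Let $a(\cdot)$ denote the $\bar\p$-sub-density of $\{Y<\bar\tau_0,\,\Delta=1\}$ and set $\dd\mu=a\,\dd y$. Then, up to the $\Lambda$-free constant $\int a\log a\,\dd y$,
\[
\E_{\bar\p}[\Delta\,\1_{\{Y<\bar\tau_0\}}\log\lambda(Y)]=\int\log\frac{\dd\Lambda}{\dd\mu}\,\dd\mu=-\int\Big(-\log\frac{\dd\Lambda}{\dd\mu}\Big)\,\dd\mu,
\]
and since $u\mapsto-\log u$ is convex the integral functional $\Lambda\mapsto\int\big(-\log\frac{\dd\Lambda}{\dd\mu}\big)\,\dd\mu$ is lower semicontinuous under weak convergence of measures, its value on singular parts being consistent with the usual convention (the recession term vanishing). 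Consequently the density term is upper semicontinuous, and together with the continuity of the remaining terms this gives $\limsup_n M(\Lambda_n,\theta_n)\le M(\Lambda^*,\theta^*)$. Since $(\Lambda_n,\theta_n)$ was maximizing, $(\Lambda^*,\theta^*)$ attains the supremum, which proves that the argmax in \eqref{eqn:bar_parameters} exists.
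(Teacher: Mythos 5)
Your overall strategy --- the direct method (maximizing sequence, Helly compactness, upper semicontinuity of the criterion) --- is sound, and in one respect it is more careful than the paper's own proof: the paper reduces the argmax to a bounded set and then asserts existence ``by continuity,'' whereas you actually supply the attainment step, correctly isolating the delicate density term $\E_{\bar\p}[\Delta\1_{\{Y<\bar\tau_0\}}\log\lambda(Y)]$ and invoking weak lower semicontinuity of the convex integral functional $\Lambda\mapsto\int(-\log\tfrac{\dd\Lambda}{\dd\mu})\,\dd\mu$ (the recession function of $-\log$ vanishes, so singular parts of the Helly limit cause no trouble). There is, however, one genuine gap: you dispose of the finite-dimensional parameters by asserting that ``by (A2) the parameters $\theta_n=(\gamma_n,\beta_n)$ lie in a fixed compact set.'' Assumption (A2) constrains the \emph{true} group-specific parameters $(\gamma_i,\beta_i)$, $i=1,2$; it says nothing about the candidates over which the pooled, misspecified argmax in \eqref{eqn:bar_parameters} ranges. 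The paper treats $\gamma$ and $\beta$ as unrestricted and devotes two of its three steps to proving coercivity: it shows $\E_{\bar\p}[l]\to-\infty$ as $\|\beta\|\to\infty$ \emph{uniformly over directions} $\tilde\beta\in S^{q-1}$ --- the uniformity being the nontrivial part, obtained from a compactness argument on the sphere together with positive definiteness of $Var(Z)$ from (I4) --- and argues similarly for $\gamma$. Without such an argument a maximizing sequence could escape to infinity in $\theta$, and your proof does not rule this out.

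A secondary, smaller point: in arguing that $\Lambda_n(\bar\tau_0)$ stays bounded you note that the atom $\{\Delta=1,\,Y=\bar\tau_0\}$ contributes $-\Lambda(\bar\tau_0)e^{\beta^Tz}$, but you should also check that the competing positive contribution $\E_{\bar\p}[\Delta\1_{\{Y<\bar\tau_0\}}\log\lambda_n(Y)]$ cannot compensate. A Jensen bound gives $\int\log\tfrac{\dd\Lambda_n}{\dd\mu}\,\dd\mu\le q\log\bigl(\Lambda_n(\bar\tau_0)/q\bigr)$ with $q=\mu([0,\bar\tau_0))$, so the possible gain is only logarithmic in $\Lambda_n(\bar\tau_0)$ while the loss is linear, and your conclusion stands; the paper's Step~2 reaches the same conclusion by the rescaling comparison $\tilde\Lambda=c\Lambda$ with $c=M/\Lambda(\bar\tau_0)$, which packages this log-versus-linear trade-off automatically. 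Once the coercivity in $\theta$ is supplied (by the paper's Steps~1 and~3, or by explicitly restricting the optimization to a compact parameter space if that is the intended reading), your argument is complete and in fact fills in the attainment step that the paper leaves implicit.
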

\begin{proof}
We show that the argmax can be restricted on a bounded set, from which the existence follows because of continuity. In the three steps below we deal consequently with $\beta$, $\Lambda$  and $\gamma$. 

\textit{Step 1}. First we show that for any $K>0$ there exists $\bar{c}>0$ such that for any $c\geq\bar{c}$ we have $\inf_{\tilde{\beta}\in S^{q-1}}\bar\p(c\,|\tilde{\beta}^T{Z}|>K)>0$, where $S^{q-1}$ is the unit circle in $\R^q$.
	Suppose by contradiction that there exists $K$ such that for any $c$ we have  $\inf_{\tilde{\beta}\in S^{q-1}}\bar\p(c\,|\tilde{\beta}^T{Z}|>K)=0$. Note that the infimum is actually a minimum because $S^{q-1}$ is compact and the function is continuous. Hence, it means that for any $c$ there exists $\tilde{\beta}\in S^{q-1}$ for which  $\bar\p(c\,|\tilde{\beta}^T{Z}|\leq K)=1$. Equivalently, for any $\epsilon>0$, there exists $\tilde{\beta}\in S^{q-1}$ for which  $\bar\p(|\tilde{\beta}^T{Z}|\leq \epsilon)=1$. The closed subsets of $S^{q-1}$ defined by $B_m=\{\tilde{\beta}\in S^{q-1}\,\big|\,\bar\p(|\tilde{\beta}^T{Z}|\leq \frac1m)=1\}$ are non-empty for all $m$ and $B_m\downarrow B=\cap_m B_m$. $B$ cannot be empty because then $(B_m^c)_m$ form an open covering of the compact $S^{q-1}$ and there would exists a finite sub-covering, which is impossible since all $B_m$ are non-empty. It follows that $B$ is not empty, which is equivalent to saying that there exists $\tilde{\beta}\in S^{q-1}$ for which  $\bar\p(|\tilde{\beta}^T{Z}|=0)=1$. This contradicts the assumption that $Var({Z})$ has full rank. \\
	Next, let $\eta=\frac12\Lambda({\bar\tau_0})\inf_z\bar\p({Y}=\bar\tau_0|{\Delta}=1,{Z}=z)$ and choose $K$ such that $x\leq \eta e^x$ for all $x\geq K$. Let $\beta=c{\tilde\beta}$ with $\tilde{\beta}\in S^{q-1}$, $c>\bar{c}$. We will show that, as $c$ increases, the expectation in \eqref{eqn:bar_parameters} becomes arbitrarily small. For fixed $\gamma$ and $\Lambda$, we can write
	\[
	\begin{split}
		&\E_{\bar\p}[l(\Delta,{Y},{X},{Z};\Lambda,\theta)]\\
		&=\E_{\bar\p}[{\Delta}\beta^T{Z}{\1_{\{{Y}<\bar\tau_0\}}}-{\Delta}\Lambda({Y}) e^{\beta^T{Z}}]+R_1\\
		&=\E_{\bar\p}\left[\left\{{\Delta}c\tilde\beta^T{Z}{\1_{\{{Y}<\bar\tau_0\}}}-{\Delta}\Lambda({Y}) e^{c\tilde\beta^T{Z}}\right\}\1_{\{\tilde\beta^T{Z}>0\}}\1_{\{c\,|\tilde{\beta}^T{Z}|>K\}}\right]\\
		&+\E_{\bar\p}\left[\left\{{\Delta}c\tilde\beta^T{Z}{\1_{\{{Y}<\bar\tau_0\}}}-{\Delta}\Lambda({Y}) e^{c\tilde\beta^T{Z}}\right\}\1_{\{\tilde\beta^T{Z}<0\}}\1_{\{c\,|\tilde{\beta}^T{Z}|>K\}}\right]+R_2,
	\end{split}
\]
	where $R_1$ and $R_2$ denote terms that are bounded in absolute value. Using  $\E_{\bar\p}[\Lambda({Y})|{Z},{\Delta}=1]>\Lambda({\bar\tau_0})\inf_z\bar\p({Y}=\bar\tau_0|{\Delta}=1,{Z}=z)=2\eta>0$, we obtain the following bound 
\[
\begin{split}
	\E_{\bar\p}[l(\Delta,{Y},{X},{Z};\Lambda,\theta)]&\leq \E_{\bar\p}\left[\left\{{\Delta}c\tilde\beta^T{Z}-2{\Delta}\eta e^{c\tilde\beta^T{Z}}\right\}\1_{\{\tilde\beta^T{Z}>0\}}\1_{\{c\,|\tilde{\beta}^T{Z}|>K\}}\right]\\
		&+c\E_{\bar\p}\left[{\Delta}\tilde\beta^T{Z}{\1_{\{{Y}<\bar\tau_0\}}}\1_{\{\tilde\beta^T{Z}<0\}}\1_{\{c\,|\tilde{\beta}^T{Z}|>K\}}\right]+R_2\\
		&\leq -\eta\E_{\bar\p}\left[{\Delta} e^{c\tilde\beta^T{Z}}\1_{\{\tilde\beta^T{Z}>0\}}\1_{\{c\,|\tilde{\beta}^T{Z}|>K\}}\right]\\
		&-c\E_{\bar\p}\left[{\Delta}|\tilde\beta^T{Z}|{\1_{\{{Y}<\bar\tau_0\}}}\1_{\{\tilde\beta^T{Z}<0\}}\1_{\{c\,|\tilde{\beta}^T{Z}|>K\}}\right]+R_2\\
		&\leq -\eta\E_{\bar\p}\left[{\Delta} e^{c\tilde\beta^T{Z}}\1_{\{\tilde\beta^T{Z}>0\}}\1_{\{\bar{c}\,|\tilde{\beta}^T{Z}|>K\}}\right]\\
		&-c\E_{\bar\p}\left[{\Delta}|\tilde\beta^T{Z}|{\1_{\{{Y}<\bar\tau_0\}}}\1_{\{\tilde\beta^T{Z}<0\}}\1_{\{\bar{c}\,|\tilde{\beta}^T{Z}|>K\}}\right]+R_2\\
		\end{split}
\]
This further leads to 
\[
\begin{split}
&	\E_{\bar\p}[l(\Delta,{Y},{X},{Z};\Lambda,\theta)]\\	&\leq -\eta e^{cK/\bar{c}}\bar\p \left({\Delta}=1, \tilde\beta^T{Z}>0,\bar{c}\,|\tilde{\beta}^T{Z}|>K\right)\\
		&-c\frac{K}{\bar{c}}\bar\p\left({\Delta}=1,{Y}<\bar\tau_0,\tilde\beta^T{Z}<0,\bar{c}\,|\tilde{\beta}^TZ|>K\right)+R_2\\
		&\leq -c\frac{K}{\bar{c}}\bar\p\left({\Delta}=1,{Y}<\bar\tau_0,\bar{c}\,|\tilde{\beta}^T{Z}|>K\right)+R_2\\
		&\leq -c\frac{K}{\bar{c}}\inf_z\bar{\p}(\Delta=1,{Y}<\bar\tau_0|{Z}=z)\inf_{\tilde{\beta}\in S^{q-1}}\bar{\p}\left(\bar{c}\,|\tilde{\beta}^T{Z}|>K\right)+R_2
	\end{split}
	\]
Since both infimums are strictly positive, $\E_{\bar\p}[l(\Delta,{Y},{X},{Z};\Lambda,\theta)]$ can be made arbitrarily small for $c$ sufficiently large (and how large $c$ should be does not depend  on $\tilde{\beta}$). Hence, we can restrict the argmax on a bounded set for $\beta$.

\textit{Step 2}.  Next we show that, there exists $M>0$ such that it suffices to search for the maximizer among $\Lambda$ that are bounded by $M$. Let  $\Lambda$ be such that $\Lambda(\bar{\tau}_0)>M$. We can construct $\tilde{\Lambda}(t)=c\Lambda(t)$ with  $c=M/\Lambda(\bar{\tau}_0)\in(0,1)$. We have $\tilde{\Lambda}(\bar{\tau}_0)=M$ and $\tilde{\lambda}=c\lambda$. We show that 
\[
\E_{\bar\p}[l(\Delta,{Y},{X},{Z};\Lambda,\theta)]< \E_{\bar\p}[l(\Delta,{Y},{X},{Z};\tilde\Lambda,\theta)].
\] 
Indeed we have 
\[
\begin{split}
	&\E_{\bar\p}[l(\Delta,{Y},{X},{Z};\Lambda,\theta)]- \E_{\bar\p}[l(\Delta,{Y},{X},{Z};\tilde\Lambda,\theta)]\\
	&=\E_{\bar\p}\left[-{\Delta}\log c{\1_{\{{Y}<\bar\tau_0\}}} -{\Delta}\Lambda({Y})e^{\beta^T{Z}}+{\Delta}c\Lambda({Y})e^{\beta^T{Z}}\right.\\
	&\qquad\left.+(1-{\Delta})\log\frac{1-\phi(\gamma^T{X})+\phi(\gamma^T{X})S_u({Y}|{Z};\Lambda,\beta)}{1-\phi(\gamma^T{X})+\phi(\gamma^T{X})S_u({Y}|{Z};\tilde\Lambda,\beta)}\right]
\end{split}
\]		 
Since $	S_u({Y}|{Z};\tilde\Lambda,\beta)>S_u({Y}|{Z};\Lambda,\beta)$ the ratio is smaller than 1 and as a result the $(1-{\Delta})$ term in the expectation is negative. Hence
\[
\begin{split}
	&\E_{\bar\p}[l(\Delta,{Y},{X},{Z};\Lambda,\theta)]- \E_{\bar\p}[l(\Delta,{Y},{X},{Z};\tilde\Lambda,\theta)]\\
	&<\E_{\bar\p}\left[-{\Delta}\log c{\1_{\{{Y}<\bar\tau_0\}}} -{\Delta}\Lambda({Y})e^{\beta^T{Z}}+{\Delta}c\Lambda({Y})e^{\beta^T{Z}}\right]\\
	&=-\bar\p({\Delta}=1,{{Y}<\bar\tau_0})\log c-(1-c)\E_{\bar\p}\left[{\Delta}\Lambda({Y})e^{\beta^T{Z}}\right]\\
	&\leq-\bar\p({\Delta}=1,{{Y}<\bar\tau_0}) \log c-(1-c)\E_{\bar\p}\left[{\Delta}\Lambda({Y})e^{\beta^T{Z}}\1_{\{{Y}=\bar\tau_0\}}\right]\\
	&=- \bar\p({\Delta}=1,{{Y}<\bar\tau_0})\log c-(1-c)\Lambda(\bar\tau_0)\E_{\bar\p}\left[{\Delta}e^{\beta^T{Z}}\1_{\{{Y}=\bar\tau_0\}}\right].	
\end{split}
\]
Since we are restricting $\beta$ on a compact and $Z$ is assumed to have bounded support, there exist $c_2>0$ such that $	e^{\beta^T{Z}}>c_2$ a.s.. It follows that
\[
\begin{split}
&\E_{\bar\p}[l(\Delta,{Y},{X},{Z};\Lambda,\theta)]- \E_{\bar\p}[l(\Delta,{Y},{X},{Z};\tilde\Lambda,\theta)]\\
	&<-\bar\p({\Delta}=1,{{Y}<\bar\tau_0})\log c-(1-c)\Lambda(\bar\tau_0)c_2\bar\p({\Delta}=1,{Y}=\bar\tau_0)\\
	&=\bar\p({\Delta}=1,{{Y}<\bar\tau_0})(\log\Lambda(\bar\tau_0)-\log M) -(\Lambda(\bar\tau_0)-M)c_2\bar\p({\Delta}=1,{Y}=\bar\tau_0)\\
	&<\bar\p({\Delta}=1,{{Y}<\bar\tau_0})(\Lambda(\bar\tau_0)-M)\frac{1}{M} -(\Lambda(\bar\tau_0)-M)c_2\bar\p({\Delta}=1,{Y}=\bar\tau_0)\\
	&=(\Lambda(\bar\tau_0)-M)\left\{\frac{1}{M}\bar\p({\Delta}=1,{{Y}<\bar\tau_0}) -c_2\bar\p({\Delta}=1,{Y}=\bar\tau_0)\right\}<0
\end{split}
\]	 
for large enough $M$ since  $	\bar\p({\Delta}=1,{Y}=\bar\tau)>0$ by assumption. Hence we conclude that, there exists $M$ such that it is sufficient to search for the maximizer among $\Lambda$'s bounded by $M$.

\textit{Step 3.} We can also restrict the argmax on a bounded set for $\gamma$ because as $\Vert\gamma\Vert\to \infty$, for fixed values of $\beta$ and $\Lambda$, the expectation converges to $-\infty$. Indeed we have 
\[
\begin{split}
	&\E_{\bar\p}[l(\Delta,{Y},{X},{Z};\Lambda,\theta)]\\
	&=\E_{\bar\p}\left[{\Delta}\log\phi(\gamma^T{X})\1_{\{\gamma^T{X}>0\}}\right]+\E_{\bar\p}\left[{\Delta}\log\phi(\gamma^T{X})\1_{\{\gamma^T{X}\leq0\}}\right]+R_3,
\end{split}
\]
where $R_3$ denotes terms bounded in absolute value.
The first term is bounded and using the same reasoning as with $\beta$ it can be shown that the second term converges to $-\infty$. 

We conclude that we can restrict the argmax on a bounded set, from which the existance of the argmax follows as the criteria is continuous with respect to the parameters. 
\end{proof}
	
 In what follows, we assume that the maximizer $(\bar\Lambda,\bar\theta)$ is unique. It will also be useful to characterize it as the solution of the score equation defined similarly to \cite{Lu2008}. As in the proof of Theorem \ref{theo:MST_cov}, consider $\mathcal{H}_m=\{h=(h_1,h_2)\in BV[0,\bar\tau_{0}]\times\R^{p+q}: \Vert h_1\Vert_v+\Vert h_2\Vert_1\leq m\}$, where $m<\infty$ and $\Vert h_1\Vert_v$ is the absolute value of $h_1(0)$ plus the total variation of $h_1$ on the interval $[0, \bar\tau_{0} ]$. Define the functions
 \[
		\begin{split}
			&\psi_{(\Lambda,\theta),h}(\delta,y,x,z)
			=\delta \left[h_1(y)+h_{21}^Tx+h^T_{22}z\right]-\left\{\phi(\gamma^Tx)-(1-\delta)g(y,\Lambda,\theta)\right\}h_{21}^Tx\\
			&\qquad\qquad\qquad- \left\{\delta+(1-\delta)g(y,\Lambda,\theta)\right\}\left\{e^{\beta^Tz}\int_0^{y}h_1(s)\dd\Lambda(s)+e^{\beta^Tz}\Lambda(y)h^T_{22}z
			\right\},
		\end{split}
		\]
  where 
  \[
g(t,\Lambda,\theta)=\frac{\phi(\gamma^Tx)\exp\left(-\Lambda(t) \exp\left(\beta^T z\right)\right)}{1-\phi(\gamma^Tx)+\phi(\gamma^Tx)\exp\left(-\Lambda(t) \exp\left(\beta^Tz\right)\right)}.
  \]
We will denote by  $\p^\pi_{n_i,i}\psi_{(\Lambda,\theta),h}$ the score function for the i-th permuted sample 
		\[
		\begin{split}
			\p^\pi_{n_i,i}\psi_{(\Lambda,\theta),h}
			&=\frac{1}{n_i}\sum_{j=1}^{n_i} \Delta^\pi_{ij} \left[h_1({Y}_{j}^\pi)+h_{21}^TX_{ij}^\pi+h^T_{22}Z_{ij}^\pi\right]\\
			&\quad-\frac{1}{n_i}\sum_{j=1}^{n_i}\left\{\phi(\gamma^TX_{ij}^\pi)-(1-\Delta_{ij}^\pi)g_{ij}^\pi(Y_{ij}^\pi,\Lambda,\theta)\right\}h_{21}^TX_{ij}^\pi\\
			&\quad-\frac{1}{n_i}\sum_{j=1}^{n_i} \left\{\Delta_{ij}^\pi+(1-\Delta_{ij}^\pi)g_{ij}^\pi(Y_{ij}^\pi,\Lambda,\theta)\right\}\\
			&\quad\qquad\qquad\quad\times\left\{e^{\beta^TZ_{ij}^\pi}\int_0^{Y_{ij}^\pi}h_1(s)\dd\Lambda(s)+e^{\beta^TZ_{ij}^\pi}\Lambda(Y_{ij}^\pi)h^T_{22}Z_{ij}^\pi
			\right\},
		\end{split}
		\]
		where $h=(h_1,h_2)=(h_1,h_{21},h_{22})\in\mathcal{H}_m$ and
		\[
		g_{ij}^\pi(t,\Lambda,\theta)=\frac{\phi(\gamma^TX_{ij}^\pi)\exp\left(-\Lambda(t) \exp\left(\beta^T Z_{ij}^\pi\right)\right)}{1-\phi(\gamma^TX_{ij}^\pi)+\phi(\gamma^TX_{ij}^\pi)\exp\left(-\Lambda(t) \exp\left(\beta^TZ_{ij}^\pi\right)\right)}.
		\]
		Similarly, $\p_{n_1+n_2}\psi_{(\Lambda,\theta),h}$ and  $\bar\p\psi_{(\Lambda,\theta),h}$  are defined using the empirical distribution of the pooled sample or the true distribution of the pooled sample $\bar{\p}=\kappa\p_1+(1-\kappa)\p_2$ respectively.   
		By definition  $(\bar{\Lambda},\bar{\theta})$ is the solution of $ \bar\p\psi_{(\Lambda,\theta),h}\overset{!}{=}0$. 
		
\begin{lemma}
\label{lemma:consistency_pooled}
Assume the maximizer $ (\bar\Lambda,\bar\theta)$ defined in \eqref{eqn:bar_parameters} is unique. The pooled maximum likelihood estimator $(\bar\Lambda_{n_1+n_2},\bar\theta_{n_1+n_2})$ is a (weakly) consistent estimator of $(\bar\Lambda,\bar\theta)$.
\end{lemma}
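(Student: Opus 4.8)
The plan is to prove consistency by the classical Wald/argmax route for (possibly misspecified) maximum likelihood estimators, now carried out in the semiparametric setting and combined with the compactification already obtained in the previous lemma. Throughout I equip the parameter space with the product of the Euclidean metric on $\theta=(\gamma,\beta)$ and the supremum metric on the nondecreasing functions $\Lambda$ on $[0,\bar\tau_0]$. First I would fix the compact set $\mathcal{K}$ extracted from the previous lemma: $\theta$ ranges over a fixed bounded Euclidean set and $\Lambda$ over the nondecreasing functions with $\Lambda(\bar\tau_0)\le M$. One must argue that the pooled maximizer $(\bar\Lambda_{n_1+n_2},\bar\theta_{n_1+n_2})$ eventually lies in $\mathcal{K}$ with probability tending to one. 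This is achieved by repeating the three coercivity estimates of the previous lemma at the level of the empirical criterion $\p_{n_1+n_2}l(\cdot;\Lambda,\theta)$, with the population probabilities appearing there replaced by their empirical counterparts; these converge to the corresponding positive limits by the law of large numbers applied to each of the two i.i.d.\ blocks, so the same ``escape to infinity is penalized'' conclusion holds asymptotically.

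The central step is a uniform law of large numbers, namely $\sup_{(\Lambda,\theta)\in\mathcal{K}}\bigl|\p_{n_1+n_2}l(\cdot;\Lambda,\theta)-\bar\p\,l(\cdot;\Lambda,\theta)\bigr|\to 0$ in probability. Since the pooled sample is the concatenation of two independent i.i.d.\ samples from $\p_1$ and $\p_2$ with $n_i/(n_1+n_2)$ converging to its limit, I would apply the Glivenko--Cantelli theorem (Theorem~2.4.1 in \cite{VW96}) separately to each block and recombine the two uniform limits with the weights $\kappa$ and $1-\kappa$ to obtain $\bar\p\,l$. The Glivenko--Cantelli property of the class $\{l(\cdot;\Lambda,\theta):(\Lambda,\theta)\in\mathcal{K}\}$ follows from a finite bracketing-entropy bound: on $\mathcal{K}$ the log-likelihood is Lipschitz in $(\gamma,\beta)$ and its dependence on $\Lambda$ enters through a uniformly bounded monotone function (a class of finite bracketing entropy), while the boundedness of $\Lambda$, of the covariates, and of $\beta$ keeps all integrands bounded and bounded away from the singularities of $\log$.

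Finally I would invoke the argmax theorem. The map $(\Lambda,\theta)\mapsto\bar\p\,l(\cdot;\Lambda,\theta)$ is continuous on the compact set $\mathcal{K}$ for the chosen metric, and the maximizer $(\bar\Lambda,\bar\theta)$ is unique by assumption; a unique maximizer of a continuous function on a compact set is automatically well-separated, so the standard consistency conclusion (the argmax continuous-mapping argument of Section~3.2 in \cite{VW96}) delivers $(\bar\Lambda_{n_1+n_2},\bar\theta_{n_1+n_2})\to(\bar\Lambda,\bar\theta)$ in (outer) probability. Equivalently, one may run the whole argument through the score maps $\p_{n_1+n_2}\psi_{(\Lambda,\theta),h}$ and $\bar\p\,\psi_{(\Lambda,\theta),h}$ introduced above, establishing their uniform closeness over $h\in\mathcal{H}_m$ and using that $(\bar\Lambda,\bar\theta)$ is the unique zero of the latter.

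The hard part will be handling the infinite-dimensional nuisance $\Lambda$ together with the atom of the likelihood at $\bar\tau_0$. Verifying both the Glivenko--Cantelli property and, more importantly, the well-separatedness of the maximizer in a metric strong enough to be usable in the subsequent asymptotics (uniform convergence of $\Lambda$, including the value $\Lambda(\bar\tau_0)$) requires carrying the endpoint mass as a separate coordinate and checking continuity and identifiability of the criterion there; ruling out that the empirical maximizer drifts to the artificial boundary $\Lambda(\bar\tau_0)=M$ is part of the same difficulty. This is where I expect most of the technical work to concentrate.
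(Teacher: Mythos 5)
Your overall strategy (Wald-type consistency for a misspecified M-estimator: compactification, uniform LLN, argmax theorem with a well-separated unique maximizer) is a reasonable instinct, but its central step fails for this particular criterion. The empirical log-likelihood $\p_{n_1+n_2} l(\cdot;\Lambda,\theta)$ contains the terms $\Delta_j\log f_u(Y_j|Z_j;\Lambda,\beta)$, and for the nonparametric component these are evaluated through the \emph{jumps} of $\Lambda$ at the observed event times: the pooled MLE is a step function with increments of order $(n_1+n_2)^{-1}$, so $\p_{n_1+n_2}[\Delta\log \Delta\Lambda(Y)]$ behaves like $-\p_{n_1+n_2}[\Delta]\log(n_1+n_2)\to-\infty$ at the estimator, while the population criterion $\bar\p\, l$ involves $\log$ of a hazard \emph{density} and equals $-\infty$ at every discrete $\Lambda$. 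Consequently $\sup_{(\Lambda,\theta)\in\mathcal K}|\p_{n_1+n_2} l-\bar\p\, l|$ cannot tend to zero over any class $\mathcal K$ large enough to contain the estimator, and the Glivenko--Cantelli/bracketing step you build the proof on is vacuous. This is the classical obstruction to Wald-style consistency proofs for NPMLEs of cumulative hazards, and it is precisely what the paper's proof is designed to circumvent: following Murphy and Scharfstein et al.\ (via Lu, 2008), it writes $\bar\Lambda_{n_1+n_2}$ in Breslow form, introduces a \emph{reference} estimator $\tilde\Lambda_{n_1+n_2}$ with the same jump locations but built from the limit parameters $(\bar\Lambda,\bar\theta)$, and exploits $0\leq\bar\ell_{n}(\bar\Lambda_{n},\bar\theta_{n})-\bar\ell_{n}(\tilde\Lambda_{n},\bar\theta)$ so that the singular terms appear only as ratios $\log\{\Delta\bar\Lambda_{n}(Y_j)/\Delta\tilde\Lambda_{n}(Y_j)\}$; Helly's theorem applied to $\dd\bar\Lambda_{n}/\dd\tilde\Lambda_{n}$ yields subsequential limits, the SLLN over a countable dense set turns the likelihood difference into a negative Kullback--Leibler divergence that must vanish, and the components are then identified one by one.

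A secondary issue is your compactness/well-separation step: the set of nondecreasing functions with $\Lambda(\bar\tau_0)\leq M$ is \emph{not} compact for the supremum metric, so ``unique maximizer on a compact set is well-separated'' does not apply in the topology you chose. Sequential compactness is available only via Helly (pointwise at continuity points), and upgrading to uniform convergence requires knowing that the limit is continuous except at the finitely many fixed points $\tau_{0,1},\tau_{0,2}$ --- which is exactly how the paper organizes this part of the argument. You correctly sense that the endpoint atom and the infinite-dimensional nuisance are where the work lies, but the decisive difficulty is the jump-versus-density mismatch in the likelihood, not merely the atom at $\bar\tau_0$; without the reference-estimator device (or an equivalent profile-likelihood or mixture argument), the proposed route does not go through.
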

\begin{proof}
		We will pursue similar ideas as \cite{Lu2008} in the proofs of his Lemma~2 and Theorem~2.
		Comparing the arguments in \cite{Lu2008} that lead to the maximum likelihood estimator in Display~(12) of that paper,
		it is evident that the pooled estimator $\bar\Lambda_{n_1+n_2}$ must exhibit a similar structure.
		In particular,
		\[
			\label{repr:mle_pooled}
			\bar\Lambda_{n_1+n_2}(t) = \int_0^t \frac{d  N (s)}{ \sum_{j=1}^{n_1+n_2}  R_j(s)\exp(\bar \beta_{n_1+n_2}^T Z_j) \{ \Delta_j + (1- \Delta_j) \bar{g}_j( Y_j; \bar \theta_{n_1+n_2}, \bar \Lambda_{n_1+n_2})\}},
		\]
		where $ N = N_1 + N_2$ is the pooled counting process, $ R_j(s) = \1_{\{  Y_j \geq s\}}$ denotes the at-risk process of the $j$-th pooled individual and 
  $$\bar g_i(t; \Lambda, \theta) = \frac{\phi( \gamma^T  X_i) \exp(- \Lambda(t) \exp(\beta^T  Z_i))}{1 - \phi( \gamma^T  X_i) + \phi( \gamma^T  X_i) \exp(- \Lambda(t) \exp(\beta^T  Z_i))}.$$
  		Similarly, we define 
		\[
			 \tilde\Lambda_{n_1+n_2}(t) = \int_0^t \frac{d  N (s)}{ \sum_{j=1}^{n_1+n_2}  R_j(s)\exp(\bar\beta^T Z_j) \{ \Delta_j + (1- \Delta_j) \bar{g}_j( Y_j; \bar \theta, \bar \Lambda)\}}.
		\]
		We have already noticed that $\sup_{n_1,n_2} \tilde \Lambda_{n_1+n_2}(\tau) < \infty$ a.s.; 
		also, following the lines of Lemma~2 (ii) in \cite{Lu2008}, there exists a non-negative and integrable function $\eta: [0, \bar\tau_0] \to (0, \infty)$ bounded away from 0 such that, for each $\omega \in \Omega$, there exists a subsequence $(n_{1,k_1}(\omega)+n_{2,k_2}(\omega))$ such that $\sup_{t \in (0,\bar \tau_0]} | \frac{\dd \bar \Lambda_{n_{1,k_1}+n_{2,k_2}}}{\dd \tilde \Lambda_{n_{1,k_1}+n_{2,k_2}}}(t) - \eta(t) | \to 0$.
		For notational convenience, we will write $\bar n_k = n_{1,k_1}+n_{2,k_2}$ from now on.
		Arguing for a fixed $\omega$ and along subsequences had similarly been done by \cite{murphy94} and \cite{scharfstein98} based on Helly's theorem.
		
		To prove the desired consistency of the pooled estimators, we will show that the difference of the log-likelihoods, say $\bar \ell_{\bar n_k} (\bar \Lambda_{\bar n_k}, \bar \theta_{\bar n_k})$ and $ \bar \ell_{\bar n_k}(\tilde \Lambda_{\bar n_k}, \bar \theta) $ converges to zero.
		Clearly,
		\begin{align*}
			0 \leq & \ \bar \ell_{\bar n_k} (\bar \Lambda_{\bar n_k}, \bar \theta_{\bar n_k}) - \bar \ell_{\bar n_k}(\tilde \Lambda_{\bar n_k}, \bar \theta) \\ 
			= &  \frac1{\bar n_k} \sum_{i=1}^{\bar n_k} \Big[  \Delta_{i} \log{\frac{\bar g_i( Y_i; \bar \Lambda_{\bar n_k}, \bar \theta_{\bar n_k})}{\bar g_i( Y_i, \tilde \Lambda_{\bar n_k}, \bar \theta)}} +  \Delta_i \log \frac{\Delta \bar \Lambda_{\bar n_k}( Y_i)}{\Delta \tilde \Lambda_{\bar n_k}( Y_i)} + \Delta_i(\bar \beta_{n_k} - \bar \beta)^T  Z_i \\
   &\qquad\qquad+\log \frac{\bar S_i( Y_i; \bar \Lambda_{\bar n_k}, \bar \theta_{\bar n_k})}{\bar S_i( Y_i; \tilde \Lambda_{\bar n_k}, \bar \theta)} \Big]
			\\
			= &  \frac1{\bar n_k} \sum_{i=1}^{\bar n_k} \Big[  \Delta_{i} \log\frac{\phi(\bar \gamma^T_{n_k} X_i)}{\phi(\bar \gamma^T X_i)}  
			-  \Delta_i \bar \Lambda_{n_k} ( Y_i)\exp(\bar \beta^T_{n_k} X_i) 
			+  \Delta_i \tilde \Lambda_{n_k} ( Y_i)\exp(\bar \beta^T X_i) \\
			& \qquad +  \Delta_i \log \frac{\Delta \bar \Lambda_{\bar n_k}( Y_i)}{\Delta \tilde \Lambda_{\bar n_k}( Y_i)} + \Delta_i(\bar \beta_{n_k} - \bar \beta)^T  Z_i + (1- \Delta_i)\log \frac{\bar S_i( Y_i; \bar \Lambda_{\bar n_k}, \bar \theta_{\bar n_k})}{\bar S_i( Y_i; \tilde \Lambda_{\bar n_k}, \bar \theta)} \Big]
		\end{align*} 
		where 
		and $\bar S_i(t; \Lambda,  \theta) = 1 - \phi( \gamma^T  X_i) + \phi( \gamma^T  X_i) \exp(- \Lambda(t) \exp(\beta^T  Z_i))$.
		
		The space of bounded, increasing functions with discontinuities only at $\tau_{0,1}$ and $\tau_{0,2}$ is separable with respect to the supremum norm.
		Also, Euclidean spaces are separable.
		Denote by $(\Lambda_l,\theta_l)_{l\in\mathbb{N}}$ a countable subset that is dense in the product of the just-described spaces.
		For each $l \in \mathbb{N}$, by the strong law of large numbers,
		\[
  \begin{split}
      	\frac1{\bar n_k} \sum_{i=1}^{\bar n_k} &\Big[  \Delta_{i} \log \phi( \gamma_l^{T}  X_i)- \Delta_i\Lambda_l(t)\exp(\beta^{T}_l  Z_i)+  \Delta_i \log \dd\Lambda_l( Y_i)+ \Delta_i\beta_l^T  Z_i \\
       &\quad+(1- \Delta_i)\log \bar S_i( Y_i; \Lambda_l,  \theta_l) \Big]
  \end{split}
		\]
		converges a.s.\ to its expectation, i.e., for all $\omega \in \Omega_l$ with $\bar{\mathbb{P}}(\Omega_l)=1$.
		From now on, we restrict $\omega$ to be in the intersection $\bigcap_{l \in \mathbb{N}} \Omega_l $ which also has probability 1.
		Consequently, also due to the continuity of the likelihoods in $\Lambda$ and $\theta$,  
		\begin{align*}
			0 \leq & \ \bar \ell_{\bar n_k} (\bar \Lambda_{\bar n_k}, \bar \theta_{\bar n_k}) - \bar \ell_{\bar n_k}(\tilde \Lambda_{\bar n_k}, \bar \theta) \\ 
			= &\mathbb E_{\bar{\mathbb P}} \Big[   \Delta_{i} \log \frac{\phi( \gamma^{*^T}  X_i)}{\phi( \bar\gamma^T  X_i)}-\{\Lambda^*(t)\exp(\beta^{*^T}  Z_i)-\bar{\Lambda}(t)\exp(\bar\beta^T  Z_i)\}  +  \Delta_i \log \frac{\dd\Lambda^*( Y_i)}{\dd\bar\Lambda( Y_i)}\\
			&\qquad\qquad+ \Delta_i(\beta^* - \bar \beta)^T  Z_i + \log \frac{\bar S_i( Y_i; \Lambda^*,  \theta^*)}{\bar S_i( Y_i; \bar \Lambda, \bar \theta)} \Big]+o(1)
		\end{align*}
		where the expectation is taken with respect to ${X},{Y},{Z}$ and $\Lambda^*$, $\beta^*$, $\gamma^*$ are fixed (depending on $\omega$).
		For a.e.\ $\omega$, the conditional expectation in the previous display represents a negative KL-divergence of the logistic-Cox model specified by $(\bar \Lambda, \bar \theta)$ from the model specified by $( \Lambda^*(\omega),  \theta^*(\omega))$.
		As a consequence, it must be 0, i.e., $\bar \ell (\Lambda^*, \theta^*) = \bar \ell (\bar \Lambda, \bar \theta)$ $\bar{\mathbb{P}}$-a.e..
		We use this fact to identify all model components, one by one; every equality below is to be understood $\bar{\mathbb{P}}$-a.s..
		
		We first consider ${\Delta}=0$ and ${Y}\geq \bar{\tau}_0$, for which $\bar{S}( Y; \Lambda^*, \theta^*) =\phi( \gamma^{*T}  X) $ and $\bar S( Y; \bar \Lambda, \bar \theta)=\phi(\bar \gamma^{T}  X)$.
		From this we can identify $\gamma^*=\bar\gamma$ a.s.\ for the logistic model. 
		Next, for $ \Delta=0$ and $ Y < \bar\tau_0$, we obtain 
		$\bar S( Y; \Lambda^*,  \theta^*) = \bar S( Y; \bar \Lambda, \bar \theta)$, hence
		\begin{align*}
			%\label{eq:S*}
			\exp (-\Lambda^*( Y) \exp( \beta^{*T} Z))
			= \exp (-\bar\Lambda( Y) \exp(\bar \beta^{T} Z))
		\end{align*}
		Upon inserting different combinations of $ Y$ and $ Z$, 
		we conclude that $\beta^* = \bar \beta$ and $\Lambda^* = \bar \Lambda$ a.s.. 
		
	\end{proof}

The following lemma establishes the consistency of randomly permuted Z-estimators.
Since the proof does not make use of the specific underlying model structure, it is clear that this result holds more generally, i.e., also beyond logistic-Cox cure models.
 
 \begin{lemma}
 \label{lemma:consistency_permutation}
Assume the maximizer $ (\bar\Lambda,\bar\theta)$ defined in \eqref{eqn:bar_parameters} is unique.  The permutation  estimators $ (\hat \Lambda_{n_1,1}^\pi, \hat \theta_{n_1,1}^\pi)$ and $ (\hat \Lambda_{n_2,2}^\pi, \hat \theta_{n_2,2}^\pi)$ converge in probability to $(\bar\Lambda,\bar\theta)$.     
 \end{lemma}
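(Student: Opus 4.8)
The plan is to establish consistency by the same M-estimation route used for the pooled estimator in Lemma~\ref{lemma:consistency_pooled}, but now with the pooled empirical measure replaced by the empirical measure $\p_{n_i,i}^\pi$ of the $i$-th permuted sample. Write $\bar\ell_{n_i,i}^\pi(\Lambda,\theta)=\p_{n_i,i}^\pi\, l(\cdot;\Lambda,\theta)$ for the (misspecified) log-likelihood fitted to the $i$-th permuted sample, so that $(\hat\Lambda_{n_i,i}^\pi,\hat\theta_{n_i,i}^\pi)$ is its maximiser, while $(\bar\Lambda,\bar\theta)$ is the unique maximiser of the limit criterion $\E_{\bar\p}[l(\cdot;\Lambda,\theta)]$. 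Since a permutation of the pooled sample assigns to group $i$ a simple random sample of size $n_i$ drawn \emph{without replacement} from the $n_1+n_2$ pooled observations, the whole argument is model-agnostic: it only uses that the permuted empirical measure approximates $\bar\p$, that the estimators stay in a compact set, and that the Kullback--Leibler minimiser is unique. Throughout, convergence is understood conditionally on the data, in (outer) probability; for a consistency statement this is equivalent to joint convergence.

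The key new ingredient compared with Lemma~\ref{lemma:consistency_pooled} is a uniform law of large numbers for $\p_{n_i,i}^\pi$. Consider the class $\mathcal{F}=\{\,l(\cdot;\Lambda,\theta):\theta\text{ in a compact set},\ \Lambda\text{ increasing with }\Lambda(\bar\tau_0)\le M\,\}$ together with the score class $\{\psi_{(\Lambda,\theta),h}:h\in\mathcal{H}_m\}$. Under (A1)--(A4) and (I1)--(I4), boundedness of the covariates, compactness of the parameter sets, the uniform bound on $\Lambda$, and the bounded variation of $h_1$ make these $\bar\p$-Glivenko--Cantelli classes, exactly as in the entropy computations underlying \cite{Lu2008}. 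Conditionally on the data, $\p_{n_i,i}^\pi f$ is an average of $n_i$ values sampled without replacement from $\{f(O_1),\dots,f(O_{n_1+n_2})\}$, with conditional mean equal to the pooled average $\p_{n_1+n_2}f$ and conditional variance of order $1/n_i$ (up to the finite-population correction). A symmetrisation argument as in the empirical-process/bootstrap machinery of \cite{VW96} transfers the Glivenko--Cantelli property to sampling without replacement, so $\sup_{f\in\mathcal F}|\p_{n_i,i}^\pi f-\p_{n_1+n_2}f|\to0$ in probability; combined with the ordinary convergence $\p_{n_1+n_2}f\to\bar\p f$ this yields
\[
\sup_{\theta,\Lambda}\big|\bar\ell_{n_i,i}^\pi(\Lambda,\theta)-\E_{\bar\p}[l(\cdot;\Lambda,\theta)]\big|\xrightarrow{P}0 \qquad\text{and}\qquad \sup_{\theta,\Lambda,h}\big|\p_{n_i,i}^\pi\psi_{(\Lambda,\theta),h}-\bar\p\psi_{(\Lambda,\theta),h}\big|\xrightarrow{P}0 .
\]

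Next I would confine the estimators to a compact set. Arguing as in the proof of Lemma~\ref{lemma:consistency_pooled}, the permuted profile estimator admits a Breslow-type representation
\[
\hat\Lambda_{n_i,i}^\pi(t)=\int_0^t\frac{\dd N_i^\pi(s)}{\sum_{j=1}^{n_i}\1_{\{Y_{ij}^\pi\ge s\}}\,e^{(\hat\beta_{n_i,i}^\pi)^{T}Z_{ij}^\pi}\big\{\Delta_{ij}^\pi+(1-\Delta_{ij}^\pi)\,g_{ij}^\pi(Y_{ij}^\pi;\hat\Lambda_{n_i,i}^\pi,\hat\theta_{n_i,i}^\pi)\big\}},
\]
and since, by (A3) and (I3), the permuted empirical measure places asymptotically positive mass on $\{\Delta=1,\ Y=\bar\tau_0\}$ (converging to $\bar\p(\Delta=1,Y=\bar\tau_0)>0$), the denominator stays bounded away from $0$ on $(0,\bar\tau_0]$, giving $\sup_{n_i}\hat\Lambda_{n_i,i}^\pi(\bar\tau_0)<\infty$ in probability. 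The restrictions on $\beta$, $\Lambda$ and $\gamma$ derived in the existence Lemma used only features of $\bar\p$ (positive mass at $\bar\tau_0$ and nondegeneracy of $\mathrm{Var}(Z)$, $\mathrm{Var}(X)$), which $\p_{n_i,i}^\pi$ inherits for large $n_i$; hence $(\hat\Lambda_{n_i,i}^\pi,\hat\theta_{n_i,i}^\pi)$ eventually lies in the same compact set used to maximise $\E_{\bar\p}[l]$.

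Finally I would run the subsequence identification argument. Fix $\omega$ outside a null set and extract, via Helly's selection theorem applied to the uniformly bounded monotone functions $\hat\Lambda_{n_i,i}^\pi$ and compactness for $\hat\theta_{n_i,i}^\pi$, a further subsequence along which $\hat\Lambda_{n_i,i}^\pi\to\Lambda^\dagger$ (uniformly, the limit being continuous by (A1)) and $\hat\theta_{n_i,i}^\pi\to\theta^\dagger$. Using the uniform convergence of $\bar\ell_{n_i,i}^\pi$ and that $(\hat\Lambda_{n_i,i}^\pi,\hat\theta_{n_i,i}^\pi)$ maximises $\bar\ell_{n_i,i}^\pi$ while dominating its value at the fixed point $(\bar\Lambda,\bar\theta)$, we obtain $\E_{\bar\p}[l(\cdot;\Lambda^\dagger,\theta^\dagger)]\ge\E_{\bar\p}[l(\cdot;\bar\Lambda,\bar\theta)]$, so $(\Lambda^\dagger,\theta^\dagger)$ is also a maximiser; by the assumed uniqueness, $(\Lambda^\dagger,\theta^\dagger)=(\bar\Lambda,\bar\theta)$. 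As the limit is the same along every subsequence, the whole sequence converges, which proves the claim (and the argument is identical for $i=1$ and $i=2$). The main obstacle is precisely the uniform law of large numbers for sampling without replacement in tandem with the infinite-dimensional nuisance parameter $\Lambda$: one must transfer the Glivenko--Cantelli property to the permuted measure and simultaneously keep $\hat\Lambda_{n_i,i}^\pi$ bounded so that Helly selection applies, all while the fitted logistic-Cox model is misspecified on the pooled data.
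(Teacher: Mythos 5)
Your proposal takes a genuinely different route from the paper. The paper's proof stays entirely within the $Z$-estimation framework: it observes that the permuted score $\p^\pi_{n_i,i}\psi_{(\bar\Lambda_{n_1+n_2},\bar\theta_{n_1+n_2}),h}$ has conditional mean zero given the data and conditional variance $O_p((n_1+n_2)^{-1})$ (elementary algebra for permuted linear statistics), so by Chebyshev the pooled estimator is asymptotically a root of the permuted score equations; the assumed uniqueness of the root of the limiting score map then forces the permuted estimator to merge with the pooled (hence with the KL-optimal) one. Your argument is instead an argmax-consistency (M-estimation) argument: a uniform law of large numbers for the permuted empirical measure over a Glivenko--Cantelli class, confinement to a compact set via the Breslow-type representation, Helly selection, and identification of the limit via uniqueness of the KL minimizer. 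Your route is heavier — it needs uniform convergence of the criterion over the whole parameter set, where the paper only needs control of the score at a single data-dependent point — but it avoids the paper's somewhat delicate final step of deducing closeness of two approximate roots from uniqueness. The uniform Glivenko--Cantelli statement for sampling without replacement that you assert is available from the exchangeable-weights machinery of Section~3.6--3.7 of \cite{VW96}, which the paper itself invokes for the Donsker analogue in the subsequent lemma, so that ingredient is legitimate.

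There is, however, one step that fails as written: you compare the permuted log-likelihood at its maximizer with its value ``at the fixed point $(\bar\Lambda,\bar\theta)$''. The empirical (nonparametric) log-likelihood contains the terms $\Delta_j\log\Delta\Lambda(Y_j)$ for uncensored observations, so evaluating it at a continuous $\bar\Lambda$ gives $-\infty$ and the inequality $\bar\ell^\pi_{n_i,i}(\hat\Lambda^\pi_{n_i,i},\hat\theta^\pi_{n_i,i})\ge\bar\ell^\pi_{n_i,i}(\bar\Lambda,\bar\theta)$ is vacuous. This is precisely the issue that the paper's Lemma~\ref{lemma:consistency_pooled} circumvents by comparing with the discretized surrogate $\tilde\Lambda_{n_1+n_2}$ (jumps at the observed event times, Helly control of the ratio $\dd\bar\Lambda_{n}/\dd\tilde\Lambda_{n}$); you would need the permuted analogue $\tilde\Lambda^\pi_{n_i,i}$ here, and the ``uniform convergence of the criterion'' must accordingly be formulated for the difference of log-likelihoods rather than for the criterion itself. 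With that repair — routine, given that you already import the Breslow representation from Lemma~\ref{lemma:consistency_pooled} — your argument goes through and delivers the same conclusion.
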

	\begin{proof}
	    		First, we would like to point out that conditional convergence in probability (given a $\sigma$-algebra) is equivalent to the unconditional convergence in probability; a variant of Fact~1 in the Supporting Information of \cite{dobler_et_al_19} similarly holds for the present setting.
		That is why we do not distinguish between conditional and unconditional consistency.

		To prove the consistency of the permuted estimators, we are going to employ the permutation version of the score equations, i.e., 
		$\mathbb{P}_{n_i,i}^\pi \psi_{(\Lambda, \theta),h} \stackrel!=0$ for all indexing $h$, $i=1,2$.
		So far, we know by definition that  $\mathbb{P}_{n_i,i}^\pi \psi_{(\hat \Lambda^\pi_{n_i,i}, \hat\theta^\pi_{n_i,i}),h} = 0$
		and that 
		$\mathbb{P}_{n_1+n_2} \psi_{(\bar \Lambda_{n_1+n_2}, \bar\theta_{n_1+n_2}),h} = 0$.
		
		Also, since 
		$ n_1\mathbb{P}_{n_1,1}^\pi \psi_{( \Lambda, \theta),h} + n_2\mathbb{P}_{n_2,2}^\pi \psi_{( \Lambda, \theta),h}
		= (n_1+n_2)\mathbb{P}_{n_1+n_2} \psi_{( \Lambda, \theta),h}$, 
		we have that 
		$n_1\mathbb{P}_{n_1,1}^\pi \psi_{(\bar \Lambda_{n_1+n_2}, \bar\theta_{n_1+n_2}),h} 
		= - n_2\mathbb{P}_{n_2,2}^\pi \psi_{(\bar \Lambda_{n_1+n_2}, \bar\theta_{n_1+n_2}),h}$.
		Thus, because both of these permuted expressions are connected, we will only focus on the index $i=1$ from now on.
		
		Furthermore, upon integrating out all permutations, it is easy to see that the (conditional) expectation is 
		$\E[\mathbb{P}_{n_1,1}^\pi \psi_{(\bar \Lambda_{n_1+n_2}, \bar\theta_{n_1+n_2}),h} \  | \  Y_i,  \Delta_i,  X_i,  Z_i : i = 1, \dots, n_1+n_2] = 0. $
		Additionally, straightforward and standard algebra for permuted linear statistics for the conditional variance leads to
		$$Var[\mathbb{P}_{n_1,1}^\pi \psi_{(\bar \Lambda_{n_1+n_2}, \bar\theta_{n_1+n_2}),h} \  | \  Y_i,  \Delta_i,  X_i,  Z_i : i = 1, \dots, n_1+n_2] = O_p((n_1+n_2)^{-1}). $$
		
		Consequently, Chebychev's inequality (applied to the conditional distribution) verifies that the permutation-based score equations evaluated at the point $(\bar \Lambda_{n_1+n_2}, \bar\theta_{n_1+n_2})$ all converge to 0 in probability.
		Similar convergences in probability (not necessarily to zero) also hold for other evaluation points.
		
		Hence, the pooled estimator is asymptotically a solution to the permutation-based score equations.
		Now, since $(\hat \Lambda^\pi_{n_1,1}, \hat\theta^\pi_{n_1,1})$ is another (finite sample) solution and the ``true'' solution $(\bar \Lambda, \bar \theta)$ is assumed to be unique,
		the permuted estimator must approach the pooled estimator in probability as the sample size goes in infinity.
		Anything else would contradict the continuity of the map $(\Lambda, \theta) \mapsto \bar{\mathbb{E}}_{\bar{\mathbb{P}}}( \bar \ell (\Lambda, \theta)) $.
		
		\iffalse
		We will show this last step by \emph{reductio ad absurdum}.
		Thus, we assume for a moment that consistency of the permuted statistic does not hold which entails
		that there exists a subsequence $(n_1',n_2')\subset (n_1, n_2) $ for which every further subsequence $(n''_1,n''_2) \subset (n_1',n_2')$ results in 
		$$\mathbb P\Big(\limsup_{n_1'', n_2''\to \infty}\| \hat \Lambda^\pi_{n_1'',1} - \bar \Lambda \|_\infty \vee \| \hat\theta^\pi_{n_1'',1} - \bar \theta \| > 0 \Big) >0,$$
		also similarly with $(\bar \Lambda, \bar \theta) $ replaced by $(\bar \Lambda_{n''_1+n''_2}, \bar \theta_{n''_1+n''_2})$, and 
		\begin{align}
			\label{eq:score_diff}
			\limsup_{n_1'', n_2''\to \infty} \Big|\mathbb{P}_{n''_1,1}^\pi \psi_{(\hat \Lambda^\pi_{n_1'',1}, \hat\theta^\pi_{n_1'',1}),h}  - \mathbb{P}_{n''_1,1}^\pi \psi_{(\bar \Lambda_{n_1''+n''_2}, \bar\theta_{n_1''+n''_2}),h}\Big| >0  
		\end{align}
		for some index $h$
		with positive probability along that subsequence.
		But the first term in the difference in~\eqref{eq:score_diff} is 0 by definition, and the second converges to zero almost surely along subsequence.
		This yields a contradiction to the inequality in~\eqref{eq:score_diff}.
		\fi
\end{proof}

\begin{lemma}
\label{lemma:convergence_permutation}
Assume the maximizer $ (\bar\Lambda,\bar\theta)$ defined in \eqref{eqn:bar_parameters} is unique.    Conditionally on the observations, the process
		\[
		\langle n_i^{1/2}(\hat{\Lambda}^\pi_{n_i,i}-\bar{\Lambda}_{n_1+n_2}),n_i^{1/2}(\hat{\theta}_{n_i,i}^\pi-\bar{\theta}_{n_1+n_2})\rangle,\qquad i=1,2
		\]
		defined as in \eqref{eqn:process_h} and	indexed by $h\in\mathcal{H}_m$ converges weakly in $l^\infty(\mathcal{H}_m)$ to a tight Gaussian process $G^*_i$ in   $l^\infty(\mathcal{H}_m)$, in outer probability.
\end{lemma}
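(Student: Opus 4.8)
The plan is to apply the general Donsker-type theorem for permutation-based Z-estimators established in Appendix~\ref{sec:app_Z-est} to the score maps $\psi_{(\Lambda,\theta),h}$. Write $\eta=(\Lambda,\theta)$ for the parameter, $\bar\eta=(\bar\Lambda,\bar\theta)$ for the Kullback--Leibler minimizer from \eqref{eqn:bar_parameters}, and set $\hat\eta^\pi_{n_i,i}=(\hat\Lambda^\pi_{n_i,i},\hat\theta^\pi_{n_i,i})$ and $\bar\eta_{n_1+n_2}=(\bar\Lambda_{n_1+n_2},\bar\theta_{n_1+n_2})$. By construction the permuted estimator satisfies $\p^\pi_{n_i,i}\psi_{\hat\eta^\pi_{n_i,i},h}=0$ and the pooled estimator satisfies $\p_{n_1+n_2}\psi_{\bar\eta_{n_1+n_2},h}=0$ for every $h\in\mathcal H_m$. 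Subtracting these identities, and adding and subtracting $\p^\pi_{n_i,i}\psi_{\bar\eta_{n_1+n_2},h}$, gives the standard Z-estimator decomposition
\[
\sqrt{n_i}\big(\p^\pi_{n_i,i}\psi_{\hat\eta^\pi_{n_i,i},h}-\p^\pi_{n_i,i}\psi_{\bar\eta_{n_1+n_2},h}\big)=-\mathbb{G}^\pi_{n_i,i}\psi_{\bar\eta_{n_1+n_2},h},
\]
where $\mathbb{G}^\pi_{n_i,i}=\sqrt{n_i}(\p^\pi_{n_i,i}-\p_{n_1+n_2})$ is the permutation empirical process. The right-hand side is thus this process evaluated at the random center $\bar\eta_{n_1+n_2}$, while the left-hand side will be linearized, via the derivative $\dot\Psi_{\bar\eta}$ of the map $\eta\mapsto\bar\p\,\psi_{\eta,\cdot}$, into $\dot\Psi_{\bar\eta}\big(\langle\sqrt{n_i}(\hat\Lambda^\pi_{n_i,i}-\bar\Lambda_{n_1+n_2}),\sqrt{n_i}(\hat\theta^\pi_{n_i,i}-\bar\theta_{n_1+n_2})\rangle\big)(h)$. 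Inverting $\dot\Psi_{\bar\eta}$ then produces the asserted Gaussian limit.

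To invoke the master theorem I would verify its hypotheses for the present score functions. Consistency of both the permuted and the pooled estimators toward $\bar\eta$ is already supplied by Lemmas~\ref{lemma:consistency_pooled} and~\ref{lemma:consistency_permutation}, so no further work is needed there. Next I would check that $\{\psi_{\eta,h}:\eta\text{ near }\bar\eta,\ h\in\mathcal H_m\}$ is a permutation-Donsker class with a square-integrable envelope. This follows from the same ingredients used for the non-permuted statement (Theorem~3 in \cite{Lu2008}): $\gamma,\beta$ vary over compacts and the relevant $\Lambda$ range over a uniformly bounded set near $\bar\Lambda$ (cf.\ Step~2 of the proof that the argmax exists, and (A1)), the covariates $X,Z$ have bounded support by (A2), and $\phi$ together with $g$ is smooth and bounded away from $0$ and $1$ by (I1); hence each $\psi_{\eta,h}$ is uniformly Lipschitz in $(\eta,h)$ with bounded envelope, while $\mathcal H_m$ is a bounded-variation ball, which is Donsker. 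The permutation central limit theorem underlying the general result then upgrades these entropy bounds to conditional weak convergence of $\mathbb{G}^\pi_{n_i,i}$ on $l^\infty(\mathcal H_m)$, and asymptotic equicontinuity together with $\bar\eta_{n_1+n_2}\to\bar\eta$ lets us replace the random center by $\bar\eta$ without affecting the limit.

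With these conditions verified, the master theorem yields, conditionally on the data and in outer probability,
\[
\langle n_i^{1/2}(\hat\Lambda^\pi_{n_i,i}-\bar\Lambda_{n_1+n_2}),\,n_i^{1/2}(\hat\theta^\pi_{n_i,i}-\bar\theta_{n_1+n_2})\rangle\ \xrightarrow{d}\ G_i^*:=-\dot\Psi_{\bar\eta}^{-1}\,\mathbb{G}^\pi\psi_{\bar\eta,\cdot},
\]
a tight centered Gaussian process in $l^\infty(\mathcal H_m)$, where $\mathbb{G}^\pi$ denotes the conditional Gaussian limit of the permutation empirical process. The limits for the two groups are not independent but are linked through the identity $n_1\p^\pi_{n_1,1}\psi_{\eta,h}+n_2\p^\pi_{n_2,2}\psi_{\eta,h}=(n_1+n_2)\p_{n_1+n_2}\psi_{\eta,h}$ already exploited in the proof of Lemma~\ref{lemma:consistency_permutation}, which encodes the exchangeable-weight structure of the permutation, in analogy with the nonparametric situation treated in Theorem~\ref{theo:MST_homog_perm}.

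The step I expect to be the main obstacle is establishing that $\dot\Psi_{\bar\eta}$ is continuously invertible. Under correct model specification, as in \cite{Lu2008}, this operator coincides with the information operator and its invertibility is a direct consequence of the identifiability conditions (I1)--(I4). Here, however, $\bar\eta$ is a Kullback--Leibler projection of the pooled law $\bar\p=\kappa\p_1+(1-\kappa)\p_2$, which is in general \emph{not} a logistic-Cox law; consequently $\dot\Psi_{\bar\eta}$ is a sandwich-type operator rather than a genuine information operator, and its non-degeneracy no longer follows automatically. I would derive invertibility from the assumed uniqueness of $\bar\eta$ together with a strict second-order condition for $\eta\mapsto\bar\p\,l(\Delta,Y,X,Z;\Lambda,\theta)$ at its maximizer, reusing the identification argument from the proof of Lemma~\ref{lemma:consistency_pooled} to rule out degenerate perturbation directions in $(\Lambda,\theta)$. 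A secondary technical point, handled by the uniform Lipschitz bounds above, is that the linearization and equicontinuity must hold uniformly over the estimated centers $\bar\eta_{n_1+n_2}$.
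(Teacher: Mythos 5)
Your proposal is correct and follows essentially the same route as the paper: both invoke the general permutation Z-estimator result (Theorem~\ref{thm:perm_Z_est}) together with the consistency supplied by Lemmas~\ref{lemma:consistency_pooled} and~\ref{lemma:consistency_permutation}, and then verify the equicontinuity conditions \eqref{eq:approx_1} and \eqref{eq:approx_3}, the conditional weak convergence \eqref{eq:cond_weak_conv}, and the Fr\'echet differentiability of $(\Lambda,\theta)\mapsto\bar\p\psi_{(\Lambda,\theta),h}$ with continuously invertible derivative. The only divergence is one of emphasis: the paper's actual labor lies in the explicit term-by-term verification of \eqref{eq:approx_1} and \eqref{eq:approx_3} (the $a_{j,h}$ decomposition with Taylor expansions and integration by parts), which you compress into a Lipschitz/entropy remark, whereas the continuous invertibility of the derivative at the Kullback--Leibler projection --- which you rightly single out as the delicate point under misspecification --- is simply asserted in the paper to carry over unchanged from \cite{Lu2008}.
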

\begin{proof}
    We will apply Theorem \ref{thm:perm_Z_est}. 
		We  need to show that the sample specific estimators $(\hat\Lambda^\pi_{n_i,i},\hat\theta^\pi_{n_i,i})$ satisfy the conditions of Theorem 3.3.1 in \cite{VW96}. Consistency of the estimators was shown in Lemmas~\ref{lemma:consistency_pooled} and \ref{lemma:consistency_permutation} above. 
		Verification of the other conditions  can be done as in \cite{Lu2008}. We omit the other details here since the proof goes along the same lines as points a)-c) below.  
		
		a)				To verify condition \eqref{eq:approx_1} of Theorem \ref{thm:perm_Z_est}, it suffices to show that 
		for any sequence $\epsilon_{n_i}\to  0$,
		\begin{equation}
			\label{eqn:cond1}
			\sup_{\substack{\Vert\Lambda-\bar\Lambda_{n_1+n_2}\Vert_\infty\leq\epsilon_{n_i},\\ \,\Vert\beta-\bar\beta_{n_1+n_2}\Vert\leq\epsilon_{n_i},\\ \Vert\gamma-\bar\gamma_{n_1+n_2}\Vert\leq\epsilon_{n_i}}}\frac{\left|(\p^\pi_{n_i,i}-\p_{n_1+n_2})\psi_{(\Lambda,\theta),h}-(\p^\pi_{n_i,i}-\p_{n_1+n_2})\psi_{(\bar\Lambda_{n_1+n_2},\bar\theta_{n_1+n_2}),h}\right|}{n_i^{-1/2}\vee \Vert\beta-\bar\beta_{n_1+n_2}\Vert\vee\Vert\gamma-\bar\gamma_{n_1+n_2}\Vert\vee \Vert\Lambda-\bar\Lambda_{n_1+n_2}\Vert_\infty }%=o^*_P(1).
		\end{equation}
  converges to zero in probability given the data. 
		For simplicity, we can write 
		\[
		\begin{aligned}
			&(\p^\pi_{n_i,i}-\p_{n_1+n_2})\psi_{(\Lambda,\theta),h}-(\p^\pi_{n_i,i}-\p_{n_1+n_2})\psi_{(\bar\Lambda_{n_1+n_2},\bar\theta_{n_1+n_2}),h}\\
   &=\sum_{j=1}^6(\p^\pi_{n_i,i}-\p_{n_1+n_2})a_{j,h}
		\end{aligned}
		\]
		where
		\begin{equation}
			\label{eqn:a_j}
			\begin{aligned}
				a_{1,h}(\delta,y,x,z)&=-h_{21}^Tx\left\{\phi(\gamma^Tx)-\phi(\bar\gamma^T_{n_1+n_2}x)\right\}\\
				a_{2,h}(\delta,y,x,z)&= (1-\delta) h_{21}^Tx\left\{g(y,\Lambda,\theta)-g(y,\bar\Lambda_{n_1+n_2},\bar\theta_{n_1+n_2})\right\}\\
				a_{3,h}(\delta,y,x,z)&= \delta \left\{e^{\beta^Tz}\int_0^{y}h_1(s)\dd\Lambda(s)-e^{\bar\beta^T_{n_1+n_2}z}\int_0^{y}h_1(s)\dd\bar\Lambda_{n_1+n_2}(s)\right\}\\
				a_{4,h}(\delta,y,x,z)&= \delta h^T_{22}z \left\{e^{\beta^Tz}\Lambda(y)-e^{\bar\beta^T_{n_1+n_2}z}\bar\Lambda_{n_1+n_2}(y)\right\}\\
				a_{5,h}(\delta,y,x,z)&= (1-\delta) \left\{g(y,\Lambda,\theta)e^{\beta^Tz}\int_0^{y}h_1(s)\dd\Lambda(s)\right.\\
				&\quad\qquad\qquad-\left.g(y,\bar\Lambda_{n_1+n_2},\bar\theta_{n_1+n_2})e^{\bar\beta^T_{n_1+n_2}z}\int_0^{y}h_1(s)\dd\bar\Lambda_{n_1+n_2}(s)\right\}\\
				a_{6,h}(\delta,y,x,z)&= (1-\delta) h^T_{22}z \left\{g(y,\Lambda,\theta)e^{\beta^Tz}\Lambda(y)\right.\\
    &\quad\qquad\qquad\qquad-\left.g(y,\bar\Lambda_{n_1+n_2},\bar\theta_{n_1+n_2})e^{\bar\beta^T_{n_1+n_2}z}\bar\Lambda_{n_1+n_2}(y)\right\}.
			\end{aligned}
		\end{equation}
		Next we consider the third term. The other terms can be handled similarly. First note that, by Lemma~\ref{lemma:consistency_pooled}, $\bar{\Lambda}_{n_1+n_2}$ and $\bar{\theta}_{n_1+n_2}$ are  consistent estimates of  $\bar{\Lambda}$ and $\bar{\theta}$, and as a result they are bounded on a set of probability converging to one.  From a Taylor expansion we have
		\begin{equation*}
			\label{eqn:a3}
			\begin{split}
				&(\p^\pi_{n_i,i}-\p_{n_1+n_2})a_{3,h}\\
    &=(\beta-\bar\beta_{n_1+n_2})^T\int z\delta e^{\bar\beta^T_{n_1+n_2}z}\int_0^{y}h_1(s)\dd\Lambda(s)\dd 	(\p^\pi_{n_i,i}-\p_{n_1+n_2})(\delta,y,x,z)\\
				&\quad+\int \delta e^{\bar\beta^T_{n_1+n_2}z}\int_0^{y}h_1(s)\dd(\Lambda-\bar\Lambda_{n_1+n_2})(s)\dd 	(\p^\pi_{n_i,i}-\p_{n_1+n_2})(\delta,y,x,z)\\
				&\quad+o^*_{P}\left(\int \delta e^{\bar\beta^T_{n_1+n_2}z}\int_0^{y}h_1(s)\dd(\Lambda-\bar\Lambda_{n_1+n_2})(s)\dd 	(\p^\pi_{n_i,i}-\p_{n_1+n_2})(\delta,y,x,z)\right)\\
				&\quad+o^*_P(\Vert\beta-\bar\beta_{n_1+n_2}\Vert)\\
			\end{split}
		\end{equation*}
		For the first term, since the class of functions that we are integrating is Donsker and uniformly bounded, by Theorem 3.7.2 in \cite{VW96} it follows that, conditionally on the observations
		\begin{equation}
			\label{eqn:a3_term1}
			\sup_{\Vert\Lambda-\bar\Lambda_{n_1+n_2}\Vert_\infty\leq\epsilon_{n_i}}	\int z\delta e^{\bar\beta^Tz}\int_0^{y}h_1(s)\dd\Lambda(s)\dd 	(\p^\pi_{n_i,i}-\p_{n_1+n_2})(\delta,y,x,z)=o^*_P(1).
		\end{equation}
		The second term can be rewritten as 
		\[
		\int_0^{\bar\tau_0} D_n(s)h_1(s)\dd(\Lambda-\bar\Lambda_{n_1+n_2})(s)
		\]
		where
		\[
		D_n(s)=\int\delta \1_{\{y>s\}}e^{\bar\beta^Tz} \dd(\p^\pi_{n_i,i}-\p_{n_1+n_2})(\delta,y,x,z). 
		\]
		By integration by parts and the chain rule we have
		\[
		\begin{split}
			&\int_0^{\bar\tau_0} D_n(s)h_1(s)\dd(\Lambda-\bar\Lambda_{n_1+n_2})(s)\\
			&=  D_n(\bar\tau_0)h_1(\bar\tau_0)(\Lambda-\bar\Lambda_{n_1+n_2})(\bar\tau_0)- \int_0^{\bar\tau_0} (\Lambda-\bar\Lambda_{n_1+n_2})\dd \left[D_n(s)h_1(s)\right]\\
			&=D_n(\bar\tau_0)h_1(\bar\tau_0)(\Lambda-\bar\Lambda_{n_1+n_2})(\bar\tau_0)-\int_0^{\bar\tau_0} (\Lambda-\bar\Lambda_{n_1+n_2})(s) D_n(s)\dd h_1(s)\\
			&\quad +\int \delta (\Lambda-\bar\Lambda)(y)h_1(y) e^{\bar\beta^Tz}\dd 	(\p^\pi_{n_i,i}-\p_{n_1+n_2})(\delta,y,x,z)\\
			&\quad+	\int \delta (\bar\Lambda-\bar\Lambda_{n_1+n_2})(y)h_1(y) e^{\bar\beta^Tz}\dd 	(\p^\pi_{n_i,i}-\p_{n_1+n_2})(\delta,y,x,z)
		\end{split} 
		\]
		Again, by Theorem 3.7.2 in \cite{VW96}, it follows that, conditionally on the observations $D_n=o^*_P(1)$. Since $h_1$ is bounded, it follows 
		\[
		\sup_{\Vert\Lambda-\bar\Lambda_{n_1+n_2}\Vert_\infty\leq\epsilon_{n_i}}	\frac{\left|D_n(\bar\tau_0)h_1(\bar\tau_0)(\Lambda-\bar\Lambda_{n_1+n_2})(\bar\tau_0)\right|}{\Vert\Lambda-\bar\Lambda_{n_1+n_2}\Vert_{\infty}}=o^*_P(1).
		\]
		In addition we also have that conditionally on the observations $\sup_{s\in[0,\tau_0]}|D_n(s)|=o^*_P(1)$ and since $h_1$ is of bounded variation 
		\[
  \begin{split}
     &\sup_{\Vert\Lambda-\bar\Lambda_{n_1+n_2}\Vert_\infty\leq\epsilon_{n_i}}	\frac{ \left|\int_0^{\bar\tau_0}(\Lambda-\bar\Lambda_{n_1+n_2})(s) D_n(s)\dd h_1(s)\right|}{\Vert\Lambda-\bar\Lambda_{n_1+n_2}\Vert_{\infty}}\\
     &\quad\leq \sup_{t\in[0,\bar\tau_0]}|D_n(s)| \int_0^{\bar\tau_0}\,\left|\dd h(s)\right|=o_P^*(1). 
  \end{split}
		\] 
		Since $\Vert\Lambda-\bar\Lambda_{n_1+n_2}\Vert_{\infty}\leq \epsilon_{n_i}$ implies $\Vert\Lambda-\bar\Lambda\Vert_{\infty}\leq \tilde\epsilon_{n_i}$ for some $\tilde{\epsilon}_{n_i}\to 0$, the class  $\{g_{\Lambda}(y,\delta,z)=\delta(\Lambda-\bar\Lambda)(y)h_1(y)e^{\bar\beta^Tz}:\,\Vert\Lambda-\bar\Lambda\Vert_{\infty}\leq \tilde\epsilon_{n_i}\}$ is a Donsker class (product of bounded variation functions, uniformly bounded) and 
		\[
		\E_{\bar\p}\left[\Delta(\Lambda-\bar\Lambda)(Y)^2h_1(Y)^2 e^{2\bar\beta^TZ}\right]=O(\tilde\epsilon_n^2)=o(1)
		,
		\] 
		we have that, conditionally on the data,
		\[
		\sup_{\Vert\Lambda-\bar\Lambda_{n_1+n_2}\Vert_\infty\leq\epsilon_{n_i}}	\sqrt{n}\int \delta (\Lambda-\bar\Lambda)(y)h_1(y) e^{\bar\beta^Tz}\dd 	(\p^\pi_{n_i,i}-\p_{n_1+n_2})(\delta,y,x,z)=o^*_P(1).
		\]
		Finally, since 	$\Vert\bar\Lambda_{n_1+n_2}-\bar\Lambda\Vert_\infty\to 0$ a.s., by Proposition A.5.3 in \cite{VW96} it follows that, conditionally on the data, 
		\[
		\int \delta (\bar\Lambda-\bar\Lambda_{n_1+n_2})(y)h_1(y) e^{\bar\beta^Tz}\dd 	(\p^\pi_{n_i,i}-\p_{n_1+n_2})(\delta,y,x,z)=o^*_P(1).
		\]
		Combining all the results we obtain that
		\[
		\sup_{\substack{\Vert\Lambda-\bar\Lambda_{n_1+n_2}\Vert_\infty\leq\epsilon_{n_i},\\ \,\Vert\beta-\bar\beta_{n_1+n_2}\Vert\leq\epsilon_{n_i},\\ \Vert\gamma-\bar\gamma_{n_1+n_2}\Vert\leq\epsilon_{n_i}}}\frac{\left|(\p^\pi_{n_i,i}-\p_{n_1+n_2})a_{3,h}\right|}{n_i^{-1/2}\vee \Vert\beta-\bar\beta_{n_1+n_2}\Vert\vee\Vert\gamma-\bar\gamma_{n_1+n_2}\Vert\vee \Vert\Lambda-\bar\Lambda_{n_1+n_2}\Vert_\infty }%=o^*_P(1).
		\]
  converges to zero in probability, given the data.
		The terms related to the other $a_j$ can be treated similarly.

		b)		Next we check condition~\eqref{eq:cond_weak_conv}. From \eqref{eqn:cond1} it follows in particular that, conditionally on the data, 
		\[
		\sqrt{n_i}(\p^\pi_{n_i,i}-\p_{n_1+n_2})\psi_{(\bar\Lambda,\bar\theta),h}-\sqrt{n_i}(\p^\pi_{n_i,i}-\p_{n_1+n_2})\psi_{(\bar\Lambda_{n_1+n_2},\bar\theta_{n_1+n_2}),h}=o_P^*(1)
		\]	
		almost surely. 
		Hence, it is sufficient to show that 
		\begin{equation}
			\label{eqn:cond2}
			\sqrt{n_i}(\p^\pi_{n_i,i}-\p_{n_1+n_2})\psi_{(\bar\Lambda,\bar\theta),h}\rightsquigarrow Z_1
		\end{equation}			
		on $l^\infty(\mathcal{H}_m)$ in outer probability, where $Z_1$ is a tight random element (actually  a Gaussian process). This follows from Theorem 3.7.1. in \cite{VW96} since the class of functions $\{\psi_{(\bar\Lambda,\bar\theta),h}: h\in\mathcal{H}_m\}$ is Donsker and bounded. This is already shown in step 1 of the proof of Theorem 3 in \cite{Lu2008} (the class of functions is the same, just evaluated at a different point $(\bar\Lambda,\bar\theta)$).
		
		c)			Since the functions $\psi_{(\Lambda,\theta),h}$, $ h\in\mathcal{H}_m$ are the same as in \cite{Lu2008}, it can be proved in the same way that $\bar{\p}{\psi_{(\bar\Lambda,\bar\theta),h}}$ is Fr\'echet-differentiable at $(\bar\Lambda,\bar\theta)$ and the derivative is given by 
		\[
		(\bar{\p}{\dot{\psi}_{(\bar\Lambda,\bar\theta)}})((\Lambda,\theta)-(\bar\Lambda,\bar\theta))(h)=\int_{0}^{\bar\tau_{0}}\bar\sigma_{(1)}(h)\,\dd(\Lambda-\bar\Lambda)(t)+(\theta-\bar\theta)^T\bar\sigma_{(2)}(h)
		\]
		where $\bar{\sigma}_{(1)}$, $\bar\sigma_{(2)}$ are defined as in \eqref{eqn:sigma_1}-\eqref{eqn:sigma_2} respectively, with $\E_{\p_i}$ replaced by $\E_{\bar{\p}}$ and evaluated at $(\bar{\Lambda},\bar\theta)$ instead of $(\Lambda_i,\theta_i)$. Also the proof that the derivative is continuously invertible remains the same as in \cite{Lu2008}. 
		
		d)		For condition \eqref{eq:approx_3}, consider a sequence $(\Lambda_{n_1+n_2},\theta_{n_1+n_2})$ converging to $(\bar\Lambda,\bar\theta)$ as $n_1+n_2\to\infty$. We have
		\[
		\begin{split}
			&(\mathbb{P}_{n_1+n_2} \psi_{\theta_{n_1+n_2},h} - \mathbb{P}_{n_1+n_2} \psi_{\bar\theta,h}) - (\bar{\mathbb{P}}\psi_{\theta_{n_1+n_2},h} - \bar{\mathbb{P}}\psi_{\bar\theta,h})\\
			&=(\mathbb{P}_{n_1+n_2}-\bar{\mathbb{P}})  \psi_{\theta_{n_1+n_2},h} - (\mathbb{P}_{n_1+n_2}-\bar{\mathbb{P}}) \psi_{\bar\theta,h}=\sum_{j=1}^6(\mathbb{P}_{n_1+n_2}-\bar{\mathbb{P}}) \bar{a}_{j,h}
		\end{split}		
		\]
		where $\bar{a}_{j,h}$ are defined as in \eqref{eqn:a_j} replacing ($\Lambda$,$\theta$) and $(\bar{\Lambda}_{n_1+n_2},\bar\theta_{n_1+n_2})$	by 
		$(\Lambda_{n_1+n_2},\theta_{n_1+n_2})$ and $(\bar\Lambda,\bar\theta)$ respectively. We can deal with this similarly to what we did to show \eqref{eqn:cond1}. The difference is that now 	$(\Lambda_{n_1+n_2},\theta_{n_1+n_2})$ and $(\bar\Lambda,\bar\theta)$ are fixed and we consider the class of functions with respect to $h\in\mathcal{H}_m$.  For example, for the term corresponding to $\bar{a}_{3,h}$  we have
		\begin{equation}
			\label{eqn:bar_a3}
			\begin{split}
				&(\mathbb{P}_{n_1+n_2}-\bar{\mathbb{P}}) \bar{a}_{3,h}	\\
    &=(\beta_{n_1+n_2}-\bar\beta)^T\int z\delta e^{\bar\beta^Tz}\int_0^{y}h_1(s)\dd\bar\Lambda(s)\dd 		(\mathbb{P}_{n_1+n_2}-\bar{\mathbb{P}})(\delta,y,x,z)\\
				&\quad+\int \delta e^{\bar\beta^Tz}\int_0^{y}h_1(s)\dd(\Lambda_{n_1+n_2}-\bar\Lambda)(s)\dd 	(\mathbb{P}_{n_1+n_2}-\bar{\mathbb{P}})(\delta,y,x,z)\\
				&\quad+o^*_{P}\left(\int \delta e^{\bar\beta^Tz}\int_0^{y}h_1(s)\dd(\Lambda_{n_1+n_2}-\bar\Lambda)(s)\dd 	(\mathbb{P}_{n_1+n_2}-\bar{\mathbb{P}})(\delta,y,x,z)\right)\\
				&\quad+o^*_P(\Vert\beta_{n_1+n_2}-\bar\beta\Vert).\\
			\end{split}
		\end{equation}
		The integral in the first term converges to zero since the class of functions that we are integrating is Donsker and uniformly bounded.
		The second term can be rewritten as 
		\[
		\int_0^{\bar\tau_0} \bar{D}_{n_1+n_2}(s)h_1(s)\dd(\Lambda_{n_1+n_2}-\bar\Lambda)(s)
		\]
		where
		\[
		\bar{D}_{n_1+n_2}(s)=\int\delta \1_{\{y>s\}}e^{\bar\beta^Tz} \dd(\mathbb{P}_{n_1+n_2}-\bar{\mathbb{P}})(\delta,y,x,z). 
		\]
		By integration by parts and the chain rule we again have 
		\[
		\begin{split}
			&\int_0^{\bar\tau_0} \bar{D}_{n_1+n_2}(s)h_1(s)\dd(\Lambda_{n_1+n_2}-\bar\Lambda)(s)\\
			&=\bar{D}_{n_1+n_2}(\bar\tau_0)h_1(\bar\tau_0)(\Lambda_{n_1+n_2}-\bar\Lambda)(\bar\tau_0)-\int_0^{\bar\tau_0} (\Lambda_{n_1+n_2}-\bar\Lambda)(s) \bar{D}_{n_1+n_2}(s)\dd h_1(s)\\
			&\quad +\int \delta (\Lambda_{n_1+n_2}-\bar\Lambda)(y)h_1(y) e^{\bar\beta^Tz}\dd 	(\mathbb{P}_{n_1+n_2}-\bar{\mathbb{P}})(\delta,y,x,z).
		\end{split} 
		\]
		By the Glivenko-Cantelli theorem $\sup_{s\in[0,\bar\tau_0]}\bar{D}_{n_1+n_2}(s)=o_P(1)$. Since $h_1\in\mathcal{H}_m $ are uniformly bounded, it follows that
		\[
		\sup_{h_1\in\mathcal{H}_m}	\frac{\left|\bar{D}_{n_1+n_2}(\bar\tau_0)h_1(\bar\tau_0)(\Lambda_{n_1+n_2}-\bar\Lambda)(\bar\tau_0)\right|}{\Vert\Lambda_{n_1+n_2}-\bar\Lambda\Vert_{\infty}}=o_P(1).
		\]
		In addition, since $h_1$ are functions of bounded variation and uniformly bounded norm, 
		\[
		\begin{split}
&\sup_{h_1\in\mathcal{H}_m}\frac{ \left|\int_0^{\bar\tau_0} (\Lambda_{n_1+n_2}-\bar\Lambda)(s) \bar{D}_{n_1+n_2}(s)\dd h_1(s)\right|}{\Vert\Lambda_{n_1+n_2}-\bar\Lambda\Vert_{\infty}}\\
&\leq \sup_{s\in[0,\bar\tau_0]}|\bar{D}_{n_1+n_2}(s)| \int_0^{\bar\tau_0}\,\left|\dd h_1(s)\right|=o_P(1).	    
		\end{split}
		\] 
		Finally, from Theorem 2.11.23 in \cite{VW96} it follows that 	\[
		\sup_{h_1\in\mathcal{H}_m}\frac{\sqrt{n_1+n_2}\int \delta (\Lambda_{n_1+n_2}-\bar\Lambda)(y)h_1(y) e^{\bar\beta^Tz}\dd 	(\mathbb{P}_{n_1+n_2}-\bar{\mathbb{P}})(\delta,y,x,z)}{\Vert\Lambda_{n_1+n_2}-\bar\Lambda\Vert_{\infty}}%=O_P(1).
		\]
  is bounded in probability.
		Combining all the results we obtain 
		\[
		\sup_{h_1\in\mathcal{H}_m}{\left\|(\mathbb{P}_{n_1+n_2}-\bar{\mathbb{P}})a_{3,h}\right\|}=o_P\left( \Vert\theta_{n_1+n_2}-\bar\theta\Vert\vee \Vert\Lambda_{n_1+n_2}-\bar\Lambda\Vert_\infty \right).
		\]
		The other terms can be handled similarly obtaining
		\[
		\begin{split}
			&	\Vert(\mathbb{P}_{n_1+n_2} \psi_{\theta_{n_1+n_2},h} - \mathbb{P}_{n_1+n_2} \psi_{\bar\theta,h}) - (\bar{\mathbb{P}}\psi_{\theta_{n_1+n_2},h} - \bar{\mathbb{P}}\psi_{\bar\theta,h})\Vert\\
			&=o_P\left({ \Vert\theta_{n_1+n_2}-\bar\theta\Vert\vee \Vert\Lambda_{n_1+n_2}-\bar\Lambda\Vert_\infty }\right).	
		\end{split}
		\]
\end{proof}
	\begin{proof}[Proof of Theorem \ref{thm:perm_cov}]
		We proceed similarly to the proof of Theorem~\ref{theo:MST_cov}. From Lemma~\ref{lemma:convergence_permutation}, it follows that the process
  \[
		\langle n_i^{1/2}(\hat{\Lambda}^\pi_{n_i,i}-\bar{\Lambda}_{n_1+n_2}),n_i^{1/2}(\hat{\theta}^\pi_{n_i,i}-\bar{\theta}_{n_1+n_2})\rangle_{i=1,2}(h)
		\]
		converges to a Gaussian process $G^*=(G_1^*,G_2^*)$
  with $G_2^*=-\sqrt{\kappa/(1-\kappa)}G_1^*$. We start by deriving the weak convergence of the process $\sqrt{a_n}(\hat{S}^\pi_{n_1,1}-\bar{S}_{n_1+n_2},\hat{S}^\pi_{n_2,2}-\bar{S}_{n_1+n_2})$, where $a_n=\sqrt{n_1n_2/(n_1+n_2)}$.
		By a series of Taylor expansions we can write
		\[
		\begin{split}
			&\sqrt{a_n}\{\hat{S}^\pi_{n_1,1}(t|z)-\bar{S}_{n_1+n_2}(t|z)\}\\
			&=\sqrt{a_n}\left\{\exp\left(-\hat{\Lambda}^\pi_{n_1,1}(t)e^{\hat\beta^{\pi^T}_{n_1,1}z}\right)-\exp\left(-\bar{\Lambda}_{n_1+n_2}(t)e^{\bar\beta^T_{n_1+n_2}z}\right)\right\}\\
			&=-\sqrt{a_n}\left\{\hat{\Lambda}^\pi_{n_1,1}(t)e^{\hat\beta^{\pi^T}_{n_1,1}z}-\bar{\Lambda}_{n_1+n_2}(t)e^{\bar\beta^T_{n_1+n_2}z}\right\}\exp\left(-\bar{\Lambda}_{n_1+n_2}(t)e^{\bar\beta^T_{n_1+n_2}z}\right)+R_1\\
			&=-\sqrt{a_n}\left\{\hat{\Lambda}^\pi_{n_1,1}(t)-\bar{\Lambda}_{n_1+n_2}(t)\right\}e^{\bar\beta^T_{n_1+n_2}z}\exp\left(-\bar{\Lambda}_{n_1+n_2}(t)e^{\bar\beta^T_{n_1+n_2}z}\right)\\
			&\quad- \sqrt{a_n}\left\{e^{\hat\beta^{\pi^T}_{n_1,1}z}-e^{\bar\beta^T_{n_1+n_2}z}\right\}\bar\Lambda_{n_1+n_2}(t)\exp\left(-\bar{\Lambda}_{n_1+n_2}(t)e^{\bar\beta^T_{n_1+n_2}z}\right)+R_1+R_2\\
			&=-\sqrt{a_n}\left\{\hat{\Lambda}^\pi_{n_1,1}(t)-\bar{\Lambda}_{n_1+n_2}(t)\right\}e^{\bar\beta^Tz}\exp\left(-\bar{\Lambda}(t)e^{\bar\beta^Tz}\right)\\
			&\quad- \sqrt{a_n}\left\{{\hat\beta^\pi_{n_1,1}}-{\bar\beta_{n_1+n_2}}\right\}^Tze^{\bar\beta^Tz}\bar\Lambda(t)\exp\left(-{\bar\Lambda}(t)e^{\bar\beta^Tz}\right)+R_1+R_2+R_3\\
		\end{split}
		\]
		where the remainder terms converge to zero in probability. Considering functions $h\in\mathcal{H}_m$ of the form
		\[
  \begin{split}
      %	\label{eqn:h}
		h_{t,z}&=(h_{1;t,z},h_{2;t,z})\\
  &=\left(\1_{[0,t]}(\cdot)e^{\bar\beta^Tz} \exp\left(-\bar{\Lambda}(t)e^{\bar\beta^Tz}\right),\left(\mathbf{0}_p,ze^{\bar\beta^Tz}\bar\Lambda(t)\exp\left(-\bar{\Lambda}(t)e^{\bar\beta^Tz}\right)\right)\right),
  \end{split}
		\] 
		where $\mathbf{0}_p$ denotes a zero vector in $\R^p$, we have
		\[
  \begin{split}
     &\sqrt{a_n}\{\hat{S}^\pi_{n_1,1}(t|z)-\bar{S}_{n_1+n_2}(t|z)\}\\
     &=\sqrt{1-\kappa}\langle n_1^{1/2}(\hat{\Lambda}^\pi_{n_1,1}-\bar{\Lambda}_{n_1+n_2}),n_1^{1/2}(\hat{\theta}^\pi_{n_1,1}-\bar{\theta}_{n_1+n_2})\rangle(h_{t,z})+o_{P^*}(1) 
  \end{split}
			\]
		from which we conclude that, given the data, the process $\sqrt{a_n}\{\hat{S}^\pi_{n_1,1}(t|z)-\bar{S}_{n_1+n_2}(t|z)\}$ converges weakly in $D^\infty([0,\bar\tau_{0}])$ to a mean zero Gaussian process $\bar{G}_{1,z}$ with covariance 
		\begin{equation}
			\label{eqn:rho2}
			\begin{split}
				\rho_{1,z}(s,t)&=Cov(\bar{G}_{1,z}(s),\bar{G}_{1,z}(t))\\
				&=(1-\kappa)\left\{\int_0^{\tau_{0,1}}h_{1;s,z}(u)\bar\sigma_{(1)}^{-1}(h_{t,z})(u)\dd\bar\Lambda(u)+h^T_{2;s,z}\bar\sigma_{(2)}^{-1}(h_{t,z})\right\},
			\end{split}
		\end{equation}
  where $\bar{\sigma}_{(1)}$, $\bar\sigma_{(2)}$ are defined as in \eqref{eqn:sigma_1}-\eqref{eqn:sigma_2} respectively, with $\E_{\p_i}$ replaced by $\E_{\bar{\p}}$ and evaluated at $(\bar{\Lambda},\bar\theta)$ instead of $(\Lambda_i,\theta_i)$
		Defining $\overline{MST}_{u,z}$ as the conditional expected lifetime of the uncured in the pooled sample, we have
		\begin{equation}
			\label{eqn:hat_E-E2}
			\begin{split}
			&	\sqrt{a_n}(\widehat{MST}^\pi_{u,1,z}-\overline{MST}_{u,z})\\
   &=\sqrt{a_n}\int_0^{\bar\tau_{0}}\{\hat{S}^\pi_{n_1,1}(t|z)-\bar{S}_{n_1+n_2}(t|z)\}\dd t\\
				&\quad-\sqrt{a_n}\{\bar{\tau}_{0}-Y^\pi_{1,(m_1)}\}\hat{S}^\pi_{n_1,1}(\bar\tau_{0})-\sqrt{a_n}\{\bar\tau_{0}-{Y}_{(m)}\}\bar{S}_{n_1+n_2}(\bar\tau_{0}),
			\end{split}
 		\end{equation}
  where $ Y^\pi_{1,(m_1)}$ and ${Y}_{(m)}$ denote the largest uncensored observation in the first permuted sample and the pooled sample respectively.
		As in the proof of Theorem~\ref{theo:MST_cov}, the second term and third term in the right hand side of the previous equation can be shown to converge to zero in probability. Considering the map 
		\[
		\psi:{D}[0,\bar\tau_{0}]\to \R\qquad \psi(\xi)=\int_0^{\bar\tau_{0}}\xi(u)\,\dd u
		\] 
		which is Hadamard-differentiable, it follows that, given the data,  $\sqrt{a_n}(\widehat{MST}^\pi_{u,1,z}-\overline{MST}_{u,z},\widehat{MST}^\pi_{u,2,z}-\overline{MST}_{u,z})$ converges weakly to a two dimensional Gaussian random vector
		\[
		(N_1,N_2)=\left(\int_0^{\bar\tau_0}\bar{G}_{1,z}(u)\,\dd u,\int_0^{\bar\tau_0}\bar{G}_{2,z}(u)\,\dd u\right)=(N_1,-\frac{\kappa}{1-\kappa}N_1).
		\]
		Taking the difference of both entries of the pair, we conclude that, given the data, $\sqrt{a_n}(\widehat{MST}^\pi_{u,1,z}-\widehat{MST}^\pi_{u,2,z})$ converges weakly (in probability) to a mean-zero Gaussian random variable with variance 
		\begin{equation}
			\label{eqn:sigma_cov2}
			\sigma^{\pi2}_{z}=\frac{1}{(1-\kappa)^2}\int_0^{\bar\tau_{0}}\int_0^{\bar\tau_{0}}\rho_{1,z}(s,t)\,\dd s\, \dd t,
		\end{equation}
		where $\rho_{1,z}$ is defined in \eqref{eqn:rho2}.
		This concludes the proof.
	\end{proof}
	
	\renewcommand{\P}{\mathbb{P}}
	\newcommand{\bP}{\bar{\mathbb{P}}}
	\section{Permutation of $Z$-estimators in a two-sample set-up}
	\label{sec:app_Z-est}
	In this appendix, we discuss the asymptotic properties of randomly permuted $Z$-estimators.
	For this, we consider a two independent samples set-up with $n_1$ and $n_2$ i.i.d.\ random vectors $W_{11}, \dots, W_{1n_1} \sim \P_1$ and $W_{21}, \dots, W_{2n_2} \sim \P_2$, respectively.
	Let $\Theta$ be a subset of a Banach space, 
	$\Psi_{n_1,1}, \Psi_{n_2,2} : \Theta \to \mathbb{L}$ be random maps, and $\Psi: \Theta \to \mathbb{L}$ be a deterministic map.
	Solutions (or approximate solutions) $\hat \theta_{n_i,i}$ to the equations $\Psi_{n_i,i}(\theta)\stackrel!=0$ will be called \emph{Z-estimators}.
	Due to the i.i.d.\ set-up, we assume the structure
	$ \Psi_{n_i,i}(\theta)h = \mathbb{P}_{n_i,i} \psi_{\theta,h} $, for given measurable functions $\psi_{\theta,h}$ indexed by $\Theta$ and $h \in \mathcal{H}$ for some index set $\mathcal{H}$, where $\mathbb{P}_{n_i,i}$ denotes the $i$-th empirical process.
	Thus, we understand the equation system in the space $\mathbb{L} = \ell^\infty(\mathcal H) $.

	For the random permutation approach, we randomly re-assign the $n_1+n_2$ observations of the pooled sample $(W_{11},\dots, W_{1n_1}, W_{21}, \dots, W_{2 n_2})=:(W_1, \dots, W_{n_1+n_2})$ to the groups 1 and 2 without changing the original sample sizes.
	For a random permutation $\pi$ of the numbers $1, \dots, n_1+n_2$, the permuted samples can be expressed as 
	$W_{\pi(1)}, \dots, W_{\pi(n_1)}$ and 
	$W_{\pi(n_1+1)}, \dots, W_{\pi(n_1+n_2)}$.
	For notational convenience, we denote the permuted samples by 
	$W^\pi_{i1}, \dots, W^\pi_{in_i}$, for sample group $i=1,2$,
	and the corresponding $i$-th permutation empirical process by $\mathbb{P}_{n_i,i}^\pi$.
	Let $\Psi_{n_i,i}^\pi(\theta)h = \mathbb{P}_{n_i,i}^\pi \psi_{\theta,h}  \stackrel!=0$ for all $h\in \mathcal{H}$ be the estimating equation corresponding to $\Psi_{n_i,i}(\theta)\stackrel!=0$, just based on the $i$-th permuted sample.
	We denote the (approximate) solution to the $i$-th permuted estimating equation by $\theta^\pi_{n_i,i}$.
	For future uses, let $\mathbb{G}^\pi_{n_i,i} = \sqrt{n_i}(\mathbb{P}_{n_i,i}^\pi - \mathbb{P}_{n_1+n_2})$ be the $i$-th normalized permutation empirical process,
	where $\mathbb{P}_{n_1+n_2}$ denotes the empirical process of the pooled sample.
	The centering at $\mathbb{P}_{n_1+n_2}$ seems reasonable, as this has an interpretation as a conditional expectation:
	$$ E(\mathbb{P}_{n_i,i}^\pi \psi_{\theta,h} \ | \ W_{ij}: i=1,2; j=1,\dots, n_i) 
	= \mathbb{P}_{n_1+n_2} \psi_{\theta,h} . $$
	Let $\bar{\theta}_{n_1+n_2}$ be the (approximate) solution to $\mathbb{P}_{n_1+n_2} \psi_{\theta,h}\stackrel!=0$ for all $h \in \mathcal H$.
	
	The following theorem represents a version of Theorem~3.3.1 of \cite{VW96} for the random permutation-based estimators.
	\begin{thm}
		\label{thm:perm_Z_est}
		Assume that $\frac{n_1}{n_1+n_2} \to \lambda \in (0,1)$ as $n_1+n_2\to\infty$, define $\bar\P = \lambda \P_1 + (1-\lambda) \P_2$, and assume that Theorem~3.3.1 holds for each sample-specific $Z$-estimator $\hat \theta_{n_1,1}$ and $\hat \theta_{n_2,2}$.
		Let the criterion functions $\psi_{\cdot, h}$ be such that
		\begin{align}
			\begin{split}
				\label{eq:approx_1}
				\| \mathbb{G}_{n_i,i}^\pi (\psi_{\theta_{n_i,i}^\pi, h} - \psi_{\bar \theta_{n_1+n_2}, h}) \|_{\mathcal{H}} = o_{P}^*(1 + \sqrt{n_i} \| \theta^\pi_{n_i,i} - \bar{\theta}_{n_1+n_2} \|) .
			\end{split}
		\end{align}
		Conditionally on $W_{11}, W_{21}, W_{12}, W_{22}, \dots$, assume that 
		\begin{align}
			\label{eq:cond_weak_conv}
			(\sqrt{n_i}(\P^\pi_{n_i,i} - \P_{n_1+n_2}) \psi_{\bar{\theta}_{n_1+n_2},h})_{i=1}^2 \rightsquigarrow (Z_1, Z_2)
		\end{align}
		on $(\ell^\infty(\mathcal{H}))^2$ in outer probability, where $(Z_1, Z_2)$ is a tight random element.

		We assume that $\theta \mapsto \bP\psi_{\theta, h}$ is Fr\'echet-differentiable in $\ell^\infty(\mathcal{H})$ at $\bar \theta$ with a continuously invertible derivative $\bP\dot\psi_{\bar\theta, h}$, and that, for any sequence $(\theta_{n_1+n_2})_{n_1,n_2}$ converging to $\bar \theta$,
		\begin{align}
			\label{eq:approx_3}
			\| (\mathbb{P}_{n_1+n_2} \psi_{\theta_{n_1+n_2},h} - \mathbb{P}_{n_1+n_2} \psi_{\bar\theta,h}) - (\bar{\mathbb{P}}\psi_{\theta_{n_1+n_2},h} - \bar{\mathbb{P}}\psi_{\bar\theta,h}) \|_{\mathcal{H}} = o_P^*(\|\theta_{n_1+n_2} - \bar\theta\|) 
		\end{align}
		as $n_1+n_2 \to \infty$.

		If $\theta^\pi_{n_i,i}$ and $\bar{\theta}_{n_1+n_2}$  satisfy 
		$\| \P^\pi_{n_i,i} \psi_{\theta^\pi_{n_i,i},h} \|_{\mathcal{H}}= o_P^*(n^{-1/2})$, $i=1,2$,
		and\\
  \noindent
		$\| \P_{n_1+n_2} \psi_{\bar\theta_{n_1+n_2},h} \|_{\mathcal{H}}= o_P^*(n^{-1/2})$,
		respectively,
		and if all three estimators converge in outer probability to $\bar \theta$,
		then
		\begin{align}
			\label{eq:result1}
			\sqrt{n_i} (\bP\dot\psi_{\bar\theta, h}) (\theta_{n_i,i}^\pi - \bar{\theta}_{n_1+n_2}) = - \sqrt{n_i}(\P_{n_i,i}^\pi - \P_{n_1+n_2}) \psi_{\bar \theta_{n_1+n_2},h} + o_P^*(1) \rightsquigarrow -(Z_1,Z_2)
		\end{align}
		as $n_1+n_2 \to \infty$
		conditionally on $W_{11}, W_{21}, W_{12}, W_{22}, \dots$, in outer probability.
		Finally, $( \sqrt{n_1} (\theta^\pi_{n_1,1} - \bar{\theta}_{n_1+n_2}) , \sqrt{n_2} (\theta^\pi_{n_2,2} - \bar{\theta}_{n_1+n_2}) ) \rightsquigarrow -((\bP\dot\psi_{\bar\theta, h})^{-1} Z_1, (\bP\dot\psi_{\bar\theta, h})^{-1} Z_2) $ conditionally on $W_{11}, W_{21}, W_{12}, W_{22}, \dots$, in outer probability.
	\end{thm}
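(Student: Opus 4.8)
The plan is to establish the master linearization \eqref{eq:result1} by adapting the proof of Theorem~3.3.1 in \cite{VW96} to the random-permutation setting, working throughout conditionally on the pooled data (which, as recalled in the proof of Lemma~\ref{lemma:consistency_permutation}, may be identified with unconditional statements in outer probability). The central difficulty is the \emph{double centering}: the normalized permutation empirical process $\mathbb{G}_{n_i,i}^\pi$ is naturally centered at the random pooled estimator $\bar\theta_{n_1+n_2}$, whereas the Fréchet expansion of the limit map is only available at the fixed minimizer $\bar\theta$. So the bookkeeping must ensure that every contribution of the form $\sqrt{n_i}\|\bar\theta_{n_1+n_2}-\bar\theta\|$ is controlled.

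First I would fix $i\in\{1,2\}$ and start from the approximate score identity $\|\mathbb{P}_{n_i,i}^\pi\psi_{\theta_{n_i,i}^\pi,h}\|_{\mathcal{H}}=o_P^*(n^{-1/2})$. Writing $\mathbb{P}_{n_i,i}^\pi=\mathbb{P}_{n_1+n_2}+n_i^{-1/2}\mathbb{G}_{n_i,i}^\pi$, I split $\sqrt{n_i}\mathbb{P}_{n_i,i}^\pi\psi_{\theta_{n_i,i}^\pi,h}$ into (a) $\mathbb{G}_{n_i,i}^\pi\psi_{\bar\theta_{n_1+n_2},h}$, (b) the increment $\mathbb{G}_{n_i,i}^\pi(\psi_{\theta_{n_i,i}^\pi,h}-\psi_{\bar\theta_{n_1+n_2},h})$, and (c) $\sqrt{n_i}\mathbb{P}_{n_1+n_2}\psi_{\theta_{n_i,i}^\pi,h}$. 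Term (b) is $o_P^*(1+\sqrt{n_i}\|\theta_{n_i,i}^\pi-\bar\theta_{n_1+n_2}\|)$ by the stochastic-equicontinuity hypothesis \eqref{eq:approx_1}. For term (c), I subtract and add $\mathbb{P}_{n_1+n_2}\psi_{\bar\theta_{n_1+n_2},h}$ (which is $o_P^*(n^{-1/2})$ since $\bar\theta_{n_1+n_2}$ is an approximate pooled zero) and then pass to the deterministic limit map via \eqref{eq:approx_3}: applied to the two consistent sequences $\theta_{n_i,i}^\pi\to\bar\theta$ and $\bar\theta_{n_1+n_2}\to\bar\theta$ (Lemmas~\ref{lemma:consistency_pooled}--\ref{lemma:consistency_permutation}), it shows that $\sqrt{n_i}$ times the increment of $\mathbb{P}_{n_1+n_2}-\bar{\mathbb{P}}$ between these two points is $o_P^*(1)$, because $\sqrt{n_i}\|\bar\theta_{n_1+n_2}-\bar\theta\|=O_P(1)$ and $\sqrt{n_i}\|\theta_{n_i,i}^\pi-\bar\theta\|=O_P(1)$.

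Next I would invoke Fréchet differentiability of $\theta\mapsto\bar{\mathbb{P}}\psi_{\theta,h}$ at $\bar\theta$, using $\bar{\mathbb{P}}\psi_{\bar\theta,h}=0$, to write $\bar{\mathbb{P}}\psi_{\theta_{n_i,i}^\pi,h}-\bar{\mathbb{P}}\psi_{\bar\theta_{n_1+n_2},h}=(\bar{\mathbb{P}}\dot\psi_{\bar\theta,h})(\theta_{n_i,i}^\pi-\bar\theta_{n_1+n_2})+o(\|\theta_{n_i,i}^\pi-\bar\theta\|)+o(\|\bar\theta_{n_1+n_2}-\bar\theta\|)$, where the two remainders, multiplied by $\sqrt{n_i}$, vanish in probability for the same rate reasons. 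Collecting (a)--(c) gives $\mathbb{G}_{n_i,i}^\pi\psi_{\bar\theta_{n_1+n_2},h}+\sqrt{n_i}(\bar{\mathbb{P}}\dot\psi_{\bar\theta,h})(\theta_{n_i,i}^\pi-\bar\theta_{n_1+n_2})=o_P^*(1+\sqrt{n_i}\|\theta_{n_i,i}^\pi-\bar\theta_{n_1+n_2}\|)$. Since $\bar{\mathbb{P}}\dot\psi_{\bar\theta,h}$ is continuously invertible and $\mathbb{G}_{n_i,i}^\pi\psi_{\bar\theta_{n_1+n_2},h}=O_P^*(1)$ by the conditional tightness in \eqref{eq:cond_weak_conv}, the usual rearrangement (bounding the norm of the leading linear term below by a multiple of $\sqrt{n_i}\|\theta_{n_i,i}^\pi-\bar\theta_{n_1+n_2}\|$) forces $\sqrt{n_i}\|\theta_{n_i,i}^\pi-\bar\theta_{n_1+n_2}\|=O_P^*(1)$; feeding this back makes the error $o_P^*(1)$ and yields precisely \eqref{eq:result1}.

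Finally, the joint conditional weak limit follows from the assumed convergence \eqref{eq:cond_weak_conv}, which identifies the limit of $(\mathbb{G}_{n_1,1}^\pi\psi_{\bar\theta_{n_1+n_2},h},\mathbb{G}_{n_2,2}^\pi\psi_{\bar\theta_{n_1+n_2},h})$ as $(Z_1,Z_2)$; the linearization then gives $\sqrt{n_i}(\bar{\mathbb{P}}\dot\psi_{\bar\theta,h})(\theta_{n_i,i}^\pi-\bar\theta_{n_1+n_2})\rightsquigarrow-(Z_1,Z_2)$, and applying the bounded linear inverse $(\bar{\mathbb{P}}\dot\psi_{\bar\theta,h})^{-1}$ through the continuous mapping theorem for conditional weak convergence produces the stated limit $-((\bar{\mathbb{P}}\dot\psi_{\bar\theta,h})^{-1}Z_1,(\bar{\mathbb{P}}\dot\psi_{\bar\theta,h})^{-1}Z_2)$. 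I expect the hard part to be the rate control in the double-centering step: justifying that the $\sqrt{n_i}\|\bar\theta_{n_1+n_2}-\bar\theta\|$ contributions are $O_P(1)$ (which uses $\sqrt{n_1+n_2}$-consistency of the pooled estimator, itself obtainable from Theorem~3.3.1 of \cite{VW96} applied to the i.i.d.\ pooled sample under $\bar{\mathbb{P}}$) and that conditions \eqref{eq:approx_1} and \eqref{eq:approx_3} may legitimately be evaluated at the random sequences via a subsequence argument, all while keeping the entire derivation within the conditional (given the data) framework in outer probability.
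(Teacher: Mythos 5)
Your proposal is correct and follows essentially the same route as the paper's proof: the same decomposition via the approximate score equations and the permutation empirical process centered at $\mathbb{P}_{n_1+n_2}$, condition \eqref{eq:approx_1} for the increment term, \eqref{eq:approx_3} to pass to $\bar{\mathbb{P}}$, Fr\'echet differentiability plus continuous invertibility to force $\sqrt{n_i}$-consistency and obtain the linearization, and the continuous mapping theorem for the final claim. Your explicit attention to the double-centering issue (controlling $\sqrt{n_i}\|\bar\theta_{n_1+n_2}-\bar\theta\|$ via $\sqrt{n}$-consistency of the pooled estimator) is a point the paper's write-up handles only implicitly, so that care is welcome but does not change the argument.
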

	Note that, due to the equality
	$ \P^\pi_{n_2,2} - \P_{n_1+n_2} = -  \frac{n_1}{n_2} (\P^\pi_{n_1,1} - \P_{n_1+n_2}) $, $Z_1$ and $Z_2$ are perfectly negatively linearly correlated:
	$Z_2 = - \sqrt{\frac{\lambda}{1-\lambda}} Z_1$.
	\begin{proof}
		The essential steps of this proof are similar to those in the proof of Theorem~3.3.1 of \cite{VW96}.
		But for the sake of completeness, we shall present the whole proof.

		The assumed consistencies of the pooled and the permuted estimators and then assumption~\eqref{eq:approx_1} entail that
		\begin{align}
			\begin{split}
				\label{eq:331_333_perm}
				& \sqrt{n_i}(\P_{n_1+n_2} \psi_{\theta^\pi_{n_i,i},h} - \P_{n_1+n_2} \psi_{\bar \theta_{n_1+n_2},h}) \\
				& = \sqrt{n_i}(\P_{n_1+n_2} \psi_{\theta^\pi_{n_i,i},h} - \P_{n_i,i}^\pi \psi_{\theta_{n_i,i}^\pi,h}) + o_P^*(1) \\		
    & = - \mathbb{G}_{n_i,i}^\pi ( \psi_{\theta^\pi_{n_i,i},h} - \psi_{\theta_{n_i,i}^\pi,h}) + o_P^*(1) \\
    & = - \sqrt{n_i}(\P_{n_i,i}^\pi - \P_{n_1+n_2}) \psi_{\bar \theta_{n_1+n_2},h} + o_P^*(1 + \sqrt{n_i}\| \theta_{n_i,i}^\pi - \bar{\theta}_{n_1+n_2}\|). 
			\end{split}
		\end{align}
		The consistencies of $\theta^\pi_{n_i,i}$ and $\bar \theta_{n_1+n_2}$ for $\bar \theta$ 
		in combination with the 
		approximation~\eqref{eq:approx_3} applied twice
		imply that the norm of the left-hand side of~\eqref{eq:331_333_perm} equals
		\begin{align}
			\label{eq:barP_approx}
			& \sqrt{n_i}( \|\bar{\mathbb{P}}\psi_{\theta^\pi_{n_i,i},h} - \bar{\mathbb{P}}\psi_{\bar \theta_{n_1+n_2},h}  \|_{\mathcal{H}} + o_P^*(\|\theta^\pi_{n_i,i} - \bar\theta_{n_1+n_2}\| ) ).
		\end{align}
		Additionally, the Fr\'echet-differentiability of $\theta \mapsto \bP\psi_{\theta, h}$ at $\bar \theta$ and the continuous invertibility of the derivative $\bP\dot \psi_{\bar\theta, h} $
		respectively imply that 
		\begin{align}
			\label{eq:psi_lin}
			\| \bP\psi_{\theta, h} - \bP\psi_{\bar\theta, h} \|_{\mathcal{H}} = \| (\bP\dot \psi_{\bar\theta, h})(\theta-\bar\theta) \|_{\mathcal{H}} + o(\| \theta-\bar\theta \|)
		\end{align}
		and that the right-hand side in the previous display is bounded below by 
		$$ c \| \theta-\bar\theta \| + o(\| \theta-\bar\theta \|) $$
		for some positive constant $c$.
		Combine this with~\eqref{eq:barP_approx} and~\eqref{eq:331_333_perm} to see that
		\begin{align*}
			\sqrt{n_i} \|\theta^\pi_{n_i,i} - \bar\theta_{n_1+n_2}\| (c + o_P(1)) \leq O_P(1) + o_P(1 + \sqrt{n_i}\| \theta_{n_i,i}^\pi - \bar{\theta}_{n_1+n_2}\|),
		\end{align*}
		conditionally on $W_{11}, W_{21}, W_{12}, W_{22}, \dots$ in probability.
		Thus, in the same manner, $ \theta^\pi_{n_i,i}$ is $\sqrt{n_i}$-consistent for $\bar\theta_{n_1+n_2}$ in norm.

		Next, apply the approximation in~\eqref{eq:approx_3} to the left-hand side of~\eqref{eq:331_333_perm} and use the Fr\'echet-differentiability of  $\theta \mapsto \bP\psi_{\theta, h}$ at $\bar \theta $
		to find that~\eqref{eq:331_333_perm} equals
		$$ \sqrt{n_i} (\bP\dot\psi_{\bar\theta, h}) (\theta_{n_i,i}^\pi - \bar{\theta}_{n_1+n_2})  + o_P^*(\sqrt{n_i}\|\theta_{n_i,i}^\pi - \bar{\theta}_{n_1+n_2}\|). $$
		Since $o_P^*(\sqrt{n_i}\|\theta_{n_i,i}^\pi - \bar{\theta}_{n_1+n_2}\|)$ and also $o_P^*(1+\sqrt{n_i}\|\theta_{n_i,i}^\pi - \bar{\theta}_{n_1+n_2}\|)$ in the right-hand side of~\eqref{eq:331_333_perm} are both $o_P(1)$, 
		the assertion given in~\eqref{eq:result1} follows from assumption~\eqref{eq:cond_weak_conv}.
		
		The continuity of $(\bP\dot\psi_{\bar\theta, h})^{-1}$ together with the continuous mapping theorem and the just established conditional weak convergence~\eqref{eq:result1} in outer probability
		imply the corresponding conditional weak convergence 
		$$( \sqrt{n_1} (\theta^\pi_{n_1,1} - \bar{\theta}_{n_1+n_2}) , \sqrt{n_2} (\theta^\pi_{n_2,2} - \bar{\theta}_{n_1+n_2}) ) \rightsquigarrow -((\bP\dot\psi_{\bar\theta, h})^{-1} Z_1, (\bP\dot\psi_{\bar\theta, h})^{-1} Z_2)  $$ 
		in outer probability.
	\end{proof}

\section{R code for accessing the breast cancer data set}
\label{sec:app_R}

\begin{verbatim}
if (!require("BiocManager", quietly = TRUE))
    install.packages("BiocManager")
BiocManager::install("curatedBreastData")
library(curatedBreastData)

# read the data, study id 2034
data(curatedBreastDataExprSetList)
D=curatedBreastDataExprSetList$study_2034_GPL96_all

# Relapse-free survival
Data=data.frame(Y=D$RFS_months_or_MIN_months_of_RFS,status=1-D$RFS,
    age=D$age,ER=D$ER_preTrt,sizeTum=D$tumor_stage_preTrt)
\end{verbatim}

\end{appendix}

\begin{acks}[Acknowledgments]
The authors would like to thank Sarah Friedrich for useful comments regarding the data set.
\end{acks}
 
	\bibliographystyle{imsart-number}
	\bibliography{literature}

\end{document}